\DeclareMathOperator*{\argmin}{arg\,min}
\DeclareMathOperator*{\argmax}{arg\,max}
\newcommand{\defcal}[1]{\expandafter\newcommand\csname c#1\endcsname{{\mathcal{#1}}}}
\newcommand{\defbb}[1]{\expandafter\newcommand\csname b#1\endcsname{{\mathbb{#1}}}}
\newcounter{calBbCounter}
	\edef\letter{\Alph{calBbCounter}}
\newtheorem{theorem}{Theorem}
\newtheorem{observation}[theorem]{Observation}
\newtheorem{definition}[theorem]{Definition}
\newtheorem{lemma}[theorem]{Lemma}
\newtheorem{corollary}[theorem]{Corollary}
\newcommand{\Opt}{\text{\upshape{\texttt{OPT}}}\xspace}
\newcommand{\ksys}{k}
\newcommand{\maxcardinality}{\rho}
\DeclareMathOperator{\ground}{\cN}
\newcommand{\AlgStream}{{\textsc{\textsc{StreamingAlg}}}\xspace}
\newcommand{\AlgConstrained}{{\textsc{\textsc{ConstrainedAlg}}}\xspace}
\newcommand{\nnR}{{\bR_{\geq 0}}}
\crefname{algocf}{Algorithm}{Algorithms}
\Crefname{algocf}{Algorithm}{Algorithms}
\newcommand{\ceil}[1]{\lceil #1 \rceil}
\renewcommand{\emptyset}{\varnothing}
\begin{document}
\title{Streaming Submodular Maximization under a $k$-Set System Constraint}
\author
{
Ran Haba\thanks{Depart. of Mathematics and Computer Science, The Open University of Israel. Email: \texttt{zook2005@gmail.com}.}
\and
Ehsan Kazemi\thanks{Yale Institute for Network Science, Yale University. Email: \texttt{ehsan.kazemi@yale.edu}.}
\and
Moran Feldman\thanks{Department of Computer Science, University of Haifa, Israel. Email: \texttt{moranfe@cs.haifa.ac.il}.}
\and
Amin Karbasi\thanks{Yale Institute for Network Science, Yale University. Email: \texttt{amin.karbasi@yale.edu}.}
}
\date{}

\maketitle
 
   \begin{abstract}
In this paper, we propose a novel framework that converts  streaming algorithms for  monotone submodular maximization into  streaming algorithms for non-monotone submodular maximization.  This reduction readily leads to the currently tightest deterministic approximation ratio for submodular maximization subject to a $k$-matchoid constraint. Moreover, we propose the first streaming algorithm for monotone submodular maximization subject to $k$-extendible and $k$-set system constraints.  Together with our proposed reduction, we obtain $O(k\log k)$ and $O(k^2\log k)$ approximation ratio for submodular maximization subject to the above constraints, respectively. We extensively evaluate the empirical performance of our algorithm against the existing work in a series of experiments including finding the maximum independent set in randomly generated graphs, maximizing linear functions over social networks, movie recommendation,  Yelp location summarization, and Twitter data summarization. 
 \end{abstract}

 \section{Introduction} \label{sec:introduction}
Submodularity  captures  an intuitive diminishing returns property where the benefit of an item decreases as the context in which it is considered grows. This property naturally occurs in many  applications where items may represent data points, features,  actions, etc. Moreover, submodularity is a sufficient condition that leads to an efficient optimization procedure for many discrete optimization problems. The above reasons have led to a surge of applications in machine learning where the gain of discrete choices shows diminishing returns and the optimization can be handled efficiently. Novel examples include  non-parametric learning~\citep{MKSK16}, dictionary learning \cite{das2011submodular}, crowd teaching \cite{singla2014near}, regression under human assistance \cite{de2019regression}, interpreting neural networks \cite{elenberg2017streaming}, adversarial attacks \cite{lei2019discrete}, data summarization \cite{dasgupta2013summarization,tschiatschek2014learning,elhamifar2017online,kirchhoff2014submodularity,kazemi2018scalable,mitrovic2018data}, fMRI  parcellation \citep{SKSC17}, and DNA sequencing \citep{LBN18}. 

More formally, a set function $f\colon 2^{\cN} \to \nnR$ is called \textbf{submodular}  if for all sets $A \subseteq B \subseteq \cN$ and element $u \notin B$ we have
\[
f(A \cup \{u\}) - f(A) \geq f(B \cup \{u\}) - f(B) \enspace.
\]
Moreover, a set function is called monotone if $f(A)\leq f(B)$ whenever $A\subseteq B$. The focus of this paper is on maximizing a general submodular function (not necessarily monotone). More concretely, we consider a very general form of constrained submodular maximization, i.e., 
\begin{equation}\label{eq:main-problem}
\Opt =  \argmax_{A\in \cI} f(A) \enspace,
\end{equation}
where $\cI$ represents the set of feasible solutions. For instance, when $f$ is monotone and $\cI$ represents a cardinality/size constraint,\footnote{Formally, $\cI$ contains in this case all subsets of $\cN$ of size at most $\maxcardinality$ for some value $\maxcardinality$.} the celebrated result of~\cite{nemhauser1978analysis} states that the greedy algorithm achieves a $(1-1/e)$-approximation for this problem, which is known to be optimal~\cite{nemhauser1978best}.  In the recent years, there has been a large body of literature aiming at solving Problem~\eqref{eq:main-problem} in the offline/centralized setting under various types of feasibility constraints such as matroid, $k$-matchoid, $k$-extendible system, and $k$-set system (formal definitions of some of these terms appear in \cref{sec:preliminaries}). These types of feasibility constraint (as well as other types not mentioned here) form an hierarchy, i.e., some types are generalization of other types. A part of this hierarchy is given by the following inclusions. Interestingly, all these inclusions are known to be strict. 
\begin{equation*}
\text{cardinality} \subset 
\text{matroid} \subset 
\text{intersection of $k$ matroids} \subset 
\text{$k$-matchoid} \subset
\text{$k$-extendible} \subset 
\text{$k$-set system} \enspace.
\end{equation*}

In the offline/centralized setting, the problem of maximizing a (non-monotone) submodular function subject to the above types of constraints is fairly well understood and easy-to-implement algorithms have been proposed. For instance, for maximization under a $k$-set system constraint, one obtains an approximation ratio of $k+O(\sqrt{k})$ using roughly $\sqrt{k}$ invocations of the natural greedy algorithm and an algorithm for unconstrained submodular maximization. Or, when the constraint is a $k$-extendible system, running the greedy algorithm only once over a carefully subsampled ground set achieves a $k+3$ approximation ratio~\cite{feldman2017greed}. It should also be noted that as is the greedy algorithm fails to provide any constant factor approximation guarantee when the submodular function is non-monotone, and thus, the above modifications of it are necessary.

In the streaming setting, when the elements arrive one at a time and the memory footprint is not allowed to grow significantly with the size of the data, the landscape of constrained submodular maximization is much less understood. In particular, even for the simple problem of monotone submodular maximization subject to a cardinality constraint,  the best known approximation guarantee is $1/2$~\cite{badanidiyuru2014streaming} (as opposed to $(1-1/e)$ in the offline setting). Moreover, no algorithm is currently known to achieve a non-trivial guarantee for more complicated constraints such as $k$-extendible or $k$-set system in the streaming setting even when the submodular objective function is monotone.

In this paper, we propose the first streaming algorithm for maximizing a general submodular function (not necessarily monotone) subject to a general $k$-set system constraint. Our algorithm achieves an $O(k^2\log k)$ approximation ratio for this problem. Moreover, when the constraint reduces to a $k$-extendible system the approximation guarantee of our streaming method improves to a better $O(k\log k)$ approximation ratio. Interestingly, the last approximation ratio is a significant improvement even compared to the best approximation ratio previously known for the very special case of this problem in which the objective function is linear. The current state-of-the-art algorithm for this special case, due to~\cite{crouch2014improved}, guarantees only an $O(k^2)$-approximation.

With the exception some algorithms designed for the simple cardinality constraint~\cite{alaluf2019making,badanidiyuru2014streaming,ene2019optimal,kazemi2019submodular}, all the streaming algorithms previously suggested for submodular maximization (see~\cite{buchbinder2014submodular,chekuri2015streaming,chakrabarti2015submodular,feldman2018doless}) have been based on the same basic technique. These algorithms maintain a feasible solution, and update it in the following way. When an element $u$ arrives, the algorithm (1) determines a set of elements that have to be removed from the current feasible solution to allow $u$ to be added without violating feasibility, and then (2) decides using some algorithm specific rule whether it is beneficial to make this trade (i.e., add $u$ and remove the necessary elements to recover feasibility). Our algorithm uses a very different technique of maintaining multiple feasible solutions to which elements can be added (but can never be removed), which is inspired by the technique of~\cite{crouch2014improved} for maximization of linear functions subject to $k$-set systems. Intuitively, each one of the solutions maintained by our algorithm is associated with a particular importance of elements, and the role of this solution is to collect enough elements of this importance. Since we collect elements from each level of importance, once the stream ends, the union of the solutions we maintain is a good enough summary of the stream, and our algorithm is able to pick a feasible subset of this union which is competitive with respect to the optimal solution.

One component of our algorithm is a general framework that is able to convert many streaming algorithms for monotone submodular maximization to similar algorithms for non-monotone submodular maximization. As an immediate consequence of this framework, we get a deterministic streaming algorithm for maximizing a general (not necessarily monotone) submodular function subject to a $k$-matchoid constraint, which is a slight improvement over the state-of-the-art deterministic approximation ratio for this problem due to~\cite{chekuri2015streaming}. We also compare the empirical performance of our algorithm with the existing work and natural baselines in a set of experiments including independent set over randomly generated graphs, maximizing a linear function over edges of a graph, movie recommendation, and Yelp location data summarization. In all these applications, the various constraints are modeled as an instance of a $k$-set system.

Before concluding this section, we need to highlight a technical issue. The standard definition of streaming algorithms requires them to use poly-logarithmic amount of space, which is less than the space necessary for keeping a solution for our problem. Thus, no algorithm for this problem aiming to produce a solution (rather than just estimate the value of the optimal solution) can be a true streaming algorithm. This is true also for all the above mentioned streaming algorithms, which are in fact semi-streaming algorithms---a semi-streaming algorithm is an algorithm that processes the data as a sequence of elements using an amount of space which is nearly linear in the maximum size of a feasible solution and typically makes only a single pass over the entire data stream. Since true streaming algorithms are almost irrelevant to our setting, we ignore the distinction between streaming and semi-streaming algorithms in this paper and often use the term ``streaming algorithm'' to refer to a semi-streaming algorithm.

\paragraph{Paper Structure.} In \cref{sec:preliminaries}, we formally define some types of constraints and the notation we use, and then formally state some technical results that we need. In \cref{sec:framework}, we describe our above mentioned framework for converting streaming algorithms for monotone submodular maximization into streaming algorithms for non-monotone submodular maximization. Then, in \cref{sec:algorithm}, we describe and formally analyze our algorithm, and in \cref{sec:experiments} we describe the experiments we conducted to study the empirical performance of this algorithm.

\subsection{Related Work}

The study of submodular maximization in the streaming setting was initialized by the works of \citet{badanidiyuru2014streaming} and \citet{chakrabarti2015submodular}. As discussed above, the work of~\cite{chakrabarti2015submodular} was based on a technique allowing the removal of elements from the solution (also known as preemption). Originally,~\cite{chakrabarti2015submodular} suggested this technique only for constraints formed by the intersection of $k$-matroids and a monotone submodular objective function, but later works extended the use of the technique to the more general class of $k$-matchoid constraints as well as non-monotone submodular functions~\cite{buchbinder2014submodular,chekuri2015streaming,feldman2018doless}. The above mentioned algorithm of \citet{badanidiyuru2014streaming} works only for the simple cardinality constraint and monotone submodular objective functions, but provides an improved approximation ratio of $\nicefrac{1}{2}$ for this setting (there is evidence that this approximation ratio is optimal for the setting~\cite{norouzifard2018beyond}). The technique at the heart of this algorithm is based on growing a set to which elements can only be added, which becomes the output solution of the algorithm by the end of the stream (unlike the case in the technique of~\cite{crouch2014improved} on which we base our results, in which the final solution is obtained by combining multiple sets grown by the algorithm). More recent works improved the algorithm of~\cite{badanidiyuru2014streaming} by improving its space complexity~\cite{kazemi2019submodular} and extending its technique to non-monotone submodular functions~\cite{alaluf2019making,ene2019optimal}.

The study of submodular maximization in the offline/centralized setting is very vast, and thus, we concentrate here only on results for general $k$-extendible or $k$-set system constraints. Already in $1978$, \citet{fisher1978analysis} proved that the natural greedy algorithm obtains $k + 1$ approximation for the problem of maximizing a monotone submodular function subject to a $k$-set system constraint (some of their proof was given implicitly, and the details were filled in by~\cite{calinescu2011maximizing}). This was recently proved to be almost optimal. Specifically, \citet{badanidiyuru2014fast} proved that no polynomial time algorithm can obtain $k - \varepsilon$ approximation for this problem for any constant $\varepsilon > 0$, and the same inapproximability result was later shown to apply also to $k$-extendible constraints by~\cite{feldman2017greed}. As mentioned in \cref{sec:introduction}, \citet{feldman2017greed} presented the state-of-the-art algorithms for maximizing a (not necessarily monotone) submodular function subject to $k$-set system and $k$-extendible constraints. Both algorithms obtain $k + o(k)$ approximation, which improves over two previous results due to~\cite{gupta2010constrained} and~\cite{mirzasoleiman2016fast} that obtained roughly $3k$ and $2k$ approximation, respectively, for the more general case of a $k$-set system constraint.
\section{Preliminaries and Notation} \label{sec:preliminaries}

We begin this section by presenting some notation that we use in this paper. Then, we formally define some types of constraints mentioned in \cref{sec:introduction}, and discuss the guarantee of a simple greedy algorithm for these constraints.

Given an element $u$ and a set $A$, we use $A + u$ as a shorthand for the union $A \cup \{u\}$. We also denote the marginal gain of adding $u$ to $A$ with respect to a set function $f\colon 2^\cN \to \bR$ using $f(u \mid A) \triangleq f(A + u) - f(A)$. Similarly, the marginal gain of adding a set $B \subseteq \cN$ to another set $A \subseteq \cN$ is denoted by $f( B \mid A) \triangleq f(B \cup A) - f(A)$. Note that this notation allows us, for example, to rewrite the definition of submodularity as the requirement that $f(u \mid A) \geq f(u \mid B)$ for every two sets $A \subseteq B \subseteq \cN$ and element $u \not \in B$.

A constraint is defined, for our purposes, as a pair $(\cN, \cI)$, where $\cN$ is a ground set and $\cI$ is the collection of all feasible subsets of $\cN$. All the types of constraints discussed in \cref{sec:introduction} are independence systems according to the following definition.
\begin{definition}
	Given a ground set $\cN$ and a collection of sets $\cI \subseteq 2^\cN$, the pair $(\cN, \cI)$ is an independence system if (i) $\varnothing \in \cI$ and (ii) for $B \in \cI$ and any $A \subseteq B$ we have $A \in \cI$.
\end{definition}
It is customary to call a set $A \subseteq \cN$ \emph{independent} if it belongs to $\cI$ and \emph{dependent} if it does not (i.e., it is infeasible). 
An independent set $B \in \cI$ which is maximal with respect to inclusion is called a \emph{base}; that 
is, 
$B \in \cI$ is a base if $A \in \cI$ and $B \subseteq A$ imply that $B = A$. Furthermore, an independent set $B \in \cI$ which is a subset of some set $E \subseteq \cN$ is called a base of $E$ if it is a base of the independence system $(E, 2^E \cap \cI)$. Note that this means that a set $B$ is a base of $(\cN, \cI)$ if and only if it is a base of $\cN$.

The above terminology allows us now to define $k$-set systems.
\begin{definition} \label{def:k-set-system}
	An independence system $(\cN, \cI)$ is a $k$-set system for an integer $k \geq 1$ if for every set $E \subseteq \cN$, all the bases of $E$ have the same size up to a factor of $k$ (in other words, the ratio between the sizes of the largest and smallest bases of $E$ is at most $k$).
\end{definition}

An immediate consequence of the definition of $k$-set systems is that any base of such a system
is a maximum size independent set up to an approximation ratio of $k$. Thus, one can get a $k$-approximation for the problem of finding a maximum size set subject to a $k$-set system constraint by outputting an arbitrary base of the $k$-set system, which can be done using the following simple strategy. Start with the empty solution, and consider the elements of the ground set $\cN$ in an arbitrary order. When considering an element, add it to the current solution, unless this will make the solution dependent. We refer to this procedure as the unweighted greedy algorithm.

Let us now define $k$-extendible systems. We remind the reader that $k$-extendible systems are well-known to be a restricted class of $k$-set systems.

\begin{definition} \label{def:k-extendible-system}
	An independence system $(\cN, \cI)$ is a $k$-extendible system for an integer $k \geq 1$ if for any two independent sets $S \subseteq T \subseteq \cN$, and an element $u \not \in T$ such that $S + u \in \cI$, there is a subset $Y \subseteq T \setminus S$ of size at most $k$ such that $T \setminus Y + u \in \cI$.
\end{definition}



Since $k$-extendible systems are, in particular, $k$-systems, the above discussion already implies that the unweighted greedy algorithm obtains $k$-approximation for the problem of finding a maximum size independent set in such a system. The following lemma strengthens this observation, and is the key technical reason that our algorithm has a better approximation guarantee for $k$-extendible system constraints than for $k$-set system constraints.

\begin{lemma}
	\label{lem:greedy_improved_result}
	Given a $k$-extendible set system $(\cN, \cI)$, the unweighted greedy algorithm is guaranteed to produce an independent set $B$ such that $k \cdot |B \setminus A| \geq |A \setminus B|$ for any independent set $A \in \cI$.
\end{lemma}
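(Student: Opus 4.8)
The plan is to exploit only the fact that the greedy algorithm outputs a \emph{base} $B$ of $(\cN,\cI)$, i.e., a maximal independent set: since the greedy procedure considers every element and each $a \notin B$ was rejected at some point, we get $B + a \notin \cI$ for every $a \in \cN \setminus B$. Given an arbitrary independent set $A$, I would run an exchange argument that charges the elements of $A \setminus B$ to those of $B \setminus A$, each of the latter absorbing at most $k$ of the former. Concretely, I process the elements of $B$ one at a time (in an order chosen for the proof, unrelated to the greedy order), maintaining after $i$ steps a partial base $B_i \subseteq B$ and a subset $A_i \subseteq A$ obeying the invariant $B_i \cup A_i \in \cI$, started from $A_0 = A$ so that $B_0 \cup A_0 = A \in \cI$. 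The total number of elements deleted from $A$ over the course of this process is the quantity I will bound.

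The key idea is to pick a convenient processing order for $B$: first insert all elements of $A \cap B$, and only afterwards the elements of $B \setminus A$. While inserting an element $b_i \in A \cap B$ no deletion is needed, because $b_i \in A_{i-1}$ and hence $B_i \cup A_{i-1} = B_{i-1} \cup A_{i-1} \in \cI$. Once I begin inserting an element $b_i \in B \setminus A$, I invoke $k$-extendibility (\cref{def:k-extendible-system}) with $S = B_{i-1}$, $T = B_{i-1} \cup A_{i-1}$, and $u = b_i$: these satisfy $S \subseteq T$, $S,T \in \cI$, $u \notin T$, and $S + u = B_i \in \cI$, so there is a set $Y \subseteq T \setminus S \subseteq A_{i-1}$ with $|Y| \le k$ and $(T \setminus Y) + u = B_i \cup (A_{i-1} \setminus Y) \in \cI$. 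I then delete $Y$ by setting $A_i = A_{i-1} \setminus Y$, which preserves the invariant $B_i \cup A_i \in \cI$.

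The step I expect to be the crux is arguing that each deleted set $Y$ consists \emph{only} of elements of $A \setminus B$, so that the deletions are genuinely paid for by elements of $B \setminus A$. This is exactly where the ordering pays off: during the second phase every element of $B$ not yet inserted lies in $B \setminus A$ and is therefore disjoint from $A \supseteq Y$, while $Y \cap B_{i-1} = \emptyset$ since $Y \subseteq T \setminus S$; together these force $Y \subseteq A \setminus B$. Finally, because $B$ is a base and the terminal invariant gives $B \cup A_m \in \cI$, maximality of $B$ yields $A_m \subseteq B$, so every element of $A \setminus B$ has in fact been deleted along the way. Summing over the $|B \setminus A|$ insertions of the second phase then gives $|A \setminus B| \le \sum_i |Y| \le k \cdot |B \setminus A|$, which is precisely the claimed inequality.
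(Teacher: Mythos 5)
Your proposal is correct and is essentially the paper's own argument: both proofs run the same exchange process, invoking $k$-extendibility once per element of $B \setminus A$ with what are literally the same sets (your $B_{i-1}$ and $B_{i-1} \cup A_{i-1}$ coincide with the paper's $A_{i-1} \cap B$ and $A_{i-1}$), deleting a set $Y$ of at most $k$ elements of $A \setminus B$ each time, and closing the argument via maximality of the greedy output $B$. The only difference is bookkeeping — you grow $B_i$ and shrink $A_i$ as two separate sets with a free first phase over $A \cap B$, whereas the paper maintains the single evolving independent set $A_i = B_i \cup A_i$ and iterates only over $B \setminus A$.
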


\begin{proof}
	Let us denote the elements of $B \setminus A$ by $x_1,x_2,\dotsc,x_m$ in an arbitrary order. Using these elements, we recursively define a series of independent sets $A_0, A_1,\dotsc,A_m$. The set $A_0$ is simply the set $A$. For $1 \leq i \leq m$, we define $A_i$ using $A_{i - 1}$ as follows. Since $(\cN, \cI)$ is a $k$-extendible system and the subsets $A_{i-1}$ and $A_{i-1} \cap B + x_i \subseteq B$ are both independent, there must exist a subset $Y_i \subseteq A_{i-1} \setminus (A_{i-1} \cap B) = A_{i-1} \setminus B$ such that $|Y_i| \leq k$ and $A_{i-1} \setminus Y_i+x_i\in \cI$. Using the subset $Y_i$, we now define $A_i = A_{i-1} \setminus Y_i+x_i$.
	Note that by the definition of $Y_i$, $A_i \in \mathcal{I}$ as promised.
	Furthermore, since $Y_i \cap B = \emptyset$ for each $0 \leq i \leq m$, we know that $(A \cup \{x_1, x_2, \dots, x_m\}) \cap B \subseteq A_m$, which implies $B \subseteq A_m$ because $\{x_1,x_2,\dotsc,x_m\} = B \setminus A$.
	However, $B$, as the output of the unweighted greedy algorithm, must be an inclusion-wise maximal independent set (i.e., a base), and thus, it must be in fact equal to the independent set $A_m$ containing it.
	
	Let us now denote $Y = \bigcup_{i=1}^{m} Y_i$, and consider two different ways to bound the number of elements in $Y$. On the one hand, since every set $Y_i$ includes up to $k$ elements, we get $|Y| \leq km = k \cdot |B \setminus A|$. On the other hand, the fact that $B = A_m$ implies that every element of $A \setminus B$ belongs to $Y_i$ for some value of $i$, and therefore, $|Y| \geq |A \setminus B|$. The lemma now follows by combining these two bounds. 
\end{proof}
\section{Streaming Algorithms for Non-monotone Submodular Maximization} \label{sec:framework}

\citet{mirzasoleiman2018streaming} proposed a framework for the following task. Given a streaming\footnote{Recall that in this paper we use the term ``streaming algorithm'' to refer to algorithms that are technically ``semi-streaming algorithms'', i.e., their space complexity is allowed to be nearly-linear in the size of the output set.} algorithm for maximizing monotone submodular functions, the framework produces a similar algorithm that works also for non-monotone submodular objectives. Unfortunately, however, this framework applies only to algorithms satisfying a property which, to the best of our knowledge, is not satisfied by any streaming algorithm from the literature (except algorithms that work for non-monotone functions by design). In particular, this is the case for the algorithm of \citet{chekuri2015streaming} explicitly mentioned by~\cite{mirzasoleiman2018streaming} as a natural fit for their framework. In the rest of this section we discuss this issue in more detail, and then introduce a different framework which achieves the same goal (converting algorithms for monotone submodular maximization into algorithms for non-monotone submodular maximization), but requires a different property from the input algorithms which is satisfied by both existing algorithms from the literature and the new algorithm we suggest in this paper.

The algorithm of~\cite{chekuri2015streaming} discussed above is a streaming algorithm for maximizing monotone submodular functions under a $k$-matchoid constraint, and \citet{mirzasoleiman2018streaming} applied their framework to it in order to get such an algorithm for non-monotone function. Formally, this framework requires the input streaming algorithm to satisfy the inequality
 \begin{align} \label{eq:prev-req-cond}
  f(S) \geq \alpha \cdot f(S \cup T) \enspace,
 \end{align}
where $S$ as the output of the algorithm, $T$ is an arbitrary feasible solution and $\alpha$ is a positive value. Unfortunately, the algorithm of~\cite{chekuri2015streaming} fails to satisfy \cref{eq:prev-req-cond} for any constant $\alpha$, so does the algorithms of \citet{buchbinder2019online} and \citet{chakrabarti2015submodular}. In \cref{appendix:counter-example} we provide examples showing that this is the case for all these algorithms even under a simple cardinality constraint. 

Interestingly, \citet{chekuri2015streaming} presented, prior to the work of~\cite{mirzasoleiman2018streaming}, an alternative method to convert their algorithm into a deterministic algorithm for non-monotone functions based on a technique due to~\citet{gupta2010constrained}. The framework we suggest can be viewed as a formalization and generalization of this technique. As an alternative to the property~\eqref{eq:prev-req-cond}, our framework uses the property described by Definition~\ref{def:req-cond}. 
\begin{definition} \label{def:req-cond}
Consider a data stream algorithm for maximizing a non-negative submodular function $f\colon 2^\cN \to \nnR$ subject to a constraint $(\cN, \cI)$. We say that such an algorithm is an $(\alpha, \gamma)$-approximation algorithm, for some $\alpha \geq 1$ and $\gamma \geq 0$,
if it returns two sets $S \subseteq A \subseteq \cN$ such that $S \in \cI$, and for all $T \in \cI$ we have
\[
	\bE[f(T \cup A)] \leq \alpha \cdot \bE[f(S)] + \gamma \enspace.
\]
\end{definition}

We note that most previous algorithms (including the algorithms suggested by~\cite{buchbinder2019online,chakrabarti2015submodular}) satisfy Definition~\ref{def:req-cond} with $\gamma = 0$. However, they do not keep in memory the set $A$ because this set can get very large. This is unacceptable for us, as we need explicit access to $A$. Fortunately, \citet{chekuri2015streaming} described a technique to create a tradeoff between the size of $A$ and the value of $\gamma$, and by setting the parameters right it is possible to keep both $\gamma$ and $|A|$ reasonably small. The same technique can be used to get a similar result for the algorithms of~\cite{buchbinder2019online,chakrabarti2015submodular} as well.

We are now ready to describe the algorithm at the heart of our framework. This algorithm is given as Algorithm~\ref{alg:non-monotone}, and it assumes access to two procedures: (1) a data stream algorithm $\AlgStream$ for the problem of maximizing a non-negative submodular function $f\colon 2^\cN \to \nnR$ subject to a constraint $(\cN, \cI)$, and (2) an offline algorithm $\AlgConstrained$ for the same problem. For the data stream algorithm $\AlgStream$ we use the following, not very standard, semantics. Algorithm~\ref{alg:non-monotone} has two ways to call $\AlgStream$. Every time that Algorithm~\ref{alg:non-monotone} would like to pass additional elements to $\AlgStream$, it calls it with the set of these new elements, and $\AlgStream$ updates its internal data structures accordingly and returns a set including all the elements that it decided to remove from its memory. Once the stream ends,   Algorithm~\ref{alg:non-monotone} calls $\AlgStream$ with the subscript $\mathsf{end}$ and pass to it any final elements it would like $\AlgStream$ to get. $\AlgStream$ then process these elements and returns three sets: the sets $S$ and $A$ produced by $\AlgStream$ (as described in Definition~\ref{def:req-cond}) and a set $D$ consisting of all the elements that are still in the memory of $\AlgStream$ and did not end up in $A$. Figure~\ref{fig:alg_schematic} is a graphic representation of the flow of elements between the components of \cref{alg:non-monotone}.

\begin{algorithm2e}
	\DontPrintSemicolon
	\caption{Non-monotone Data Stream Algorithm} \label{alg:non-monotone}
	\textbf{Input: } a positive integer $r$\\
	\textbf{Output: } a set $S \in \cI$\\
	Initialize $r$ independent copies of $\AlgStream$: $\AlgStream^{(1)}, \dotsc, \AlgStream^{(r)}$.\\
	\While{there are more elements in the stream}
	{
		Let $D_0$ be a singleton set containing the next element of the stream.\\
		\lFor{$i = 1$ \KwTo $r$}
		{
			$D_i \gets \AlgStream^{(i)}(D_{i-1})$.
		}
	}
	Let $D_0 \gets \varnothing$.\\
	\For{$i = 1$ \KwTo $r$}
	{
		$[S_i, A_i, D_i] \gets \AlgStream^{(i)}_{\mathsf{end}}(D_{i-1})$.\\
		$S'_i \gets \AlgConstrained(A_i)$. \\
	}
	\Return{the set maximizing $f$ among $\{S_i, S'_i\}_{i = 1}^r$}.
\end{algorithm2e}

\begin{figure}[ht!] 
	\centering
	\includegraphics[width=4.5in]{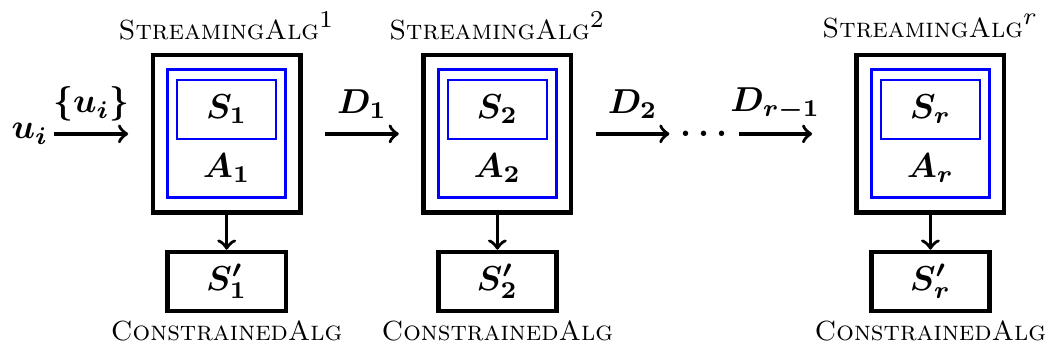}
	\caption{Schematic representation of Algorithm~\ref{alg:non-monotone}.}\label{fig:alg_schematic}
\end{figure}

It is clear that Algorithm~\ref{alg:non-monotone} outputs a feasible solution. The following observation bounds the space complexity of Algorithm~\ref{alg:non-monotone}.
\begin{observation}
The space complexity of Algorithm~\ref{alg:non-monotone} is upper bounded by $O(r \cdot M_{\AlgStream} + M_{\AlgConstrained})$, where $M_{\AlgStream}$ and $M_{\AlgConstrained}$ represent the space complexities of their matching algorithms under the assumption that the input for $\AlgStream$ is a subset of the full input and the input for $\AlgConstrained$ is the $A$ set produced by $\AlgStream$ on some such subset.
\end{observation}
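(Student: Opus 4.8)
The plan is to account separately for each category of data that \cref{alg:non-monotone} holds in memory at any single moment, and to show that every category is subsumed by the claimed bound $O(r \cdot M_{\AlgStream} + M_{\AlgConstrained})$. The four categories I would track are: (i) the internal state of the $\AlgStream$ copies, (ii) the internal state of $\AlgConstrained$, (iii) the transient sets $D_i$ that are threaded through the chain of copies, and (iv) the candidate solutions $S_i, S'_i, A_i$ together with the running maximizer used to implement the final $\argmax$.

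First I would observe that the algorithm maintains the $r$ copies $\AlgStream^{(1)}, \dotsc, \AlgStream^{(r)}$ simultaneously for the entire duration of the stream, and that by hypothesis each copy is fed only a subset of the full input, so its state occupies $O(M_{\AlgStream})$ space. Summing over the $r$ copies yields the first term $O(r \cdot M_{\AlgStream})$. Next I would note that $\AlgConstrained$ is invoked only inside the concluding \textbf{for} loop, one value of $i$ at a time, and that nothing forces the algorithm to retain the state of the invocation for index $i$ once $S'_i$ has been computed. Hence at most one instance of $\AlgConstrained$ is live at any moment, each operating on a set $A_i$ of the kind described in the statement, so this contributes only $O(M_{\AlgConstrained})$.

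The slightly more delicate part is bounding the transient and output data so that it does not exceed the two terms already accounted for. For the sets $D_i$, I would argue that the computation $D_i \gets \AlgStream^{(i)}(D_{i-1})$ depends only on $D_{i-1}$, so the algorithm never needs to hold more than two consecutive such sets at once; moreover each $D_i$ consists of elements that $\AlgStream^{(i)}$ evicted from its own memory, so $|D_i| = O(M_{\AlgStream})$ and these sets are subsumed by the first term. For the candidates, I would use $S_i \subseteq A_i$ together with the fact that $A_i$ resides inside the memory of $\AlgStream^{(i)}$, so it is already counted; the feasible sets $S'_i \in \cI$ are independent and hence no larger than a base of the constraint, which fits within $O(M_{\AlgConstrained})$. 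Finally, I would implement the closing selection by scanning the candidates $\{S_i, S'_i\}$ within the same concluding loop and retaining only the single best set found so far, so the output bookkeeping adds nothing asymptotically.

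The main obstacle I anticipate lies in items (iii) and (iv): one must make explicit that the evaluation of $\argmax_{\{S_i, S'_i\}} f$ can be carried out incrementally, keeping only a running maximizer rather than all $2r$ candidates at once, and that the eviction sets $D_i$ passed along the chain are transient and individually bounded by a single $\AlgStream$ copy's footprint. Once these two points are spelled out, the four categories sum to $O(r \cdot M_{\AlgStream} + M_{\AlgConstrained})$, establishing the observation.
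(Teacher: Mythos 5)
Your proposal is correct and follows essentially the same accounting as the paper: the $D_i$ sets are transient and individually bounded by a single copy's footprint, the $r$ simultaneous copies of \AlgStream give the $O(r \cdot M_{\AlgStream})$ term, the outputs $S_i, A_i, S'_i$ are subsumed by the copies' space (the paper bounds $S'_i$ via $S'_i \subseteq A_i$ rather than via independence, an immaterial difference), and \AlgConstrained contributes the additive $O(M_{\AlgConstrained})$. The extra details you flag (the running maximizer for the final $\argmax$ and keeping only one live instance of \AlgConstrained) are correct implementation points that the paper leaves implicit.
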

\begin{proof}
Note that every set assigned to a variable $D_i$ by Algorithm~\ref{alg:non-monotone} is either an input set for some copy of $\AlgStream$ or an output set of such a copy. Furthermore, any such set is either ignored or fed immediately after construction to some copy of $\AlgStream$. Thus, the space complexity required for these sets is upper bounded by the space complexity required for the $r$ copies of \AlgStream used by Algorithm~\ref{alg:non-monotone}, which is $O(r \cdot M_{\AlgStream})$. Additionally, since the sets $S_i$ and $A_i$ are the final outputs of these copies of $\AlgStream$, we get they can also be stored using $O(r \cdot M_{\AlgStream})$ space. Finally, every set $S'_i$ is a subset of $A_i$, and therefore, the sets $S'_i$ do not require more space than the sets $A_i$. Combining all the above, we get that the space complexity of Algorithm~\ref{alg:non-monotone}---excluding the space required for running \AlgConstrained---is at most $O(r \cdot M_{\AlgStream})$.
\end{proof}

To complete the analysis of Algorithm~\ref{alg:non-monotone}, it remains to analyze its approximation guarantee. Towards this goal, we need the following known lemma.
\begin{lemma}[Lemma 2.2 of~\citep{buchbinder2014submodular}] \label{lem:buchbinder}
	Let $g\colon 2^\cN \to \nnR$ be a non-negative submodular function, and let $B$ be a 
	random 
	subset of $\cN$ containing every element of $\cN$ with probability at most $q$ 
	(not 
	necessarily 
	independently). Then,
	$\bE[g(B)] \geq (1 - q) \cdot g(\varnothing)$.
\end{lemma}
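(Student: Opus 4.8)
The plan is to prove the equivalent statement that the loss $g(\varnothing) - \bE[g(B)]$ is at most $q \cdot g(\varnothing)$. First I would fix an arbitrary order $u_1,\dots,u_n$ on $\cN$ and, for each realization of $B$, expand $g(B)$ through the telescoping chain from $\varnothing$ up to $B$: writing $C_j = B \cap \{u_1,\dots,u_j\}$, we have $g(B) - g(\varnothing) = \sum_{j=1}^n \mathbf{1}[u_j \in B]\cdot g(u_j \mid C_{j-1})$. Taking expectations reduces the claim to controlling the expected sum of the signed marginals $\mathbf{1}[u_j \in B]\,g(u_j \mid C_{j-1})$, where the only distributional information available is the per-element bound $\Pr[u_j \in B] \le q$.

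The factor $(1-q)$ should come from combining submodularity (to compare each random marginal against one taken with respect to a \emph{fixed} reference set, removing the dependence on the unknown joint law of $B$) with non-negativity of $g$ (to cap how negative the total incurred loss can be). The clean way to organize this is to produce nonnegative multipliers $\{\lambda_u\}_{u \in \cN}$ with $\sum_{u}\lambda_u \le g(\varnothing)$ such that the modular upper bound $g(\varnothing) - g(S) \le \sum_{u \in S}\lambda_u$ holds for \emph{every} $S \subseteq \cN$. Substituting $S = B$ and taking expectations then yields at once $g(\varnothing) - \bE[g(B)] \le \sum_u \Pr[u \in B]\,\lambda_u \le q\sum_u \lambda_u \le q\cdot g(\varnothing)$, which is exactly the desired inequality. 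The role of non-negativity is visible here: it guarantees $g(\varnothing) - g(S) \le g(\varnothing)$ for all $S$, which is what keeps the total weight $\sum_u \lambda_u$ bounded by $g(\varnothing)$.

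The main obstacle is precisely the phrase \emph{not necessarily independently}. Because the elements of $B$ may be arbitrarily correlated, I cannot condition on the membership of a single element and recurse: conditioning can inflate the conditional inclusion probabilities of the remaining elements well beyond $q$, and a naive one-element-at-a-time induction compounds the loss into a useless $(1-q)^n$ factor rather than a single $(1-q)$. The whole difficulty is therefore to charge the total loss to $g(\varnothing)$ \emph{once} rather than per element, which is what the global modular domination above accomplishes. Consequently, the step I expect to require the real work is constructing the multipliers $\lambda_u$ (equivalently, verifying feasibility of the dual linear program whose optimum is $\min_B \bE[g(B)]$ over all joint laws with marginals at most $q$); the telescoping expansion and the final averaging are then routine.
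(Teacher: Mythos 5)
First, a point of reference: the paper itself does not prove this lemma --- it is imported verbatim, with a citation, from Buchbinder et al., so there is no in-paper proof to compare against and your argument must stand on its own.

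It does not, because the one step you defer is the entire lemma. Your reduction is valid as far as it goes: if nonnegative multipliers $\lambda_u$ with $\sum_{u \in \cN} \lambda_u \le g(\varnothing)$ and $g(\varnothing) - g(S) \le \sum_{u \in S} \lambda_u$ for all $S$ exist, then one line of averaging finishes the proof. But the existence of such multipliers is not a routine feasibility check; by LP duality it is \emph{equivalent} to the statement being proved. Indeed, the dual of $\min\{\sum_u \lambda_u : \lambda \ge 0,\ \sum_{u \in S}\lambda_u \ge g(\varnothing) - g(S)\ \forall S\}$ asks whether $\sum_S y_S\,(g(\varnothing) - g(S)) \le g(\varnothing)$ for every $y \ge 0$ with $\sum_{S \ni u} y_S \le 1$; normalizing $y$ into a probability distribution whose marginals are at most $1/\sum_S y_S$ turns this constraint into exactly the inequality $\bE[g(B)] \ge (1-q)\cdot g(\varnothing)$. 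So the plan restates the lemma rather than proving it. Moreover, your justification for why $\sum_u \lambda_u \le g(\varnothing)$ should be attainable --- that nonnegativity gives $g(\varnothing) - g(S) \le g(\varnothing)$ pointwise --- is not sufficient: a set function $h$ bounded by $1$ everywhere but with $h(\{u\}) = 1$ for every singleton would force $\sum_u \lambda_u \ge |\cN|$, so the supermodularity of $g(\varnothing) - g(\cdot)$ must enter in an essential way, and that is precisely where the content of the lemma lives. The natural explicit candidates also fail: $\lambda_u = \max(0, -g(u \mid \cN \setminus \{u\}))$ violates $\sum_u \lambda_u \le g(\varnothing)$ already for directed cut functions (where $g(\varnothing) = 0$ but each $\lambda_u$ equals the in-weight of $u$), and greedy-chain marginals in a fixed order fail on small examples unless the order is chosen carefully. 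To actually close the gap you need a different mechanism, e.g.: since the Lov\'asz extension $\hat g$ of a submodular $g$ is its convex closure, $\bE[g(B)] \ge \hat g(x)$ for $x_u = \Pr[u \in B] \le q$, and $\hat g(x) = \bE_{\theta \sim U[0,1]}\bigl[g(\{u : x_u \ge \theta\})\bigr] \ge (1-q)\cdot g(\varnothing)$ because the level set is empty whenever $\theta > q$ and $g$ is nonnegative elsewhere; alternatively, the original source argues by a short induction on the support of $B$.
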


\begin{lemma} \label{lem:reduction_ratio}
	Assume \AlgStream is an $(\alpha, \gamma)$-approximation algorithm and \AlgConstrained is an offline $\beta$-approximation algorithm. Then, Algorithm~\ref{alg:non-monotone} returns a solutions $S$ such that
	\[ 	\bE[f(S)] \geq \dfrac{(r-1) \cdot \Opt - r\gamma}{r\alpha +  r(r-1)\beta/2}  \enspace. \]
\end{lemma}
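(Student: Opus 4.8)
The plan is to track how the optimal solution $O = \Opt$ is split among the $r$ copies of \AlgStream. Let $\cN_i$ denote the (sub)stream fed to the $i$-th copy, so that $\cN_1 = \cN$, and observe from the pipeline structure of \cref{alg:non-monotone} that the elements handed to copy $i+1$ are exactly those that copy $i$ received but did not retain in $A_i$; that is, $\cN_{i+1} = \cN_i \setminus A_i$. Consequently the sets $A_1, \dots, A_r$ are pairwise disjoint, and writing $U_i = \bigcup_{j \le i} A_j$ we have $\cN_{i+1} = \cN \setminus U_i$. I would then define $O_i = O \cap \cN_i$ and $P_i = O \cap A_i$; since $O \in \cI$ and independence systems are downward closed, every $O_i$ and every $P_i$ is feasible, and $O = O_i \sqcup \bigsqcup_{j < i} P_j$.

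First I would extract the two inequalities supplied by the black-box algorithms. For the streaming guarantee, I would condition on the randomness of copies $1, \dots, i-1$, which fixes both $\cN_i$ and $O_i$; since copy $i$ is then an $(\alpha,\gamma)$-approximation run on the stream $\cN_i$, applying \cref{def:req-cond} with the fixed feasible set $T = O_i \subseteq \cN_i$ and un-conditioning gives $\bE[f(O_i \cup A_i)] \le \alpha\,\bE[f(S_i)] + \gamma \le \alpha\,\bE[f(S)] + \gamma$ for every $i$. (This conditioning is what lets me treat $O_i$ as a legitimate fixed competitor even though it depends on earlier copies.) For the offline guarantee, $P_i$ is a feasible subset of $A_i$, so the $\beta$-approximation $S'_i = \AlgConstrained(A_i)$ satisfies $f(P_i) \le \beta f(S'_i) \le \beta f(S)$, hence $\bE[f(P_i)] \le \beta\,\bE[f(S)]$.

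The heart of the argument is a lower bound on $\sum_i \bE[f(O_i \cup A_i)]$ in terms of $\Opt$. Here I would invoke \cref{lem:buchbinder} with $g(B) = f(O \cup B)$ (nonnegative submodular, $g(\varnothing) = \Opt$): choosing $B$ to equal $A_i$ for a uniformly random $i$ makes each ground-set element appear in $B$ with probability at most $1/r$, because the $A_i$ are disjoint, so the lemma yields $\tfrac1r \sum_i f(O \cup A_i) \ge (1 - \tfrac1r)\,\Opt$, i.e.\ $\sum_i f(O \cup A_i) \ge (r-1)\Opt$. This is exactly the step that absorbs the non-monotonicity of $f$. To pass from $f(O \cup A_i)$ to $f(O_i \cup A_i)$, I would peel off the already-captured optimal mass: since $O \cup A_i = (O_i \cup A_i) \cup \bigcup_{j<i} P_j$, submodularity and nonnegativity give $f(O \cup A_i) \le f(O_i \cup A_i) + \sum_{j<i} f(P_j)$. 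Summing over $i$ and counting that $P_j$ is charged once for each $i > j$ produces the coefficient $\sum_j (r-j) = \binom{r}{2}$, so $\sum_i f(O_i \cup A_i) \ge (r-1)\Opt - \sum_j (r-j) f(P_j)$.

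Finally I would combine the pieces in expectation: the streaming bounds give $\sum_i \bE[f(O_i \cup A_i)] \le r\alpha\,\bE[f(S)] + r\gamma$, while the last inequality together with $\bE[f(P_j)] \le \beta\,\bE[f(S)]$ and $\sum_j (r-j) = \tfrac{r(r-1)}{2}$ gives $\sum_i \bE[f(O_i \cup A_i)] \ge (r-1)\Opt - \tfrac{r(r-1)}{2}\,\beta\,\bE[f(S)]$. Chaining these and solving the resulting linear inequality for $\bE[f(S)]$ yields the claimed ratio. I expect the main obstacle to be bookkeeping the chain structure correctly---in particular justifying the disjointness of the $A_i$ from the algorithm's pipeline semantics and ensuring that $O_i$ is independent of copy $i$'s internal randomness so that \cref{def:req-cond} may legitimately be applied with $T = O_i$---rather than the concluding algebra; the telescoping that converts per-copy losses into the $\binom{r}{2}$ factor is the one genuinely non-obvious inequality.
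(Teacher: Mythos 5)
Your proposal is correct and follows essentially the same route as the paper's proof: the same disjointness of the $A_i$ from the pipeline semantics, the same application of Lemma~\ref{lem:buchbinder} to $g(B)=f(O\cup B)$ with a uniformly random $A_i$, the same submodularity peeling $f(O\cup A_i)\le f(O_i\cup A_i)+\sum_{j<i}f(P_j)$, and the same $\binom{r}{2}$ bookkeeping. Your explicit conditioning argument justifying the use of \cref{def:req-cond} with the randomness-dependent competitor $T=O_i$ is a welcome refinement of a step the paper leaves implicit, but it does not change the substance of the argument.
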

\begin{proof}
Let $S^*$ be an arbitrary optimal solution, i.e., a set obeying $S^* \in \cI$ and $f(S^*) = \Opt$. 
For every integer $1 \leq i \leq r$, we denote by $\cN_i$ the set of elements that $\AlgStream^{(i)}$ has received. 
Note that we have $\cN_1 = \cN$ and $\cN_i = \cN \setminus (\cup_{1 \leq j \leq i-1} A_j)$ for $2 \leq i \leq r$ since $\AlgStream^{(i)}$ outputs every element that it gets and does not end up in $A_i$ as an element of $D_i$ at some point. Since $A_i$ is a subset of $\cN_i$, this implies that the sets $A_1, A_2, \dotsc, A_r$ are disjoint. Let us also define $S^*_i = S^* \cap \cN_i$.

Let us define now $\bar{A}$ to be a uniformly random set from $\{ A_1, A_2, \dotsc, A_r\}$, and $g(S) = f(S \cup S^{*})$. Then,
\begin{align} \label{eq:bound-average}
	\dfrac{1}{r} \sum_{i= 1}^{r} f(A_i \cup S^*) = \bE_{\bar{A}}[f(\bar{A} \cup S^*)] =  \bE_{\bar{A}}[g(\bar{A})] \geq \left(1 - \dfrac{1}{r}\right) \cdot g(\emptyset) = \left(1 - \dfrac{1}{r}\right) \cdot f(S^*) \enspace,
\end{align}
where the notation $\bE_{\bar{A}}$ stands for expectation over the random choice of $\bar{A}$ out of $\{ A_1, A_2, \dotsc, A_r\}$ (but not over any randomness that might be introduced by \AlgStream), and the inequality results from Lemma~\ref{lem:buchbinder} because (i) every element of $\cN$ belongs to $\bar{A}$ with probability at most $\frac{1}{r}$ since the sets $A_i$ are disjoint, and (ii) $g$ is a non-negative submodular function on its own right.

Note that $S_i$ contains exactly the elements of $S^*$ that do not appear in any of the sets $A_1, A_2, \dotsc, A_{i - 1}$. Thus, $S^* = S_i \cup \left(\cup_{1 \leq j \leq i-1} \left(S^* \cap  A_j \right) \right)$. By the submodularity and non-negativity of $f$, this implies
\[ f(A_i \cup S^*_i) + f(\cup_{j=1}^{i-1} (A_j \cap S^* )) \geq f(A_i \cup S^*) \enspace. \]
Using the last inequality, we can write
\begin{align*}
	(r - 1) \cdot \Opt
	={} &
	(r-1) \cdot f(S^*)
	\leq
	\sum_{i= 1}^{r}f(A_i \cup S^*)\\
	\leq{} &
	\sum_{i= 1}^{r} f(A_i \cup S^*_i) + f(\cup_{j=1}^{i-1} (A_j \cap S^* ))
	\leq
	\sum_{i= 1}^{r} \left[  f(A_i \cup S^*_i) + \sum_{j=1}^{i-1} f(A_j \cap S^*) \right]
	\enspace,
\end{align*}
where the first inequality follows from Inequality~\eqref{eq:bound-average}. Taking now expectation over any randomness introduced by $\AlgStream$ and $\AlgConstrained$, we get from the last inequality using the guarantees of these two algorithms that
\begin{align*}
	(r - 1) \cdot \Opt
	\leq{} &
	\sum_{i= 1}^{r} \left[  \bE[f(A_i \cup S^*_i)] + \sum_{j=1}^{i-1} \bE[f(A_j \cap S^*)] \right]\\
	\leq{} &
	\sum_{i= 1}^{r} \left[ \alpha \cdot \bE[f(S_i)] + \gamma + \sum_{j=1}^{i-1} \beta \cdot \bE[f(S'_j)] \right]
	\leq
	\left[ r\alpha + \dfrac{\beta r(r-1)}{2} \right] \cdot \bE[f(S)] + r\gamma
	\enspace,
\end{align*}
where the last inequality holds since $S$ is selected as the set maximizing $f$ among all the sets $\{S_i, S'_i\}_{i = 1}^r$.
\end{proof}

The following theorem summarizes the results we have proved in this section.

\begin{theorem}\label{theorem:non-monotone-streaming}
Given an $(\alpha, \gamma)$-approximation data stream algorithm $\AlgStream$ for maximizing a non-negative submodular function subject to some constraint and an offline $\beta$-approximation algorithm $\AlgConstrained$ for the same problem. There exists a data stream algorithm returning a feasible set $S$ that obeys
	\[ 	\bE[f(S)] \geq \dfrac{(r-1) \cdot \Opt - r\gamma}{r\alpha +  r(r-1)\beta/2}  \enspace. \]
Furthermore,
\begin{itemize}
	\item this algorithm is deterministic if \AlgStream and \AlgConstrained are both deterministic.
	\item the space complexity of this algorithm is upper bounded by $O(r \cdot M_{\AlgStream} + M_{\AlgConstrained})$, where $M_{\AlgStream}$ and $M_{\AlgConstrained}$ represent the space complexities of their matching algorithms under the assumption that the input for $\AlgStream$ is a subset of the full input and the input for $\AlgConstrained$ is the $A$ set produced by $\AlgStream$ on some such subset.
\end{itemize}
\end{theorem}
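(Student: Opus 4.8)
The plan is to recognize that this theorem is a consolidation of the three facts already established for Algorithm~\ref{alg:non-monotone} in this section, so the proof will instantiate that algorithm and assemble the pieces rather than prove anything substantially new. I would begin by taking the promised data stream algorithm to be exactly Algorithm~\ref{alg:non-monotone}, run with the given subroutines \AlgStream and \AlgConstrained and with the free integer parameter $r$. Feasibility of the returned set is immediate and was already noted in the text: every candidate $S_i$ lies in $\cI$ by the $(\alpha,\gamma)$-approximation guarantee of \AlgStream, every candidate $S'_i = \AlgConstrained(A_i)$ lies in $\cI$ since \AlgConstrained solves the constrained problem, and the algorithm returns the $f$-maximizer among these candidates.

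For the approximation guarantee I would simply cite Lemma~\ref{lem:reduction_ratio}, which proves the exact bound $\bE[f(S)] \geq \frac{(r-1)\cdot\Opt - r\gamma}{r\alpha + r(r-1)\beta/2}$ under precisely the hypotheses stated here. This is the analytic heart of the section and is already complete, so no further work is needed. Likewise, for the space-complexity bullet I would invoke the earlier Observation, which establishes the $O(r \cdot M_{\AlgStream} + M_{\AlgConstrained})$ bound under the same reading of $M_{\AlgStream}$ and $M_{\AlgConstrained}$ that appears verbatim in the theorem statement.

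The only genuinely new piece is the determinism bullet, and I expect it to be the one point requiring a short direct argument. The plan is to inspect the control flow of Algorithm~\ref{alg:non-monotone} and observe that the wrapper itself tosses no coins: its sole sources of randomness are the internal calls to \AlgStream (during both the streaming loop and the $\mathsf{end}$ phase) and to \AlgConstrained. Every remaining operation---routing the sets $D_i$ between the $r$ copies, collecting the candidate list $\{S_i, S'_i\}_{i=1}^r$, and selecting the $f$-maximizer---is a deterministic function of its inputs. Hence if both \AlgStream and \AlgConstrained are deterministic, the entire execution, including the final choice of $S$, is determined by the input stream.

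I do not anticipate a real obstacle: the work is already carried out in Lemma~\ref{lem:reduction_ratio} and the Observation, and the determinism claim follows by direct inspection of the pseudocode. The only subtlety worth confirming is that the $r$ copies of \AlgStream are ``independent'' merely in their separate initialization and internal state, and do not rely on any shared randomness supplied by the wrapper---which is clear from the fact that Algorithm~\ref{alg:non-monotone} never samples anything itself. Assembling these three observations completes the proof.
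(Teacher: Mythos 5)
Your proposal matches the paper exactly: the theorem is stated there as a summary of the section's results, with the approximation bound coming from Lemma~\ref{lem:reduction_ratio}, the space bound from the preceding Observation, and feasibility and determinism following by inspection of Algorithm~\ref{alg:non-monotone}. Your added remark that the wrapper itself introduces no randomness is the right (and only) justification needed for the determinism bullet.
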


We note that the algorithm guaranteed by \cref{theorem:non-monotone-streaming} is a streaming algorithm when the algorithm $\AlgStream$ is a streaming algorithm, the algorithm \AlgConstrained is a nearly-linear space algorithm and $r$ is upper bounded by a poly-log function.

In Appendix~\ref{sec:application} we show that by plugging one of the versions of the algorithm of \citet{chekuri2015streaming} into our framework it is straightforward to get a deterministic streaming algorithm for the problem of maximizing a non-negative (not necessarily monotone) submodular function subject to a $k$-matchoid constraint whose approximation ratio is $(15/2 + 4\varepsilon) k + O(\sqrt{k})$ for every constant $\varepsilon > 0$, which is a slight improvement over the guarantee of the previous state-of-the-art deterministic streaming algorithm for this problem (also due to~\cite{chekuri2015streaming}) which has an approximation guarantee of $8k + \gamma$, where $\gamma$ is the approximation ratio of the best offline algorithm for the same problem.
\section{Streaming Algorithm for \texorpdfstring{$k$}{k}-System and \texorpdfstring{$k$}{k}-Extendible System Constraints} \label{sec:algorithm}

In this section we formally prove our results for the problem of maximizing a non-negative submodular function $f\colon 2^\cN \to \nnR$ subject to a $k$-system or $k$-extendible system constraint $(\cN, \cI)$. A simple version of the algorithm we use to prove these results is given as Algorithm~\ref{alg:streaming-k-system}. This version assumes pre-access to a value $\rho$ equal to the size of the largest independent set in $\cI$ and a threshold $\tau$ estimating the value $M = \max_{u \in \cN, \{u\} \in \cI} f(\{u\})$. In Appendix~\ref{app:full_algorithm}, we present a more involved version of our algorithm that does not need this pre-access, but has a space complexity larger than the space complexity of Algorithm~\ref{alg:streaming-k-system} by a factor of $O(\log \rho + \log k)$---the approximation guarantee remains unchanged.

Intuitively, \cref{alg:streaming-k-system} maintains $\ell$ independent sets $E_i$, where each one of these sets corresponds to a different range of marginal contributions: the larger $i$, the smaller the marginal contributions $E_i$ is associated with. When an element $u$ arrives, the algorithm calculates the marginal contribution $m(u)$ of $u$ with respect to the union of the $E_i$ sets,\footnote{The notation $m(u)$ might suggest that the value $m(u)$ depends only on the identity of the element $u$. However, this is not the case. In fact, $m(u)$ might depend also on the set of elements that arrived before $u$ and the order of their arrival.} and then adds $u$ to the $E_i$ corresponding to this marginal contribution, unless this violates the independence of this $E_i$. Once the entire input has been processed, the algorithm combines the $E_i$ sets into $h$ possible output sets $T_0, T_1, \dotsc, T_{h - 1}$. Each output sets $T_j$ is constructed by greedily taking elements from the $E_i$ sets obeying $i \equiv j \pmod{h}$ (the algorithm scans the sets obeying this condition in an increasing $i$ order, which is a decreasing order with respect to the marginal contributions associated with these sets). The final output of the algorithm is simply the best set among the sets $T_0, T_1, \dotsc, T_{h - 1}$.

\begin{algorithm2e}
	\DontPrintSemicolon
	\caption{Streaming Algorithm for $\ksys$-Systems} \label{alg:streaming-k-system}
	\textbf{Input: } a threshold $\tau \in [M, 2M]$, the size $\maxcardinality$ of the largest independent set, and the parameter $k$ of the constraint.\\
	\textbf{Output: } a solution $T \in \cI$\\
	Let $\ell \gets \lfloor \log_2 (4\maxcardinality) \rfloor$ and $h \gets \lceil \log_2(2k+1) \rceil$. \\
	\lFor{$i = 0$ \textbf{to} $\ell$}{Initialize $E_i \gets \emptyset$.}
	\For{every element $u$ arriving}
	{
	Let $m(u) \leftarrow f\left(u \mid \cup_{i=0}^{\ell} E_{i}\right) $.\\
	\leIf{$m(u) > 0$}{Let $i(u) \leftarrow\left\lfloor\log _{2}(\tau / m(u))\right\rfloor$}{Let $i(u) \gets \infty$.}
	\lIf{$0 \leq i(u) \leq \ell $ and $E_{i(u)} + u \in \cI$}{Update $E_{i(u)} \gets E_{i(u)} + u$.}
}
\BlankLine
\For{$j=0$ \textbf{to} $h - 1$}
{
	Let $i \gets j$ and $T_j \gets \emptyset$.\\
	\While{$i \leq \ell$}
	{
		\lWhile{there is an element $u \in E_i$ such that $T_j + u \in \cI$}{Update $T_j \gets T_j + u$.}
		$i \gets i + h$.
}
}
	\Return{the set $T$ maximizing $f$ among $T_0, T_1, \cdots, T_{h - 1}$}.
\end{algorithm2e}

We begin the analysis of \cref{alg:streaming-k-system} by showing that it has the space complexity of a semi-streaming algorithm.
\begin{lemma}
\cref{alg:streaming-k-system} stores $O(\maxcardinality (\log \maxcardinality + \log k)) = \tilde{O}(\maxcardinality)$ elements at every given time point.
\end{lemma}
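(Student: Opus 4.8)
The plan is to account for every element the algorithm keeps in memory and to bound the size of each stored set using the elementary fact that an independent set contains at most $\maxcardinality$ elements. First I would observe that, apart from a constant number of scalar quantities (the parameters $\ell$ and $h$, and the values $m(u)$ and $i(u)$ that are computed on the fly for the element currently being processed), the only data structures the algorithm holds are the element sets $E_0, E_1, \dotsc, E_\ell$ maintained during the streaming phase, together with the candidate output sets $T_0, T_1, \dotsc, T_{h-1}$ that are constructed once the stream ends. Hence it suffices to bound the total number of elements held across these sets at any point in time.

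The key structural observation is that every set the algorithm maintains is independent. An element $u$ is added to $E_{i(u)}$ only when $E_{i(u)} + u \in \cI$, so each $E_i$ remains a member of $\cI$ throughout the execution; likewise, each $T_j$ is grown by adding an element $u$ only when $T_j + u \in \cI$, so each $T_j \in \cI$. Since $\maxcardinality$ is, by assumption, the size of the largest independent set in $\cI$, this yields the uniform bounds $|E_i| \le \maxcardinality$ for every $0 \le i \le \ell$ and $|T_j| \le \maxcardinality$ for every $0 \le j \le h-1$.

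It then remains only to count the sets and multiply. There are $\ell + 1 = \lfloor \log_2(4\maxcardinality) \rfloor + 1 = O(\log \maxcardinality)$ sets of the form $E_i$ and $h = \lceil \log_2(2k+1) \rceil = O(\log k)$ sets of the form $T_j$. Multiplying the number of sets of each type by the common size bound $\maxcardinality$ and adding the two contributions gives at most $O(\maxcardinality \log \maxcardinality) + O(\maxcardinality \log k) = O(\maxcardinality(\log \maxcardinality + \log k)) = \tilde{O}(\maxcardinality)$ stored elements at any time, which is exactly the claimed bound.

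I do not expect a genuine obstacle here: the argument is a direct count of the stored sets combined with a single uniform size bound coming from the definition of $\maxcardinality$. The only points that require mild care are confirming that nothing else contributes asymptotically—namely that the marginal values and index computations are transient and need not be retained, and that the $T_j$ sets are the last data structure introduced. One could even sharpen the $T_j$ contribution by noting that, since each $T_j$ draws from the sets $E_i$ with $i \equiv j \pmod{h}$ and these residue classes are disjoint, we have $\sum_j |T_j| \le \sum_i |E_i|$; but this refinement is unnecessary, as the coarse per-set bound already delivers the stated $\tilde{O}(\maxcardinality)$ space.
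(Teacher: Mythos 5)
Your proof is correct and follows essentially the same argument as the paper: identify that elements are stored only in the sets $E_0,\dotsc,E_\ell$ and $T_0,\dotsc,T_{h-1}$, note that each is kept independent and hence has size at most $\maxcardinality$, and multiply by the number of sets, which is $O(\log\maxcardinality+\log k)$. Nothing further is needed.
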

\begin{proof}
\cref{alg:streaming-k-system} stores elements only in the sets $E_0, E_1, \dotsc, E_\ell$ and the sets $T_0, T_1, \dotsc, T_{h - 1}$. Since these sets are kept independent by the algorithm, each one them contains at most $\maxcardinality$ elements. Thus, the number of elements stored by \cref{alg:streaming-k-system} is upper bounded by
\[
	(\ell + h)\maxcardinality
	=
	[O(\log \maxcardinality + \log k)]\maxcardinality
	=
	O(\maxcardinality (\log \maxcardinality + \log k))
	\enspace.
	\qedhere
\]
\end{proof}

Our next objective to analyze the approximation ratio of \cref{alg:streaming-k-system}. The majority of the work in this analysis is showing that the value of the output set of the algorithm is proportional to $f(E)$, where $E = \cup_{i = 0}^\ell E_i$. However, for such a guarantee to be useful, we first need to show that $f(E)$ is large.
\begin{lemma} \label{lem:E_bound}
For every set $S \in \cI$, $f(E \mid \varnothing) = \sum_{i=0}^{\ell} \sum_{u \in E_i} m(u) \geq \frac{f(S \cup E \mid \varnothing) - \tau/4}{2k+1}$.
\end{lemma}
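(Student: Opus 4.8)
The plan is to first dispose of the easy equality and then reduce the inequality to a single clean bound. The identity $f(E \mid \varnothing) = \sum_{i=0}^{\ell}\sum_{u \in E_i} m(u)$ I would prove by a telescoping argument: order the elements that eventually land in $E = \cup_{i=0}^\ell E_i$ by their insertion time, and observe that when an element $u$ is added, its recorded value $m(u)$ is exactly its marginal with respect to the elements already present in $E$ at that moment (elements are only ever inserted into the buckets, never deleted). Summing these marginals collapses to $f(E) - f(\varnothing)$. For the inequality I would rewrite the target: since $f(S \cup E \mid \varnothing) - f(E \mid \varnothing) = f(S \cup E) - f(E) \le \sum_{s \in S \setminus E} f(s \mid E)$ by submodularity, and since the final union $E$ contains whatever union was present when $s$ arrived, submodularity also gives $f(s \mid E) \le m(s)$. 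Hence it suffices to prove $\sum_{s \in S \setminus E} m(s) \le 2k \cdot f(E \mid \varnothing) + \tau/4$, which rearranges into the claim after adding $f(E\mid\varnothing)$ and dividing by $2k+1$.

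Next I would classify each absent element $s \in S \setminus E$ by the value $i(s)$ and marginal $m(s)$ recorded on its arrival. Since $\{s\} \in \cI$, we have $m(s) \le f(\{s\}) \le M \le \tau$, so $i(s) \ge 0$ whenever $m(s) > 0$; thus nothing is rejected for having too large a marginal. The elements with $i(s) > \ell$ (including those with $m(s)=0$) satisfy $m(s) \le \tau/2^{\ell+1} < \tau/(4\maxcardinality)$ by the choice $\ell = \lfloor \log_2(4\maxcardinality)\rfloor$, and since there are at most $|S| \le \maxcardinality$ of them, together they contribute less than $\tau/4$. Every remaining $s \in S \setminus E$ has $0 \le i(s) \le \ell$, so it must have been rejected because $E_{i(s)} + s \notin \cI$ at arrival; by downward closure of $\cI$ this rejection is inherited by the final bucket, so $E_{i(s)} + s \notin \cI$ for the final $E_{i(s)}$ as well.

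The heart of the argument, and the step I expect to be the main obstacle, is bounding the total marginal of these ``blocked'' elements. I would fix a bucket index $i$, let $B_i$ collect the blocked elements assigned to it, and argue that $|B_i| \le k|E_i|$ via the $k$-set system property. Because $E_i \in \cI$ while $E_i + s \notin \cI$ for every $s \in B_i$, the set $E_i$ is a maximal independent subset (a base) of $E_i \cup B_i$; meanwhile $B_i \subseteq S$ is independent, so it extends to some base of $E_i \cup B_i$ of size at least $|B_i|$. Since all bases of $E_i \cup B_i$ have sizes within a factor $k$ (\cref{def:k-set-system}), this forces $|B_i| \le k|E_i|$. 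The delicate points here are verifying that rejection at arrival genuinely propagates to the final bucket and that $E_i$ really is a base of $E_i \cup B_i$ — these are exactly what license the invocation of the $k$-set system definition.

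Finally I would combine the cardinality bound with the bucketing inequalities, namely $m(s) \le \tau/2^i$ for $s \in B_i$ and $m(u) > \tau/2^{i+1}$ for $u \in E_i$ (both from $i(\cdot)=i$). These give $\sum_{s \in B_i} m(s) \le k|E_i|\,\tau/2^i < 2k \sum_{u \in E_i} m(u)$. Summing over all $i$ bounds the blocked contribution by $2k \cdot f(E\mid\varnothing)$, and adding the $\tau/4$ from the small-marginal elements yields $\sum_{s \in S\setminus E} m(s) < 2k\,f(E\mid\varnothing) + \tau/4$, which is precisely the reduced target. I do not anticipate any subtlety from non-monotonicity, since every inequality I use (submodularity of marginals, \cref{lem:buchbinder}-style reasoning aside) holds for general non-negative submodular $f$; the only quantities that must stay under control are the counting bound from the set system and the geometric grouping of marginals by bucket.
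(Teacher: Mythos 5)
Your proposal is correct and follows essentially the same route as the paper's proof: the telescoping identity for $f(E\mid\varnothing)$, the submodularity reduction of $f(S\cup E)-f(E)$ to $\sum m(s)$, the per-bucket cardinality comparison $|B_i|\leq k|E_i|$ via the $k$-set system property (the paper phrases this as viewing $E_i$ as the unweighted greedy output on a ground set containing the independent set $S_i=\{u\in S\mid i(u)=i\}$, which is the same base-size argument), the geometric bounds $\tau/2^{i+1}<m(u)\leq\tau/2^i$, and the $\tau/4$ tail bound for elements with $i(u)>\ell$. The only (immaterial) difference is that you restrict the counting to the blocked elements $S_i\setminus E$ rather than all of $S_i$, and you make explicit the downward-closure step showing that rejection at arrival persists for the final bucket.
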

\begin{proof}
	First, note that we have $f(E \mid \varnothing) = \sum_{i=0}^{\ell} \sum_{u \in E_i} m(u)$ because $m(u)$ is the marginal contribution of $u$ with respect to the elements that were added to $\cup_{i = 0}^\ell E_i$ before $u$. Let us also define, for every integer $0 \leq i \leq \ell $, $S_i = \{ u \in S \mid i(u) = i \}$. Then,
	\begin{align*}
	f(E \mid \varnothing) = \sum_{i=0}^{\ell} \sum_{u \in E_i} m(u)  & \geq \sum_{i=0}^{\ell} |E_i| \cdot  \dfrac{\tau}{2^{i+1}} 
	\geq \dfrac{1}{k} \cdot \sum_{i=0}^{\ell} |S_i| \cdot \dfrac{\tau}{2^{i+1}} \\
	&\geq \dfrac{1}{2k} \cdot \sum_{i=0}^{\ell} \sum_{u \in S_i} m(u)
	= \dfrac{1}{2k} \cdot \left[\sum_{u \in S} m(u) - \sum_{\substack{u \in S \\ i(u) < 0 \text{ or } i(u) > \ell}} \mspace{-36mu}  m(u) \right] 
	\enspace,
	\end{align*}
	where the first and third inequalities hold since an element $u$ is added to a set $E_i$ only when $i = i(u)$, and the second inequality holds since one can view $E_i$ as the output of running the unweighted greedy algorithm on a ground set which includes the independent set $S_i$ as a subset.
	
	By the submodularity of $f$, we can immediately get
	\[
		\sum_{u \in S} m(u)
		\geq
		\sum_{u \in S} f(u \mid E)
		\geq
		f(S \mid E)
		=
		f(S \cup E \mid \varnothing) - f(E \mid \varnothing)
		\enspace.
	\]
	We also note that $\{u\} \in \cI$ for every element $u \in S$ because $S$ itself is independent, and thus, $\tau \geq M \geq f(\{u\}) \geq m(u)$ (recall that $M$ was defined as $\max_{u \in \cN, \{u\} \in \cI} f(\{u\})$). Hence, $i(u) = \lfloor \log_2(\tau / m(u)) \rfloor \geq 0$, which implies
	\[
		\sum_{\substack{u \in S \\ i(u) < 0 \text{ or } i(u) > \ell}} \mspace{-36mu}  m(u)
		=
		\sum_{\substack{u \in S \\ i(u) > \ell}} m(u)
		\leq
		\sum_{\substack{u \in S \\ i(u) > \ell}} \frac{\tau}{2^{\ell + 1}}
		\leq
		\maxcardinality \cdot \frac{\tau}{2^{\log_2 (4\maxcardinality)}}
		=
		\frac{\tau}{4}
		\enspace.
	\]
	Combining all the above inequalities gives us
	\[
		f(E \mid \varnothing)
		\geq
		\frac{1}{2k} \cdot \left[f(S \cup E \mid \varnothing) - f(E \mid \varnothing) - \frac{\tau}{4} \right] 
		\enspace,
	\]
	and the lemma follows by rearranging this inequality.
\end{proof}

As discussed above, our next objective is to relate the value of the output set of \cref{alg:streaming-k-system} to $f(E)$. As an intermediate step, we relate $f(T_j)$ to the sum of the $m(u)$ values of the elements $u$ that belong to the sets $E_i$ that are combined to create $T_j$ (recall that these are exactly the sets $E_i$ for which $i \equiv j \pmod{h}$). In the next lemma we assume that the constraint $(\cN, \cI)$ is a $k$-system. Naturally, the lemma holds also for constraints that are $k$-extendible systems, but for such constraints it is possible to get a better bound on $f(T_j)$ using a more careful analysis, and this bound appears below as Lemma~\ref{lem:T_extendible}.

Intuitively, the next lemma holds because when \cref{alg:streaming-k-system} adds elements of a set $E_i$ to a set $T_j$, this increases the size of the set $T_j$ to at least $E_i/k$ (since the constraint is $k$-set system). Thus, either about $1/k$ of the elements of $E_i$ are added to $T_j$, or the size of $T_j$ before the addition of the elements of $E_i$ is already significant compared to the size of $E_i$. Moreover, in the later case, the elements of $T_j$ can pay for the elements of $E_i$ that they have blocked because they all have a relatively high value (as they originate in a set $E_{i'}$ for some $i' \leq i - h$).

\begin{lemma} \label{lem:T_system}
	If $(\cN, \cI)$ is a $k$-set system, then for every integer $0 \leq j < h$ we have
	\[
		f(T_j \mid \varnothing) \geq \frac{1}{4k} \cdot \mspace{-18mu} \sum_{\substack{0 \leq i \leq \ell \\ i \equiv j \mspace{-18mu} \pmod{h}}} \mspace{-9mu} \sum_{u \in E_i} m(u)
		\enspace.
	\]
\end{lemma}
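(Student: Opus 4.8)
The plan is to prove the bound in two independent pieces: a lower bound on $f(T_j \mid \varnothing)$ in terms of the arrival-time marginals $m(u)$ of the elements actually selected into $T_j$, and an upper bound on the total weight $\sum_{i \equiv j \pmod{h}} \sum_{u \in E_i} m(u)$ in terms of the same selected elements. Throughout I fix $j$, write $G_i = T_j \cap E_i$ for the level-$i$ elements that survive into $T_j$, and set $T^{(<i)} = \bigcup_{i' < i,\, i' \equiv j} G_{i'}$ for the partial solution built from all earlier relevant levels. I will repeatedly use the level estimate that an element $u$ placed in $E_i$ satisfies $\tau/2^{i+1} < m(u) \le \tau/2^i$, which is immediate from $i = i(u) = \lfloor \log_2(\tau/m(u)) \rfloor$.

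For the value bound, the naive idea---summing the marginals of the elements of $T_j$ in the order the greedy loop of \cref{alg:streaming-k-system} adds them---does not obviously work: that order processes levels in increasing $i$, and the set of elements preceding $u$ in the greedy order may contain elements that arrived \emph{after} $u$ in the stream, so submodularity cannot compare the greedy marginal with $m(u)$. I expect this to be the main obstacle, and the fix is to instead order the elements of $T_j$ by their \emph{stream arrival time}, say $u_1, \dots, u_t$. Since \cref{alg:streaming-k-system} never deletes an element from any $E_i$, every $u_r$ with $r<s$ already lies in $\bigcup_i E_i$ when $u_s$ arrives; hence $\{u_1, \dots, u_{s-1}\}$ is contained in the set against which $m(u_s)$ was evaluated, and submodularity gives $f(u_s \mid \{u_1,\dots,u_{s-1}\}) \ge m(u_s)$. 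Telescoping then yields $f(T_j \mid \varnothing) \ge \sum_{u \in T_j} m(u) > \sum_{i \equiv j} |G_i| \cdot \tau/2^{i+1} = \tfrac{1}{2} \sum_{i \equiv j} |G_i|\, \tau/2^{i}$.

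For the counting bound I would argue level by level. Once the greedy loop finishes scanning $E_i$, no remaining element of $E_i$ can be added, so $T^{(i)} := T^{(<i)} \cup G_i$ is a base of $T^{(<i)} \cup E_i$; meanwhile $E_i$ is independent and extends to some base $B \supseteq E_i$ of the same set. The $k$-set system property applied to the ground set $T^{(<i)} \cup E_i$ (largest base at most $k$ times smallest base) gives $|E_i| \le |B| \le k\,|T^{(i)}| = k\,(|T^{(<i)}| + |G_i|)$. Combining this with $\sum_{u \in E_i} m(u) \le \tau\,|E_i|/2^i$, summing over $i \equiv j$, and exchanging the order of summation in the $|T^{(<i)}|$ term reduces the task to the geometric estimate $\sum_{i > i',\, i \equiv j \pmod{h}} \tau/2^{i} \le \tfrac{\tau}{2^{i'}} \cdot \tfrac{1}{2^{h}-1}$; since $h = \lceil \log_2(2k+1) \rceil$ forces $2^{h}-1 \ge 2k$, this is at most $\tfrac{\tau}{2^{i'}} \cdot \tfrac{1}{2k}$. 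Carrying the factor $k$ through gives $\sum_{i \equiv j}\sum_{u \in E_i} m(u) \le (k + \tfrac{1}{2}) \sum_{i \equiv j} |G_i|\, \tau/2^{i}$.

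Finally I would combine the two pieces. The value bound says $f(T_j \mid \varnothing) \ge \tfrac{1}{2} \sum_{i\equiv j} |G_i|\,\tau/2^{i}$, while the counting bound says the total weight is at most $(k+\tfrac{1}{2})$ times that same quantity; dividing gives $f(T_j \mid \varnothing) \ge \tfrac{1}{2k+1}\sum_{i\equiv j}\sum_{u\in E_i} m(u)$, which is stronger than the claimed $\tfrac{1}{4k}$ bound for every $k \ge 1$ because $2k+1 \le 4k$. The slack between $1/(2k+1)$ and $1/(4k)$ is precisely what absorbs the factor-of-two variation of $m(u)$ within a single level, so the coarser statement of the lemma is comfortably covered.
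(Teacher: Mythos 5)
Your proof is correct, and its first half---ordering the elements of $T_j$ by stream arrival time, noting that no element is ever deleted from the sets $E_i$, and telescoping via submodularity to obtain $f(T_j \mid \varnothing) \geq \sum_{u \in T_j} m(u)$---is exactly the paper's opening step. Where you genuinely diverge is in how the selected elements pay for the discarded ones. The paper proves the strengthened statement~\eqref{eq:induction_T} by induction over the levels $i \equiv j \pmod{h}$, carrying the extra summand $|T_j^i|\tau/(2^{i+2}k)$ as a credit that is cashed in when the next level is processed; you instead extract the single inequality $|E_i| \leq k\,(|T^{(<i)}| + |G_i|)$ from the base-size comparison in the $k$-set system (the same comparison the paper uses to argue $|T_j^i| \geq |E_i|/k$), sum it over all relevant levels, and exchange the order of summation so that the contribution of the $|T^{(<i)}|$ terms collapses into a geometric series controlled by $2^h - 1 \geq 2k$. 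The two arguments rest on identical combinatorial inputs---the arrival-order telescoping, the base comparison, and the geometric decay across levels spaced $h$ apart---so yours is essentially a de-amortized, non-inductive reorganization of the paper's proof. What it buys you is a cleaner global accounting and a sharper constant, $1/(2k+1)$ in place of $1/(4k)$, which would propagate a constant-factor improvement through Corollary~\ref{cor:bound-T-E-tau-k-set} and Theorem~\ref{thm:system} without changing the asymptotic $O(k^2 \log k)$ guarantee; the cost is that the bookkeeping with $G_i$, $T^{(<i)}$, and the summation exchange is slightly heavier notationally than the paper's one-line induction step.
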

\begin{proof}
Since $T_j$ is a subset of $E$, if we denote by $v_1, v_2, \cdots, v_m$ the element of $T_j$ in the order their arrival, then the submodularity of $f$ guarantees that
	\[ f(T_j \mid \varnothing) = \sum_{r=1}^{m} f(v_r \mid v_1, v_2, \cdots, v_{r-1}) \geq  \sum_{r=1}^{m} m(v_r) = \sum_{u \in T_j} m(u) \enspace. \]
Thus, to prove the lemma it suffice to prove
\begin{equation} \label{eq:target_T}
	\sum_{u \in T_j} m(u)
	\geq
	\frac{1}{4k} \cdot \mspace{-18mu} \sum_{\substack{0 \leq i \leq \ell \\ i \equiv j \mspace{-18mu} \pmod{h}}} \mspace{-9mu} \sum_{u \in E_i} m(u)
	\enspace.
\end{equation}

We prove Inequality~\eqref{eq:target_T} by proving a stronger claim via induction. However, before we can present this stronger claim, we need to define some additional notation. Recall that $T_j$ is constructed by starting with the empty set, greedily adding to it elements of $E_j$, then greedily adding to it elements of $E_{j + h}$, then greedily adding to it elements of $E_{j + 2h}$ and so on. Thus, let us define, for every integer $0 \leq i \leq \ell$ obeying $i \equiv j \pmod{h}$, the set $T^i_j$ to be the set $T_j$ immediately after \cref{alg:streaming-k-system} is done greedily adding elements of $E_i$ to $T_j$. Additionally, it is useful to define $T^{j - h}_j$ to be the empty set. Using these definitions, we can now define the stronger claim that we prove below by induction.

For every integer $-h \leq i \leq \ell$ and $j = i \bmod h$,
\begin{equation} \label{eq:induction_T}
	\sum_{u\in T_j^i} m(u) \geq \dfrac{1}{4k} \cdot \mspace{-18mu} \sum_{\substack{j \leq r \leq i \\ r \equiv j \mspace{-18mu} \pmod{h}}} \mspace{-9mu} \sum_{u \in E_r} m(u) + \dfrac{|T_j^i| \tau}{2^{i+2} k} \enspace.
\end{equation}
Before we prove this claim, let us observe that it indeed implies Inequality~\eqref{eq:target_T} by setting $i$ to be the largest integer that obeys $i \equiv j \pmod{h}$ and is not larger than $\ell$.

It now remains to prove Inequality~\eqref{eq:induction_T} by induction on $i$.
	For $i < 0$, this inequality holds since $j \geq 0 > i$ and $T_j^i = \emptyset$, which implies that the value of both sides of the inequality is $0$. Next, we need to prove Inequality~\eqref{eq:induction_T} for an integer $0 \leq i \leq \ell$ under the assumption that it holds for every $-h \leq i' < i$. Recall that the set $T_j^i$ is obtained by greedily adding elements of $E_i$ to $T_j^{i-h}$. Since the constraint is a $k$-system, the size of the set obtained in this way must be at least $|E_i|/k$ (otherwise, $T_j^i$ is a base of $E_i \cup T_j^{i - h}$ whose size is smaller than the size the independent set $E_i$ by more than a factor of $k$). Thus, we know that the number of elements of $E_i$ that are added to $T_j^{i - h}$ to form $T_j^i$ is at least $|E_i|/k - |T_j^{i-h}|$, which implies
	\begin{align*}
 \sum_{u \in T_j^{i} \backslash T_j^{i-h}} m(u) & \geq\left[\frac{\left|E_{i}\right|}{k}
 -\left|T_j^{i-h}\right|\right] \cdot \frac{\tau}{2^{i+1}}\\ &
 =\frac{\sum_{u \in E_{i}} \tau / 2^{i+2}}{k}+\frac{\left|E_{i}\right| \tau}{2^{i+2} k} -
 \frac{\left|T_j^{i-h}\right| \tau}{2^{i+1}}  \geq \frac{\sum_{u \in E_{i}} m(u)}{4 k}
 +\frac{\left|E_{i}\right| \tau}{2^{i+2} k}
 -\frac{\left|T_j^{i-h}\right| \tau}{2^{i+1}} \enspace,
	\end{align*}
	where the two inequality hold since $\tau/2^{i} \geq m(u) \geq \tau/2^{i+1}$ for every element $u \in E_i$.
	
Adding the induction hypothesis for $i - h$ to the above inequality, we get
\begin{align*}
 \sum_{u \in T_j^{i}} m(u) & \geq \frac{\sum_{u \in E_{i}} m(u)}{4 k}+\frac{\left|E_{i}\right| \tau}{2^{i+2} k}-
 \frac{\left|T_j^{i-h}\right| \tau}{2^{i+1}}+\dfrac{1}{4k} \cdot \mspace{-18mu} \sum_{\substack{j \leq r \leq i - h \\ r \equiv j \mspace{-18mu} \pmod{h}}} \mspace{-9mu} \sum_{u \in E_r} m(u) + \dfrac{|T_j^{i - h}| \tau}{2^{i-h+2} k} \\ 
 &=\dfrac{1}{4k} \cdot \mspace{-18mu} \sum_{\substack{j \leq r \leq i \\ r \equiv j \mspace{-18mu} \pmod{h}}} \mspace{-9mu} \sum_{u \in E_r} m(u)+\frac{\left|E_{i}\right| \tau}{2^{i+2} k}+\frac{\left|T_j^{i-h}\right| \tau}{2^{i+2} k}\left(2^{h}-2 k\right) 
 \\ & \geq \dfrac{1}{4k} \cdot \mspace{-18mu} \sum_{\substack{j \leq r \leq i \\ r \equiv j \mspace{-18mu} \pmod{h}}} \mspace{-9mu} \sum_{u \in E_r} m(u)+\frac{\left|E_{i}\right| \tau}{2^{i+2} k}+\frac{\left|T_j^{i-h}\right| \tau}{2^{i+2} k} \geq \dfrac{1}{4k} \cdot \mspace{-18mu} \sum_{\substack{j \leq r \leq i \\ r \equiv j \mspace{-18mu} \pmod{h}}} \mspace{-9mu} \sum_{u \in E_r} m(u)+\frac{\left|T_j^{i}\right| \tau}{2^{i+2 k}} \enspace,
 \end{align*}
 where the second inequality holds by the definition of $h$, and the last inequality holds since every element of $T_j^i$ must belong either to $E_i$ or to $T_j^{i-h}$.
\end{proof}

\begin{corollary} \label{cor:bound-T-E-tau-k-set}
	If $(\cN, \cI)$ is a $k$-set system, then \cref{alg:streaming-k-system} returns a set $T$ such that
	\[ 	f(T) \geq \dfrac{f(E \cup U) - \tau/4}{4kh(2k+1)} = \dfrac{f(E \cup U) - \tau/4}{O(k^2\log k)}\]
	for every set $U \in \cI$.
\end{corollary}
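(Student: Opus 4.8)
The plan is to combine the two main lemmas of the section, \cref{lem:T_system} and \cref{lem:E_bound}, and to account for the fact that the output $T$ is chosen as the \emph{best} of the $h$ candidate sets $T_0, \dotsc, T_{h-1}$. First I would observe that since \cref{alg:streaming-k-system} returns the set maximizing $f$ among $T_0, \dotsc, T_{h-1}$, we have $f(T) \geq \frac{1}{h} \sum_{j=0}^{h-1} f(T_j)$; this averaging step is where the factor $h$ in the denominator enters. Applying \cref{lem:T_system} to each $T_j$ and summing over $j$, the double sums over the residue classes $i \equiv j \pmod h$ reassemble into a single sum over all $0 \leq i \leq \ell$, giving
\[
	\sum_{j=0}^{h-1} f(T_j \mid \varnothing) \geq \frac{1}{4k} \sum_{i=0}^{\ell} \sum_{u \in E_i} m(u) = \frac{1}{4k} \cdot f(E \mid \varnothing) \enspace,
\]
where the final equality is the identity $f(E \mid \varnothing) = \sum_{i=0}^{\ell} \sum_{u \in E_i} m(u)$ recorded at the start of the proof of \cref{lem:E_bound}. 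Dividing by $h$ then yields $f(T \mid \varnothing) \geq \frac{f(E \mid \varnothing)}{4kh}$.

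Next I would substitute the lower bound on $f(E \mid \varnothing)$ from \cref{lem:E_bound}, applied with the independent set $S$ taken to be the given set $U \in \cI$. This gives $f(E \mid \varnothing) \geq \frac{f(U \cup E \mid \varnothing) - \tau/4}{2k+1}$, and chaining the two inequalities produces
\[
	f(T \mid \varnothing) \geq \frac{f(U \cup E \mid \varnothing) - \tau/4}{4kh(2k+1)} \enspace.
\]
Since $f(T \mid \varnothing) = f(T)$ and $f(U \cup E \mid \varnothing) = f(E \cup U)$ (both relative to the baseline $f(\varnothing) \geq 0$, which only helps), this is exactly the claimed bound, and the $O(k^2 \log k)$ form of the denominator follows from $h = \lceil \log_2(2k+1) \rceil = O(\log k)$.

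The steps here are essentially bookkeeping: the genuine mathematical content lives in the two lemmas already proved, so the only points requiring care are the reindexing of the summation when the per-class bounds of \cref{lem:T_system} are added together (checking that every $E_i$ with $0 \leq i \leq \ell$ appears exactly once, since the residue classes modulo $h$ partition these indices) and the consistent passage between the marginal notation $f(\cdot \mid \varnothing)$ and the plain values $f(\cdot)$. I do not expect a serious obstacle; the main thing to verify is that the averaging argument legitimately converts the per-$T_j$ guarantees into a guarantee for the single returned set $T$, which it does precisely because $T$ is selected to maximize $f$.
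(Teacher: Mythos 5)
Your proposal is correct and follows essentially the same route as the paper's own proof: average over the $h$ candidates, sum the per-class bounds of \cref{lem:T_system} into $\tfrac{1}{4k} f(E \mid \varnothing)$, and then apply \cref{lem:E_bound} with $S = U$. The only point to state a bit more carefully is the last step, since $f(\cdot \mid \varnothing) = f(\cdot) - f(\varnothing)$ rather than $f(\cdot)$; the paper handles this by carrying the $f(\varnothing)$ term explicitly and discarding it at the end via non-negativity, which is the precise form of your remark that the baseline ``only helps.''
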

\begin{proof}
Since the output set of \cref{alg:streaming-k-system} is the best set among $T_0, T_1, \dotsc, T_{h - 1}$, we get
	\begin{align*}
		f(T) = \max_{0 \leq j < h} f(T_j) & \geq \dfrac{\sum_{j=0}^{h - 1} f(T_j)}{h} 
		\geq \dfrac{f(\varnothing) + \sum_{j = 0}^{h - 1} \sum_{i \in \{0 \leq i \leq \ell \mid i \equiv j \mspace{-18mu} \pmod{h}\}} \sum_{u \in E_i} m(u)}{4kh}  \\
		& = \dfrac{f(\varnothing) + \sum_{i=0}^{\ell} \sum_{u \in E_i} m(u)}{4kh} = \dfrac{f(\varnothing) + f(E \mid \varnothing)}{4kh} \geq \dfrac{f(E \cup U) - \tau/4}{4kh(2k+1)} \enspace,
	\end{align*}
	where the second inequality follows from Lemma~\ref{lem:T_system}, and the last inequality follows from Lemma~\ref{lem:E_bound} and the non-negativity of $f$.
\end{proof}

Using the last corollary and the framework described in \cref{sec:framework}, we can now prove our result for $k$-system constraints.
\begin{theorem} \label{thm:system}
There is a streaming $O(k^2 \log k) = \tilde{O}(k^2)$-approximation algorithm for the problem of maximizing a non-negative submodular function subject to a $k$-set system constraint.
\end{theorem}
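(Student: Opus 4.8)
The plan is to feed \cref{alg:streaming-k-system} into the framework of \cref{sec:framework} as the streaming subroutine \AlgStream, pairing it with a known offline algorithm for $k$-set systems as \AlgConstrained. First I would recast \cref{cor:bound-T-E-tau-k-set} in the language of \cref{def:req-cond}. The corollary states that the output $T$ satisfies $f(E\cup U)\le 4kh(2k+1)\cdot f(T)+\tau/4$ for every $U\in\cI$, where $E=\cup_{i=0}^{\ell}E_i$. Taking the returned pair to be $S:=T$ and $A:=E$ (note that $T\subseteq E$ and $T\in\cI$, so $S\subseteq A\subseteq\cN$ meets the syntactic requirements of \cref{def:req-cond}), this is exactly the statement that \cref{alg:streaming-k-system} is an $(\alpha,\gamma)$-approximation algorithm with $\alpha=4kh(2k+1)=O(k^2\log k)$ and $\gamma=\tau/4$; since the algorithm is deterministic, the expectations in \cref{def:req-cond} are vacuous. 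For \AlgConstrained I would use the offline $(k+o(k))$-approximation algorithm for maximizing a non-negative submodular function subject to a $k$-set system due to \cite{feldman2017greed}, so that $\beta=O(k)$. Crucially, \AlgConstrained runs only on the set $A=E$, which has $\tilde{O}(\rho)$ elements, hence it uses nearly-linear space.

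Next I would invoke \cref{theorem:non-monotone-streaming} with these parameters, obtaining a streaming algorithm whose output $S$ obeys $\bE[f(S)]\ge \frac{(r-1)\Opt-r\gamma}{r\alpha+r(r-1)\beta/2}$. The point requiring care is controlling the additive term $\gamma=\tau/4$: since $\tau\le 2M$ and the singleton attaining $M$ is itself feasible, we have $M\le\Opt$ and hence $\gamma\le\Opt/2$. I would then fix $r$ to be a small constant (any $r\ge 3$ works; e.g.\ $r=4$), which makes the numerator $(r-1)\Opt-r\gamma\ge (r/2-1)\Opt=\Omega(\Opt)$ while keeping the denominator $r\alpha+r(r-1)\beta/2=O(\alpha+\beta)=O(k^2\log k)$. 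Combining these yields $\bE[f(S)]=\Omega\!\left(\Opt/(k^2\log k)\right)$, i.e.\ an $O(k^2\log k)$-approximation. Because $r$ is constant, \AlgStream is semi-streaming, and \AlgConstrained is nearly-linear space, the remark following \cref{theorem:non-monotone-streaming} guarantees that the resulting algorithm is itself a (semi-)streaming algorithm.

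The main obstacle, and the step I would spend the most care on, is the bookkeeping that turns the additive slack $\gamma$ into a purely multiplicative loss: one must certify $M\le\Opt$ so that a constant number of copies $r$ suffices to dominate the $r\gamma$ term in the numerator without inflating the denominator beyond $O(k^2\log k)$. A secondary, more routine point is discharging the pre-access assumptions of \cref{alg:streaming-k-system} (the exact value $\rho$ and a threshold $\tau\in[M,2M]$); here I would appeal to the parallel-guessing version in \cref{app:full_algorithm}, which removes these assumptions at the cost of only an $O(\log\rho+\log k)$ factor in space while leaving the approximation ratio unchanged.
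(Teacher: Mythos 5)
Your proposal is correct and follows essentially the same route as the paper: it interprets \cref{cor:bound-T-E-tau-k-set} as saying that \cref{alg:streaming-k-system} is an $(O(k^2\log k),\tau/4)$-approximation algorithm with $A=E$, plugs this into \cref{theorem:non-monotone-streaming} with a constant $r$ (the paper uses $r=4$) and the $(k+O(\sqrt{k}))$-approximation offline algorithm of~\cite{feldman2017greed}, and controls the additive term via $\tau\le 2M\le 2\Opt$. The only cosmetic difference is that the paper also records a direct shortcut for the monotone case before running the same framework argument for the general case.
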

\begin{proof}
If the objective function is monotone, then the theorem follows immediately from Corollary~\ref{cor:bound-T-E-tau-k-set} by setting $U$ to be the optimal solution, since this choice implies that the output set of \cref{alg:streaming-k-system} has a value of at least 
\[
	\dfrac{f(E \cup U) - \tau/4}{O(k^2\log k)}
	\geq
	\dfrac{\Opt - \Opt/2}{O(k^2\log k)}
	=
	\frac{\Opt}{O(k^2\log k)}
	\enspace,
\]
where the inequality holds since $\tau \leq 2M = 2\max_{u \in \cN, \{u\} \in \cI} f(\{u\}) \leq 2\Opt$ because $\{u\}$ is a candidate set to be $\Opt$ whenever it is feasible.

Otherwise, if the objective function is non-monotone, then we observe that Corollary~\ref{cor:bound-T-E-tau-k-set} implies that \cref{alg:streaming-k-system} is an $(O(k^2 \log k), \tau / 4)$-approximation algorithm when we take $A = E$. 
Thus, by setting $r = 4$, using \cref{alg:streaming-k-system} as \AlgStream and using the $(k + O(\sqrt{k}))$-approximation algorithm \textsc{RepeatedGreedy} due to~\cite{feldman2017greed} (mentioned in \cref{sec:application}) as \AlgConstrained, we get via our framework a streaming algorithm whose output set is guaranteed to have a value of at least
\[
	\dfrac{(r-1) \cdot \Opt - r\gamma}{r\alpha +  r(r-1)\beta/2}
	=
	\dfrac{3 \cdot \Opt - \tau}{O(k^2 \log k) +  6(k + O(\sqrt{k}))}
	=
	\dfrac{3 \cdot \Opt - \tau}{O(k^2 \log k)}
	\geq
	\frac{\Opt}{O(k^2 \log k)}
	\enspace.
	\qedhere
\]
\end{proof}

As promised, we now prove a stronger version of Lemma~\ref{lem:T_system} for $k$-extendible constraints. This version takes advantage of the stronger guarantee of the unweighted greedy algorithm for such constraints, which is given by Lemma~\ref{lem:greedy_improved_result}.

\begin{lemma} \label{lem:T_extendible}
	If $(\cN, \cI)$ is a $k$-extendible system, then for every integer $0 \leq j < h$ we have
	\[
		f(T_j \mid \varnothing) \geq \frac{1}{k} \cdot \mspace{-18mu} \sum_{\substack{0 \leq i \leq \ell \\ i \equiv j \mspace{-18mu} \pmod{h}}} \mspace{-9mu} \sum_{u \in E_i} m(u)
		\enspace.
	\]
\end{lemma}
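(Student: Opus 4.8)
The plan is to follow the same two-stage structure as the proof of \cref{lem:T_system}, changing only the combinatorial heart of the argument. As there, submodularity gives $f(T_j \mid \varnothing) \geq \sum_{u \in T_j} m(u)$, so it suffices to lower bound $\sum_{u \in T_j} m(u)$ by $\tfrac1k \sum_{i \equiv j} \sum_{u \in E_i} m(u)$. I would again argue level by level along the sets $E_i$ with $i \equiv j \pmod h$, writing $T_j^{i}$ for the partial solution right after \cref{alg:streaming-k-system} finishes pouring $E_i$ into $T_j$ (and $T_j^{i-h}=\varnothing$ at the first level), and prove a strengthened inductive inequality in the spirit of \eqref{eq:induction_T}, but with the factor $\tfrac1{4k}$ replaced by $\tfrac1k$ and with a re-tuned potential term.

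The single place where $k$-extendibility enters is the bound on how many elements of $E_i$ can be blocked while greedily extending $T_j^{i-h}$ by $E_i$, and here I would invoke \cref{lem:greedy_improved_result} rather than the crude $k$-system counting used in \cref{lem:T_system}. Concretely, restrict the constraint to the ground set $T_j^{i-h} \cup E_i$; this restricted system is again $k$-extendible, and $T_j^{i}$ is exactly the base of it produced by the unweighted greedy algorithm (the elements of $T_j^{i-h}$, being independent, are all accepted first, after which the elements of $E_i$ are scanned). Applying \cref{lem:greedy_improved_result} to this base with the independent test set $A = E_i$ gives $k \cdot |T_j^{i} \setminus E_i| \geq |E_i \setminus T_j^{i}|$. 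Since the sets $E_0,\dots,E_\ell$ are pairwise disjoint we have $T_j^{i-h} \cap E_i = \varnothing$, hence $T_j^{i} \setminus E_i = T_j^{i-h}$, and the bound becomes $|E_i \setminus T_j^{i}| \leq k\,|T_j^{i-h}|$. Equivalently, the number of elements of $E_i$ poured into $T_j$ at this level is at least $|E_i| - k\,|T_j^{i-h}|$, a bound far stronger than the $|E_i|/k - |T_j^{i-h}|$ available for general $k$-systems. Because every such element $u$ satisfies $m(u) \geq \tau/2^{i+1}$, this yields $\sum_{u \in T_j^{i}\setminus T_j^{i-h}} m(u) \geq \bigl(|E_i| - k\,|T_j^{i-h}|\bigr)\,\tau/2^{i+1}$.

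With this inequality in hand, the proof reduces to the same inductive bookkeeping as in \cref{lem:T_system}. I would carry a potential term proportional to $|T_j^{i}|\,\tau/2^{i}$ and add the induction hypothesis for $i-h$ to the per-level gain above: the newly collected $m$-mass pays for the $\tfrac1k \sum_{u\in E_i} m(u)$ increment of the target sum (using $m(u) \leq \tau/2^{i}$ on $E_i$), while the deficit $-\,k\,|T_j^{i-h}|\,\tau/2^{i+1}$ coming from blocked elements must be absorbed by the jump in potential between levels $i-h$ and $i$. I expect this absorption step to be the main obstacle: it is exactly here that the choice $h = \lceil \log_2(2k+1)\rceil$ is used, since the potential scales by a factor of $2^{h}$ when the level index drops by $h$, and $2^{h} \geq 2k+1$ is precisely what dominates the blocking factor $k$ and keeps the geometric recursion contracting (this mirrors the step in \cref{lem:T_system} where $2^{h} - 2k \geq 1$ was invoked, now with the stronger per-level bound in place of the $k$-system one). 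Summing the telescoped potentials and discarding the nonnegative leftover potential at the top level then gives the claimed inequality; since each $T_j^{i}$ and each $E_i$ is independent, no feasibility issues arise along the way.
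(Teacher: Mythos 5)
Your proposal is correct and follows essentially the same route as the paper's proof: the identical reduction via submodularity to $\sum_{u\in T_j} m(u)$, the same level-by-level induction on the partial solutions $T_j^i$ with a potential term of the form $|T_j^i|\tau/2^{i+2}$, and the same key step of viewing $T_j^i$ as the unweighted greedy output on $T_j^{i-h}$ followed by $E_i$ and invoking \cref{lem:greedy_improved_result} with $A=E_i$ to get $|E_i\setminus T_j^i|\le k\,|T_j^{i-h}|$, after which $2^h\ge 2k+1$ absorbs the deficit exactly as you anticipate. The only cosmetic difference is that the paper's inductive claim carries the constant $\tfrac14$ on the $m$-mass rather than your proposed $\tfrac1k$ (the per-level gain $|E_i|\tau/2^{i+1}$ naturally splits as $\tfrac14\sum_{u\in E_i}m(u)$ plus a potential contribution), which is what the argument actually delivers.
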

	\begin{proof}
	We use in this lemma the notation defined in the proof of Lemma~\ref{lem:T_system}. Furthermore, the same arguments used in the proof of Lemma~\ref{lem:T_system} to show that Lemma~\ref{lem:T_system} follows from Inequality~\eqref{eq:induction_T} can also be used to show that the current lemma follows from the following claim. For every integer $-h \leq i \leq \ell$ and $j = i \bmod h$,
	\begin{equation} \label{eq:induction_T_extension}
	\sum_{u\in T_j^i} m(u) \geq \dfrac{1}{4} \cdot \mspace{-18mu} \sum_{\substack{j \leq r \leq i \\ r \equiv j \mspace{-18mu} \pmod{h}}} \mspace{-9mu} \sum_{u \in E_r} m(u) + \dfrac{|T_j^i| \tau}{2^{i+2}}
	\quad
	\enspace.
\end{equation}
Thus, the rest of this proof is devoted to proving this claim by induction on $i$.
	
	For $i < 0$, Inequality~\eqref{eq:induction_T_extension} holds since $j \geq 0 > i$ and $T_j^i = \emptyset$, which implies that the value of both sides of the inequality is $0$. Next, we need to prove Inequality~\eqref{eq:induction_T_extension} for an integer $0 \leq i \leq \ell$ under the assumption that it holds for every $-h \leq i' < i$.	Recall that the set $T_j^i$ is obtained by starting with $T_j^{i-h}$, and then greedily adding elements of $E_i$ to it. Thus, $T_j^i$ can be viewed as the output of the greedy algorithm when this algorithm is given the elements of $T_j^{i-h}$ first, and then the elements of $E_i$. Given this point of view, since $E_i$ is independent, Lemma~\ref{lem:greedy_improved_result} guarantees
	\[  |E_i \setminus T_j^i| \leq k \cdot |T_j^i \setminus E_i | = k \cdot |T_j^{i-h}| \enspace .  \]
	Hence,
	\begin{align*} \sum_{u \in T_j^{i} \backslash T_j^{i-h}} \mspace{-18mu} m(u) &=\sum_{u \in E_{i} \cap T_j^{i}} \mspace{-9mu} m(u) \geq\left|E_{i} \cap T_j^{i}\right| \cdot \frac{\tau}{2^{i+1}}=\left[\left|E_{i}\right|-\left|E_{i} \setminus T_j^{i}\right|\right] \cdot \frac{\tau}{2^{i+1}} \\ & \geq \sum_{u \in E_{i}} \frac{\tau}{2^{i+2}}+\frac{\left|E_{i}\right| \cdot \tau}{2^{i+2}}-\frac{k \tau\left|T_j^{i-h}\right|}{2^{i+1}} \geq \frac{\sum_{u \in E_{i}} m(u)}{4}+\frac{\left|E_{i}\right| \cdot \tau}{2^{i+2}}-\frac{k \tau \cdot\left|T_j^{i-h}\right|}{2^{i+1}} \enspace,
	 \end{align*}
	 where the first and third inequalities hold since $\tau/2^i \geq m(u) \geq \tau/2^{i+1}$ for every element $u \in E_i$.
	 
Adding the induction hypothesis for $i - h$ to the above inequality, we get
	\begin{align*} 
		\sum_{u \in T_j^{i}} m(u) & \geq \frac{\sum_{u \in E_{i}} m(u)}{4}+\frac{\left|E_{i}\right| \cdot \tau}{2^{i+2}}-\frac{k \tau \cdot\left|T_j^{i-h}\right|}{2^{i+1}}+\dfrac{1}{4} \cdot \mspace{-18mu} \sum_{\substack{j \leq r \leq i - h \\ r \equiv j \mspace{-18mu} \pmod{h}}} \mspace{-9mu} \sum_{u \in E_r} m(u)+\frac{\left|T_j^{i-h}\right| \tau}{2^{i-h+2}} \\ 
		&
		=\dfrac{1}{4} \cdot \mspace{-18mu} \sum_{\substack{j \leq r \leq i \\ r \equiv j \mspace{-18mu} \pmod{h}}} \mspace{-9mu} \sum_{u \in E_r} m(u)+
		\frac{\left|E_{i}\right| \cdot \tau}{2^{i+2}}+\frac{\left|T_j^{i-h}\right| \cdot \tau}{2^{i+2} } \left(2^{h}-2 k\right) \\ 
		& \geq \dfrac{1}{4} \cdot \mspace{-18mu} \sum_{\substack{j \leq r \leq i \\ r \equiv j \mspace{-18mu} \pmod{h}}} \mspace{-9mu} \sum_{u \in E_r} m(u)+\frac{\left|E_{i}\right| \cdot \tau}{2^{i+2}}+\frac{\left|T_{i-h}\right| \cdot \tau}{2^{i+2}} \geq \dfrac{1}{4} \cdot \mspace{-18mu} \sum_{\substack{j \leq r \leq i \\ r \equiv j \mspace{-18mu} \pmod{h}}} \mspace{-9mu} \sum_{u \in E_r} m(u)+\frac{\left|T_{i}\right| \cdot \tau}{2^{i+2}} \enspace,
	\end{align*}
where the second inequality holds by the definition of $h$, and the last inequality holds since every element of $T_j^i$ belongs either to $E_i$ or to $T_j^{i-h}$.	
	\end{proof}
	
	Using Lemma~\ref{lem:T_extendible} we can prove the following theorem. We omit the proof of this theorem since it is identical to the proof of \cref{thm:system}, except for the use Lemma~\ref{lem:T_extendible} instead of Lemma~\ref{lem:T_system}.
	
\begin{theorem} \label{thm:extendible}
There is a streaming $O(k \log k) = \tilde{O}(k)$-approximation algorithm for the problem of maximizing a non-negative submodular function subject to a $k$-extendible system constraint.
\end{theorem}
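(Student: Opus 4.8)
The plan is to follow the proof of \cref{thm:system} almost verbatim, performing the single substitution of Lemma~\ref{lem:T_extendible} for Lemma~\ref{lem:T_system}. The first step I would carry out is to establish the $k$-extendible analog of Corollary~\ref{cor:bound-T-E-tau-k-set}. Concretely, I would repeat the derivation of that corollary but invoke the sharper per-residue-class bound $f(T_j \mid \varnothing) \geq \frac{1}{k}\sum_{i \equiv j} \sum_{u \in E_i} m(u)$ of Lemma~\ref{lem:T_extendible} in place of the $\frac{1}{4k}$ bound of Lemma~\ref{lem:T_system}. Averaging $f(T) = \max_{0 \le j < h} f(T_j) \geq \frac{1}{h}\sum_{j} f(T_j)$ over the $h$ residue classes, and then applying Lemma~\ref{lem:E_bound} to lower bound $f(E \mid \varnothing)$, should produce a statement identical to Corollary~\ref{cor:bound-T-E-tau-k-set} but with an improved denominator, namely
\[
	f(T) \geq \frac{f(E \cup U) - \tau/4}{O(k \log k)} \qquad \text{for every } U \in \cI \enspace.
\]

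With this corollary in hand, the remaining two steps are transcribed directly from \cref{thm:system}. For a monotone objective I would set $U$ to an optimal solution; since $f(E \cup U) \geq f(U) = \Opt$ and $\tau/4 \leq M/2 \leq \Opt/2$ (because $\tau \leq 2M \leq 2\Opt$), the corollary yields $f(T) \geq \frac{\Opt - \Opt/2}{O(k\log k)} = \frac{\Opt}{O(k\log k)}$. For a general objective I would instead read the corollary, with $A = E$, as the assertion that \cref{alg:streaming-k-system} is an $(O(k\log k), \tau/4)$-approximation algorithm in the sense of Definition~\ref{def:req-cond}. Feeding this into the framework of \cref{theorem:non-monotone-streaming} with $r = 4$, using \cref{alg:streaming-k-system} as \AlgStream and the $(k + O(\sqrt{k}))$-approximation algorithm \textsc{RepeatedGreedy} as \AlgConstrained, gives an output of value at least
\[
	\frac{(r-1)\cdot \Opt - r\gamma}{r\alpha + r(r-1)\beta/2} = \frac{3\cdot \Opt - \tau}{O(k\log k) + 6(k + O(\sqrt{k}))} = \frac{3\cdot \Opt - \tau}{O(k\log k)} \geq \frac{\Opt}{O(k\log k)} \enspace,
\]
where the last inequality again uses $\tau \leq 2\Opt$. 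As $r = 4$ is constant, the resulting algorithm is a semi-streaming algorithm by the remark following \cref{theorem:non-monotone-streaming}.

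The only genuinely new work, and the step I expect to be the main obstacle, is the extendible corollary of the first paragraph---specifically, verifying that the $k$-extendible structure really does strip one factor of $k$ from the guarantee relative to the $k$-set-system case. This saving is funneled entirely through Lemma~\ref{lem:T_extendible}, whose proof leans on the strengthened greedy guarantee of Lemma~\ref{lem:greedy_improved_result} (the cardinality inequality $k\cdot|B \setminus A| \geq |A \setminus B|$). The delicate point to confirm is that the two places where a factor of $k$ could a priori enter---the extraction of the feasible sets $T_j$ from the $E_i$, and the bound of Lemma~\ref{lem:E_bound} relating $f(E)$ to $f(S \cup E)$---together contribute only a single factor of $k$ (times the $h = O(\log k)$ from the level averaging), rather than the two independent factors of $k$ that appear in the $k$-set-system analysis. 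Once this corollary is established, the rest of the argument is a line-by-line copy of the proof of \cref{thm:system}.
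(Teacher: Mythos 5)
Your proposal follows exactly the paper's route: the paper omits the proof of this theorem, stating only that it is identical to the proof of \cref{thm:system} with Lemma~\ref{lem:T_extendible} used in place of Lemma~\ref{lem:T_system}, and your three steps (re-derive Corollary~\ref{cor:bound-T-E-tau-k-set} with the new lemma, then transcribe the monotone and non-monotone cases) are precisely that recipe. However, the one step you yourself single out as the delicate point does not close as you have written it. You invoke Lemma~\ref{lem:T_extendible} in the form $f(T_j \mid \varnothing) \geq \frac{1}{k}\sum_{i \equiv j}\sum_{u \in E_i} m(u)$, i.e., with the constant $\frac{1}{k}$ that appears in the lemma's displayed statement. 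Substituting this for the $\frac{1}{4k}$ of Lemma~\ref{lem:T_system} in the derivation of Corollary~\ref{cor:bound-T-E-tau-k-set} merely replaces the denominator $4kh(2k+1)$ by $kh(2k+1)$, which is still $O(k^2\log k)$: the factor of $k$ from the extraction of $T_j$ and the factor $2k+1$ from Lemma~\ref{lem:E_bound} still multiply, so your first display does not follow from the inequality you quote. To obtain the claimed $O(k\log k)$ denominator you need a per-residue-class bound with a constant independent of $k$, namely $f(T_j \mid \varnothing) \geq \frac{1}{4}\sum_{i\equiv j}\sum_{u\in E_i} m(u)$; this is what the proof of Lemma~\ref{lem:T_extendible} actually establishes (take $i$ to be the largest admissible index in inequality~\eqref{eq:induction_T_extension}), and the $\frac{1}{k}$ in that lemma's statement is evidently a typo for $\frac{1}{4}$. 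Once you cite the $\frac{1}{4}$ form, the rest of your argument (the choice $U = S^*$ with $\tau \leq 2\Opt$ in the monotone case, and the $(O(k\log k), \tau/4)$-approximation reading with $A = E$, $r = 4$, and \textsc{RepeatedGreedy} in the general case) goes through verbatim as in \cref{thm:system}.
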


\section{Experiments} \label{sec:experiments}

In this section, we compare our proposed algorithms with two other groups of algorithms: other streaming algorithms and state-of-the-art \emph{offline} algorithms. For the streaming algorithms we consider three algorithms: 
1) streaming-greedy: this algorithm keeps a solution $S$ which is initially set to the empty set. For every incoming element $u$, it is added to the set $S$ if this does not violate feasibility (i.e., $S \cup \{u\} \in \cI$). 
2)  Inspired by the streaming algorithms of \cite{chekuri2015streaming, buchbinder2019online}, we consider a heuristic preemptive algorithm. 
For every incoming element $u$, this algorithm generates a set $U \subseteq S$ such that $(S \cup \{u\}) \setminus U$ is feasible under the non-knapsack constraints. The element $u$ is then added to the solution in exchange for the elements of $U$ if this does not violate the knapsack constraints and the exchange is beneficial in the sense that $f(u \mid S) \geq  \sum_{u' \in U} f(u':S)$, where $f(u':S) = f(u' \mid S')$ for $S' = \{s \in S : \textrm{element } s \textrm{ arrived before } u'\}$. For more detail refer to \cite{chekuri2015streaming, feldman2018doless}. 
3) sieve-streaming: this heuristic algorithm is implemented based on the ideas of \cite{badanidiyuru2014streaming}. 
In the first step, it finds an accurate estimation of $\Opt$. Then, each incoming element $u$ is added to the solution $S$ if $S \cup \{ u\} \in \cI$ and $f(u \mid S) \geq \Opt / (2\maxcardinality)$, where $\maxcardinality$ is the maximum cardinality of a feasible solution. 
For the offline algorithms we consider 1) the vanilla greedy algorithm, 2) Fast \cite{badanidiyuru2014fast} and FANTOM \cite{mirzasoleiman2016fast}.
Both the Fast and FANTOM algorithms are designed to maximize submodular functions under a $p$-system constraint combined with $\ell$ knapsack constraints.

\cref{sec:independent-set,sec:planarity} compare the above algorithms on tasks of maximizing linear and cut objective functions over instances produced using synthetic and real-world data, respectively. Then, in \cref{sec:movie,sec:yelp,sec:twitter}, we evaluate the performance of the same algorithms on three different real-world applications.
In a movie recommendation system application, we are given movie ratings from users, and our goal is to recommend  diverse movies from different genres.
In a Yelp location summarization application, we are given thousands of business locations with several related attributes. Our objective is to find a good summary of the locations from the following cities: Charlotte, Edinburgh, Las Vegas, Madison, Phoenix and  Pittsburgh.
In a third application, our goal is to generate real-time summaries for Twitter feeds of several news agencies with the following Twitter accounts (also known as ``handles''): @CNNBrk, @BBCSport, @WSJ, @BuzzfeedNews, @nytimes, @espn.

\subsection{Independent set} \label{sec:independent-set}

In the experiments of this section we define submodular functions over the nodes of a given graph $G= (V, E)$, and consider the maximization of such functions subject to an independent set constraint, i.e., we are not allowed to select a set of vertices if there is any edge of the graph connecting any two of these vertices. It is easy to show that this constraint is a $d_{\max}$-extendible system, where $d_{\max}$ is the maximum degree in graph $G$. In our experiments in this section, we use two types of synthetically generated random graphs: Erd\H{o}s R\'eny graphs  \cite{erdos1960on} and Watts–Strogatz  graphs \cite{watts1998collective}. For the Erd\H{o}s R\'eny graphs we vary in our experiments the probability $p$ that each possible edge is included in the graph (independently from every other edge), and for the Watts-Strogatz graphs we vary the rewiring probability $\beta$. The number of nodes is set to $n=2000$ in all the graphs, and in the Watts–Strogatz model each node is connected to $k=100$ nearest neighbors in the ring structure. For more detail regarding these random graph models refer to \cite{hofstad2016random}.

In our first experiment of the section, we study the maximization of the following monotone linear function
\begin{align} \label{eq:linear-graph}
f(S) = \sum_{u \in S} w_u \enspace,
\end{align}
where $S \subseteq V$ and $w_u$ is the weight of node $u \in S$. \cref{fig:f-gnp-sw,fig:f-ws-sw} compare different algorithms for optimizing this function over a random graph chosen from the above discussed random graph models. 
We observe that our proposed algorithm consistently outperforms the other baseline streaming algorithms. We also observe that the performance our algorithm is comparable with (or even better at times) the greedy algorithm, which is provably optimal for the maximization of linear functions subject to $k$-extendible constraints~\cite{feldman2017greed}.

In the second experiment, we consider the non-monotone submodular graph-cut function:
\begin{align} \label{eq:cut-graph}
	f(S) = \sum_{u \in S} \sum_{v \in V \setminus S} w_{u,v} \enspace,
\end{align}
where $w_{u,v} $ is the weight of the edge $e = (u,v)$.
Again, in \cref{fig:f-gnp-cut,fig:f-ws-cut} we observe that the solutions provided by our algorithms are clearly better than those produced by the other streaming algorithms. 
Furthermore, the non-monotone version of our algorithm always outperforms the monotone algorithm. We also note that the for this non-monotone submodular function, the vanilla greedy algorithm performs very poorly (which is consistent with the lack of a theoretical guarantee for this algorithm for such functions).

\begin{figure*}[ht] 
	\centering  
	\subfloat[Erd\H{o}s R\'eny (linear)] {\includegraphics[height=30.9mm]{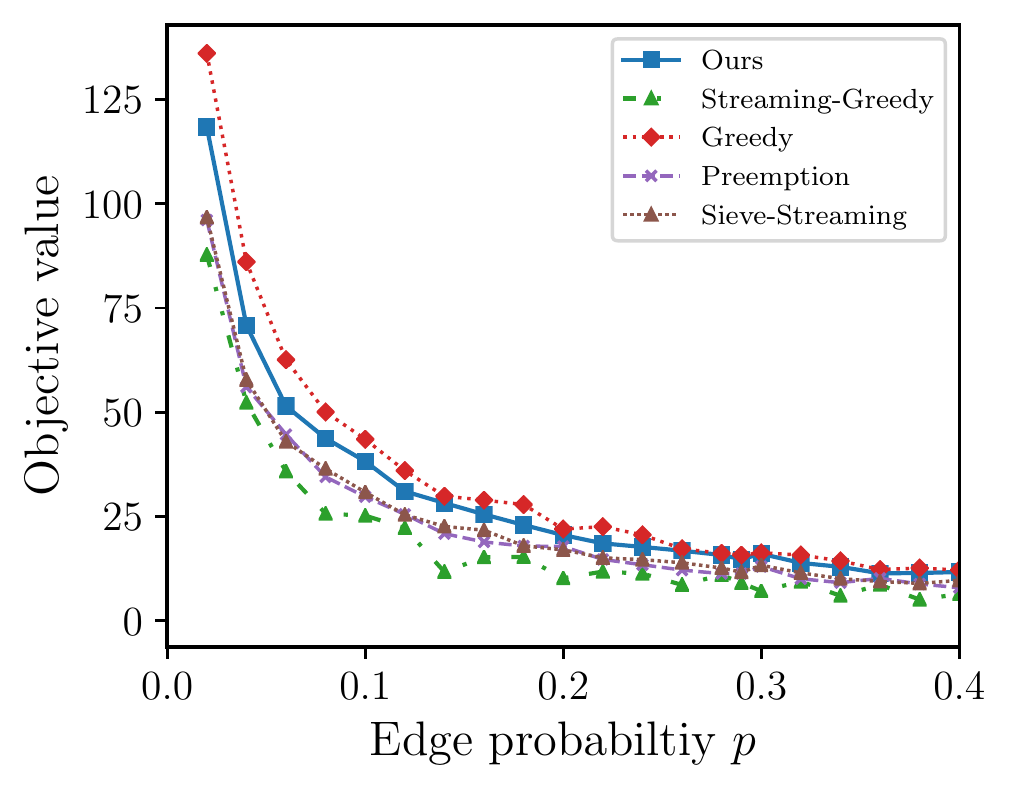}\label{fig:f-gnp-sw}} 
	\subfloat[Watts–Strogatz (linear)]	{\includegraphics[height=30.9mm]{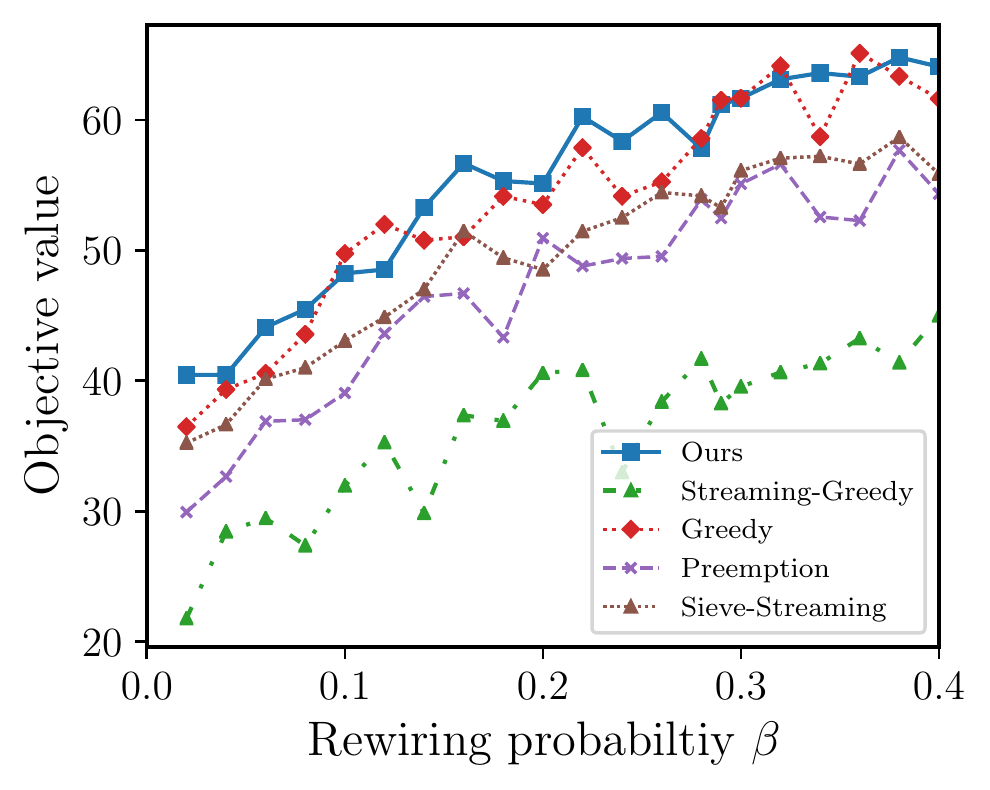}\label{fig:f-ws-sw}}
		\subfloat[Erd\H{o}s R\'eny (cut)]	{\includegraphics[height=30.9mm]{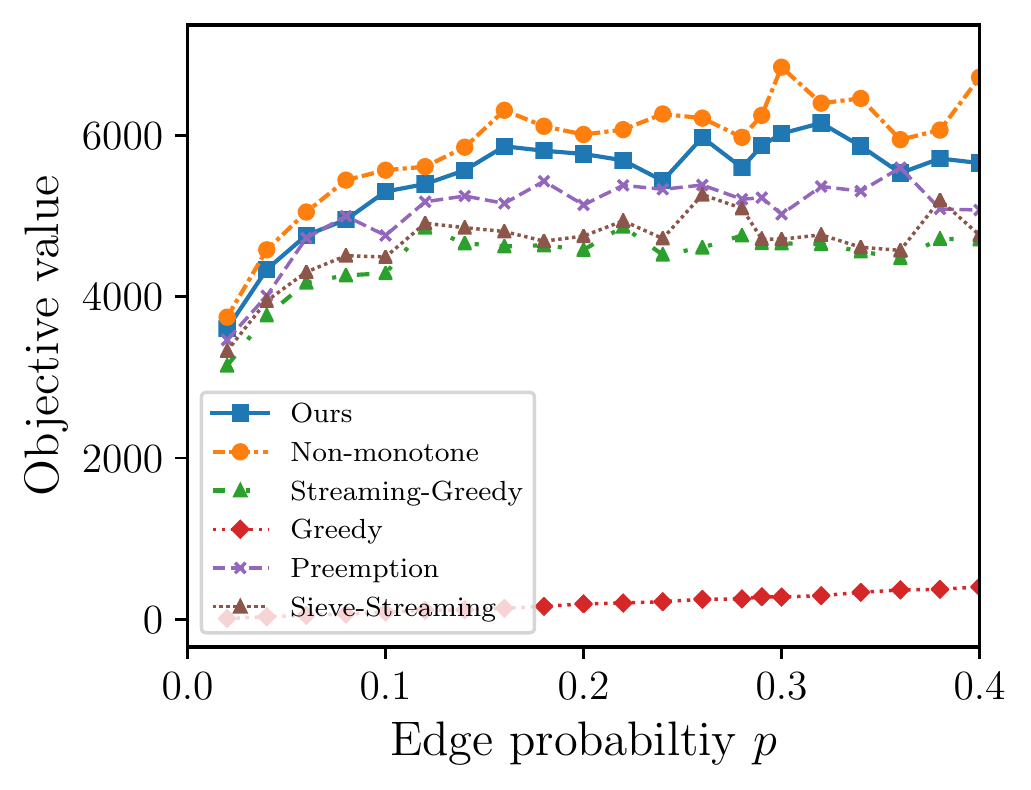}\label{fig:f-gnp-cut}}
	\subfloat[Watts–Strogatz (cut)]	{\includegraphics[height=30.9mm]{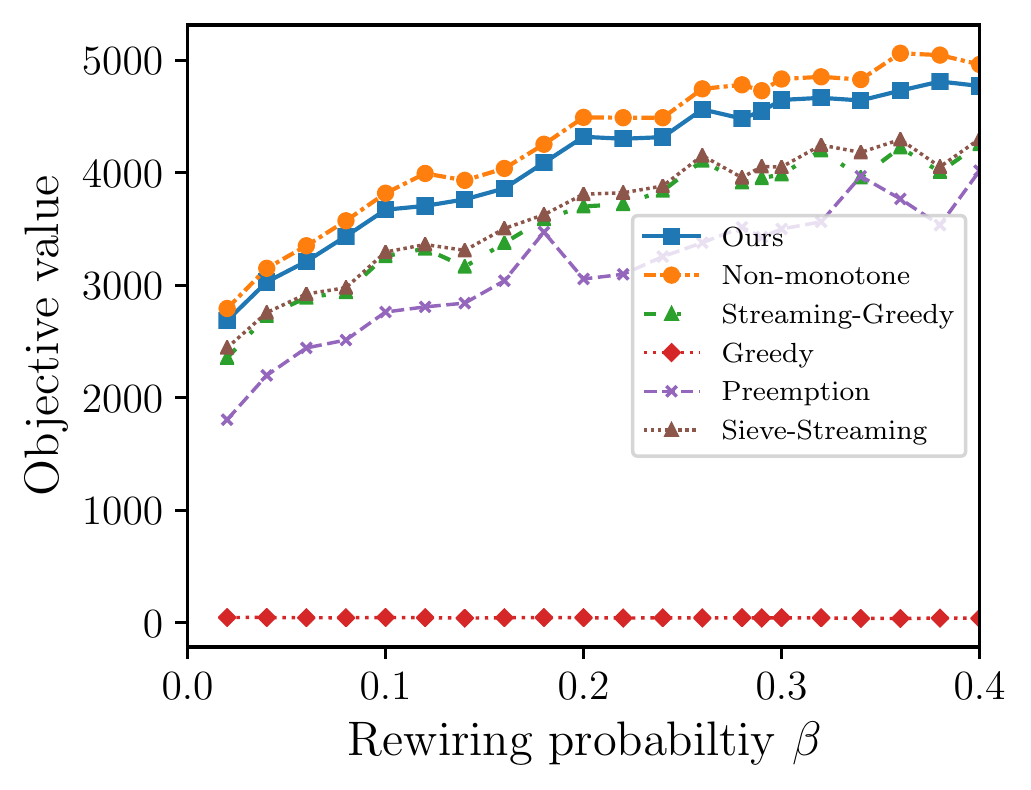}\label{fig:f-ws-cut}}
		\caption{For Erd\H{o}s R\'eny graphs $p$ is the probability of having an edge between any two nodes. For Watts–Strogatz graphs $\beta$ is the probability of rewiring of each edge.\label{fig:independet-set}}
\end{figure*}

\subsection{Graph Planarity with Knapsack} \label{sec:planarity}

In the experiment of this section, our objective is to maximize a linear function over the edges of a graph $G = (V,E)$. For the constraint, we require that an independent set of edges corresponds to a planar sub-graph of $G$, and in addition, it satisfies a given knapsack constraint. 
We remind the reader that a graph is planar if it can be embedded in the plane.
Furthermore, a knapsack constraint is defined by a cost function $c\colon \cN \to \bR_{\geq 0}$, and we say that a set $S \subseteq \cN$ satisfies the knapsack constraint if $c(S) = \sum_{e \in S} c(e) \leq b$ for a given knapsack budget $b$.

Before describing the experiments we did in more detail, let us explain why the above mentioned constraint is $k$-system for a (relatively) modest value of $k$. It is straightforward to show that a single knapsack constraint is a $\ceil{\nicefrac{c_{\max}}{c_{\min}}}$-extendible system, and consequently a  $\ceil{\nicefrac{c_{\max}}{c_{\min}}}$-set system, where $c_{\max} = \max_{e \in \cN} c(e)$ and $c_{\min} = \min_{e \in \cN} c(e)$. We complement this with Lemmata~\ref{lemma:planar} and~\ref{lemma:two-k-systems}, which prove that the planarity constraint is a $3$-set system and that the intersection of a $k_1$-set system and a $k_2$-set system is a $(k_1 + k_2)$-set system, respectively. We would like to thank Chandra Chekuri for pointing Lemma~\ref{lemma:planar} to us. We also would like to state that Lemma~\ref{lemma:two-k-systems} is similar to well-known properties of more restricted classes of set systems, but we are not aware of a previously published explicit proof of it for general $k$-set systems.

\begin{lemma} \label{lemma:planar}
	For every graph $G = (V, E)$, the system $\cM = (E, \cI)$, where $\cI= \{S \subseteq E \mid (V,S) \text{ is a planar}\allowbreak \text{ graph} \}$, is a $3$-set system.
\end{lemma}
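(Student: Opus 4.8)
The plan is to verify directly the two requirements of \cref{def:k-set-system}. That $\cM$ is an independence system is immediate: the empty graph $(V, \varnothing)$ is planar, and every subgraph of a planar graph is planar, so $\cI$ is nonempty and downward closed. All the work lies in bounding the ratio of base sizes, so I would fix an arbitrary edge subset, which I rename $F \subseteq E$ to avoid clashing with the edge set $E$, and let $V_F$ be the set of endpoints of the edges of $F$. The goal becomes showing that the largest and smallest bases of $F$ differ in size by at most a factor of $3$.

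The central structural observation is that planarity decomposes over connected components: a set $S \subseteq F$ is independent if and only if $S \cap E(C)$ is independent for every connected component $C$ of $(V_F, F)$. Hence a base of $F$ is exactly a disjoint union of bases of the individual components, so the largest base of $F$ is the union of the largest bases of the components, and likewise for the smallest base. It therefore suffices to prove the factor-$3$ relation within a single component on $n_C$ vertices and then sum the resulting inequalities over all components.

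For one component I would combine an upper and a lower bound. The upper bound comes from Euler's formula: a simple planar graph on $n_C \ge 3$ vertices has at most $3 n_C - 6$ edges, so no base of the component exceeds this size. For the lower bound I would show that any maximal planar subgraph $S_C$ is connected and spanning: if an edge $e \in E(C) \setminus S_C$ joined two distinct connected components of $(V_F, S_C)$, then $S_C + e$ would remain planar (attaching an edge between two components of a planar graph preserves planarity), contradicting maximality. Thus $|S_C| \ge n_C - 1$, and since $3 n_C - 6 \le 3(n_C - 1)$, every base of the component has size at most three times that of every other base; the cases $n_C \le 2$ are trivial because such a component contributes at most one edge.

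I expect the main obstacle to be resisting the temptation to argue globally: comparing the global Euler bound $3|V_F| - 6$ with the global spanning-forest lower bound $|V_F| - c$, where $c$ counts the components, only gives a factor approaching $6$ when $F$ splits into many small pieces. The per-component decomposition is exactly what recovers the sharp constant $3$, and a little care is needed for the degenerate components (those with $n_C = 2$), where the Euler bound $3 n_C - 6$ is vacuous but the trivial count suffices.
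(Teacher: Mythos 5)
Your proof is correct and follows essentially the same route as the paper's: both reduce to a single connected component, pair the Euler upper bound $3n-6$ against the spanning lower bound $n-1$ obtained from maximality, handle the degenerate small components separately, and then sum over components to keep the sharp constant $3$. The only difference is cosmetic---you work with the vertex set $V_F$ spanned by the chosen edge subset, which slightly cleans up the paper's implicit restriction to those vertices.
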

\begin{proof}
	Note that $\cM$ is downward closed, because a subgraph of a planar graph is also planar. Furthermore, the empty set is always a member of $\cI$ because a graph with no edges is planar. Thus, we concentrate on proving that $\cM$ obeys the remaining property of $k$-set systems. Formally, for an arbitrary set $E' \subseteq E$, and two arbitrary bases $B_1$ and $B_2$ of $E'$ (i.e., subsets of $E'$ which are independent, and no other edge of $E'$ can be added to them without violating independence), we need to show $|B_1| / |B_2| \leq 3$.
	
	Assume first, for simplicity, that $(V, E')$ is connected. This implies that $(V, B_2)$ is also connected (otherwise, we can add to it any edge connecting two different connected components without violating planarity, which contradicts the fact that it is a base of $E'$), and thus, the size of $B_2$ must be at least $|V| - 1$.
	Additionally, it is well-known that, using Euler's formula, it is possible to show that the number of edges in a planar graph is at most $3 |V| - 6$ as long as $|V| \geq 3$. Thus, $|B_1| \leq 3|V| - 6$ as long as $|V| \geq 3$. For $|V| < 3$, we still get $|B_1| \leq 3|V| - 3$ because a graph with a single vertex can include no edges and a graph with two vertices can include at most a single edge. Combining all these observations, we now get
	\[ \dfrac{|B_1|}{|B_2|} \leq \dfrac{3 |V| - 3}{|V| - 1} = 3 \enspace. \]   	
	
	Consider now the case in which $(V, E')$ has more than one connected component. In this case we can use the above argument for each component of $(V, E')$. Thus, if we denote by $m$ the number of connected components of this graph, where $V_i$ is the set of vertices of the $i$-th component, then we get
	\[ \dfrac{|B_1|}{|B_2|} \leq \dfrac{\sum_{i=1}^{m} 3 |V_i| - 3}{\sum_{i=1}^{m}|V_i| - 1} = 3 \enspace. \qedhere \]   
\end{proof}


\begin{lemma} \label{lemma:two-k-systems}
	Let $\cM_1 = (\cN, \cI_1)$ and $\cM_2 = (\cN, \cI_2)$ be a $k_1$-set system and a $k_2$-set system, respectively, over the same ground set $\cN$. Then, the set system $\cM = (\cN, \cI_1 \cap \cI_2)$ is a $(k_1 + k_2)$-set system.
\end{lemma}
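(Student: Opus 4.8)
The plan is to verify the two independence-system axioms and then establish the defining base-ratio property of $(k_1+k_2)$-set systems directly from the corresponding properties of $\cM_1$ and $\cM_2$. The axioms are immediate: $\varnothing \in \cI_1 \cap \cI_2$ since $\varnothing$ lies in both, and downward closure of the intersection follows because each $\cI_i$ is downward closed, so any subset of a set in $\cI_1 \cap \cI_2$ lies in both $\cI_1$ and $\cI_2$. The substantive work is to show that for an arbitrary $E \subseteq \cN$, any two bases of $(E, 2^E \cap (\cI_1 \cap \cI_2))$ differ in size by at most a factor of $k_1 + k_2$.

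Fix such a ground set $E$ and let $B$ and $B'$ be two bases of $E$ with respect to the intersection constraint; I may assume $|B| \geq |B'|$ and aim to prove $|B| \leq (k_1 + k_2)|B'|$. The key idea is to compare $B$ against an extension of $B'$ inside each of the two constituent systems. Since $B'$ is $\cI_1 \cap \cI_2$-maximal in $E$ but not necessarily maximal in either $\cM_1$ or $\cM_2$ alone, I would extend $B'$ to a base $C_1$ of the set $B \cup B'$ in the system $\cM_1$ restricted to $B \cup B'$, and similarly extend $B'$ to a base $C_2$ of $B \cup B'$ in $\cM_2$. Because $B'$ is already a subset of these bases and $B$ itself is independent in both $\cM_1$ and $\cM_2$, the $k_i$-set-system property applied to the ground set $B \cup B'$ bounds $|C_i|$: the set $B$ is an independent subset of $B \cup B'$ in $\cM_i$, so it is contained in some base of $B \cup B'$ in $\cM_i$ whose size is at least $|B|$, while $C_i$ is a base of the same ground set, giving $|B| \leq k_i |C_i|$ (equivalently, all bases of $B \cup B'$ in $\cM_i$ have size at least $|B|/k_i$, and $C_i$ is one such base).

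The crucial combinatorial step is to control $|C_1| + |C_2|$ in terms of $|B'|$. Every element of $C_i \setminus B'$ is an element that could be added to $B'$ while preserving membership in $\cI_i$; but since $B'$ is maximal in the intersection, no single element of $(B \cup B') \setminus B'$ can be added to $B'$ while staying in both $\cI_1$ and $\cI_2$. Hence each element of $B \setminus B'$ fails independence in at least one of the two systems when added to $B'$, which means it cannot lie in $C_1 \cap C_2 \setminus B'$; this yields $|C_1 \setminus B'| + |C_2 \setminus B'| \leq |B \setminus B'| \leq |B|$, and therefore $|C_1| + |C_2| \leq 2|B'| + |B|$. I would then combine this with the two inequalities $|B| \leq k_i|C_i|$ from the previous step. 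The main obstacle is making this counting argument airtight—specifically, ensuring that the extensions $C_1, C_2$ can be chosen so that the elements of $B$ they capture are accounted for without double-counting, so that summing $|B|/k_1 + |B|/k_2 \leq |C_1| + |C_2|$ against the bound $|C_1|+|C_2| \leq 2|B'| + |B|$ cleanly rearranges to $|B| \leq (k_1+k_2)|B'|$. I expect the arithmetic to close once the counting inequality is established, so the real care lies in the choice of the bases $C_1$ and $C_2$ and the argument that each element of $B \setminus B'$ is excluded from at least one of them.
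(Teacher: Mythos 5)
Your setup is right and the key combinatorial observation is exactly the one the paper uses: since $B'$ is maximal for the intersection, every element of $B \setminus B'$ is ``blocked'' by at least one of the two systems, which is what makes the sets $C_1 \setminus B'$ and $C_2 \setminus B'$ disjoint. However, the way you invoke the $k_i$-set-system property does not close. From $|B| \leq k_i|C_i|$ and $|C_1|+|C_2| \leq 2|B'| + |B \setminus B'|$ you obtain $|B|(1/k_1 + 1/k_2) \leq 2|B'| + |B|$, i.e.\ $|B|(1/k_1 + 1/k_2 - 1) \leq 2|B'|$. Already for $k_1 = k_2 = 2$ the left-hand coefficient is $0$ and the inequality is vacuous, and for $k_1 = 1$, $k_2 = 2$ it yields $|B| \leq 4|B'|$ rather than the required $3|B'|$; only the matroid case $k_1 = k_2 = 1$ comes out right. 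The loss is structural, not a matter of sharpening constants: comparing $B$ against a base $C_i$ of the \emph{whole} set $B \cup B'$ multiplies all of $|C_i|$ (including its $|B'|$ part and whatever portion of $B$ it absorbed) by $k_i$, which is too expensive.

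The fix, which is what the paper does, is to apply the $k_i$-property over a \emph{smaller} ground set in the opposite direction. Let $D_i = \{u \in B \setminus B' \mid B' + u \notin \cI_i\}$ be the elements blocked by $\cM_i$ (your disjointness observation is precisely $D_1 \cup D_2 = B \setminus B'$, so $|D_1|+|D_2| \geq |B \setminus B'|$). Then $B'$ is itself a base of $D_i \cup B'$ with respect to $\cM_i$ (no element of $D_i$ can be added to it in $\cI_i$), while $U_i = D_i \cup (B \cap B')$ is a subset of $B$ and hence independent in $\cM_i$ and contained in $D_i \cup B'$. The $k_i$-system property applied to the ground set $D_i \cup B'$ therefore gives $|D_i| + |B \cap B'| = |U_i| \leq k_i|B'|$, and summing over $i \in \{1,2\}$ yields $|B| \leq |D_1| + |D_2| + |B \cap B'| \leq (k_1+k_2)|B'|$. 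Note that here only the blocked elements and the overlap pay the factor $k_i$, which is why the arithmetic closes.
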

\begin{proof}
	For every subset $F \subseteq \cN$, let $\cB_\cM(F)$ be the set of bases of $F$ with respect to $\cM$. It is clear that $\cI_1 \cap \cI_2$ is down-monotone and contains the empty set. Thus, to prove the lemma we only need to prove that for every subset $F \subseteq \cN$
	\begin{equation} \label{eq:ratio}
	\frac{\max_{B \in \cB_\cM(F)} |B|}{\min_{B \in \cB_\cM(F)} |B|}
	\leq
	k_1 + k_2
	\enspace.
	\end{equation}
	
	Towards this goal, let us define $B_\ell = \arg \max_{B \in \cB_\cM(F)} |B|$ and $B_s = \arg \min_{B \in \cB_\cM(F)} |B|$. For every $i \in \{1, 2\}$, let $D_i$ be the set of elements of $B_\ell$ that do not belong to $B_s$ and cannot be added to $B_s$ without violating independence with respect to $\cM_i$. Formally, $D_i = \{u \in B_\ell \setminus B_s \mid B_s + u \not \in \cI_i\}$. Since $B_s$ is a base of $\cM$, every element of $B_\ell \setminus B_s$ must belong either to $D_1$ or to $D_2$. Hence, we get
	\[
	|D_1| + |D_2|
	\geq
	|B_\ell \setminus B_s|
	\enspace.
	\]
	
	Observe now that for every $i \in \{1, 2\}$ the set $U_i = D_i \cup (B_\ell \cap B_s)$ is a subset of $B_\ell$, and thus, independent with respect to $\cM_i$. Moreover, the definition of $D_i$ implies that $B_s$ is a base of $U_i \cup B_s$ with respect to $\cM_i$. Since $\cM_i$ is a $p_i$-system, this implies that the size of the independent set $U_i$ is upper bounded by $p_i \cdot |B_s|$. Thus, we get
	\[
	|D_i| + |B_\ell \cap B_s|
	=
	|U_i|
	\leq
	p_i \cdot |B_s|
	\quad
	\forall\; i \in \{1, 2\}
	\enspace.
	\]
	
	Combining the above inequalities, we get
	\[
	|B_\ell|
	\leq
	|D_1| + |D_2| + |B_\ell \cap B_s|
	\leq
	p_1 \cdot |B_s| + p_2 \cdot |B_s| - |B_\ell \cap B_s|
	\leq
	(p_1 + p_2) \cdot |B_s|
	\enspace,
	\]
	which proves Inequality~\eqref{eq:ratio} due to the definitions of $B_s$ and $B_\ell$.
\end{proof}

We now get back to the experiment of this section. Recall that in this experiment the goal is to maximize a submodular function $f$ under the combination of a graph planarity constraint and a single knapsack constraint $c$. For the objective function $f$, we use the monotone linear function:
\[f(S) = \sum_{e \in S} w_e \quad \forall S \subseteq E \enspace, \] 
where $w_e$ is the weight of edge $e \in S$, and for simplicity, we set all these weights to $1$. 
The knapsack cost of each edge $e = (u, v) \in E$ is chosen to be proportional to $\max(1, d_u  - q)$, where $d_u$ is the degree of node $u$ in graph $G$ and $q=6$, and the costs are normalized so that $\sum_{e \in E} c_e = |V|$, where $c_e$ represents the knapsack cost of edge $e$.

In the experiment, we use four real-world networks from \cite{snapnets} as the graph, and vary the knapsack budget between $0$ and $1$ (note that the normalization gives this range of budgets an intuitive meaning). In \cref{fig:planarity_linear} we compare the performance of our streaming algorithm with the performance of Streaming Greedy and Sieve Streaming. One can observe that our algorithm outperforms the two other baselines. Due to the prohibitive computational complexity of the offline algorithms, we do not report their results for this experiment. Furthermore, as it is not clear how to execute a preemptive streaming algorithm under a planarity constraint, we did not include a version of the Preemption algorithm in this experiment.

\begin{figure*}[ht] 
	\centering  
	\subfloat[Social graph]	{\includegraphics[height=32mm]{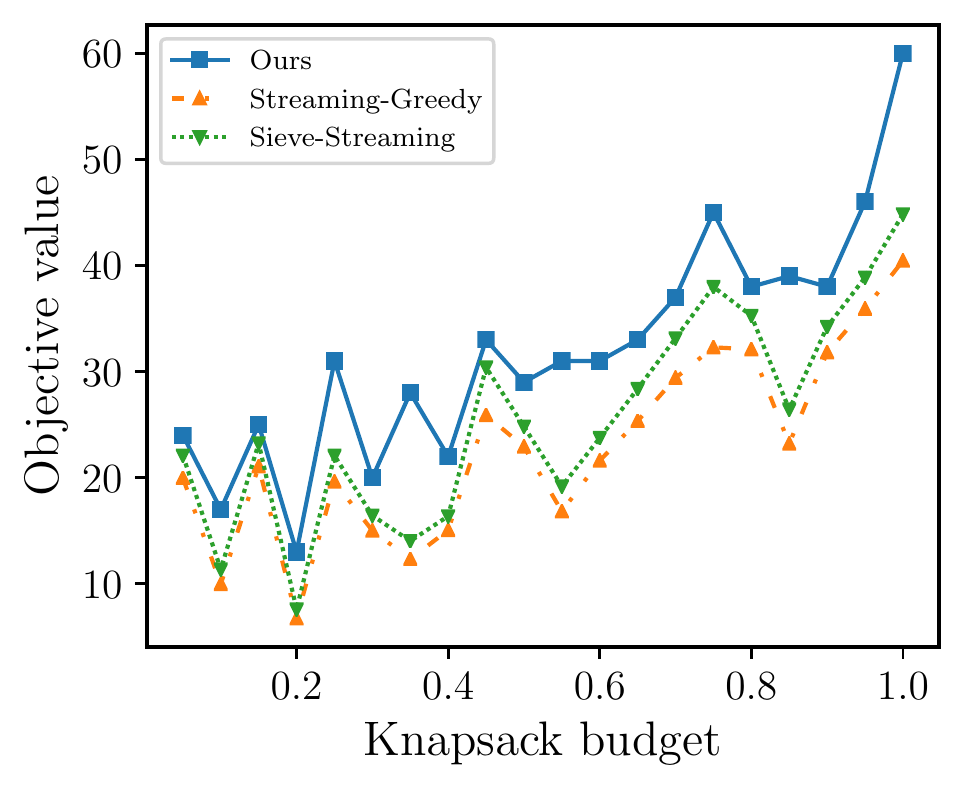}\label{fig:planar-0}}
	\subfloat[EU Email] {\includegraphics[height=32mm]{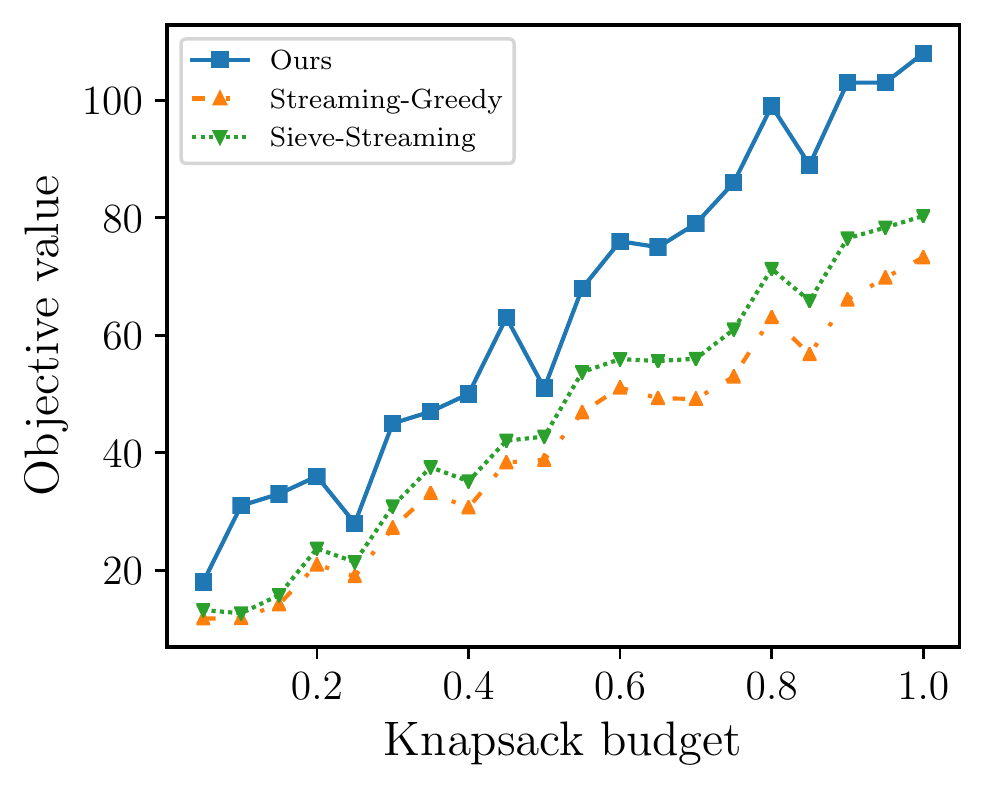}\label{fig:planar-EU}} 
	\subfloat[Facebook ego network]	{\includegraphics[height=32mm]{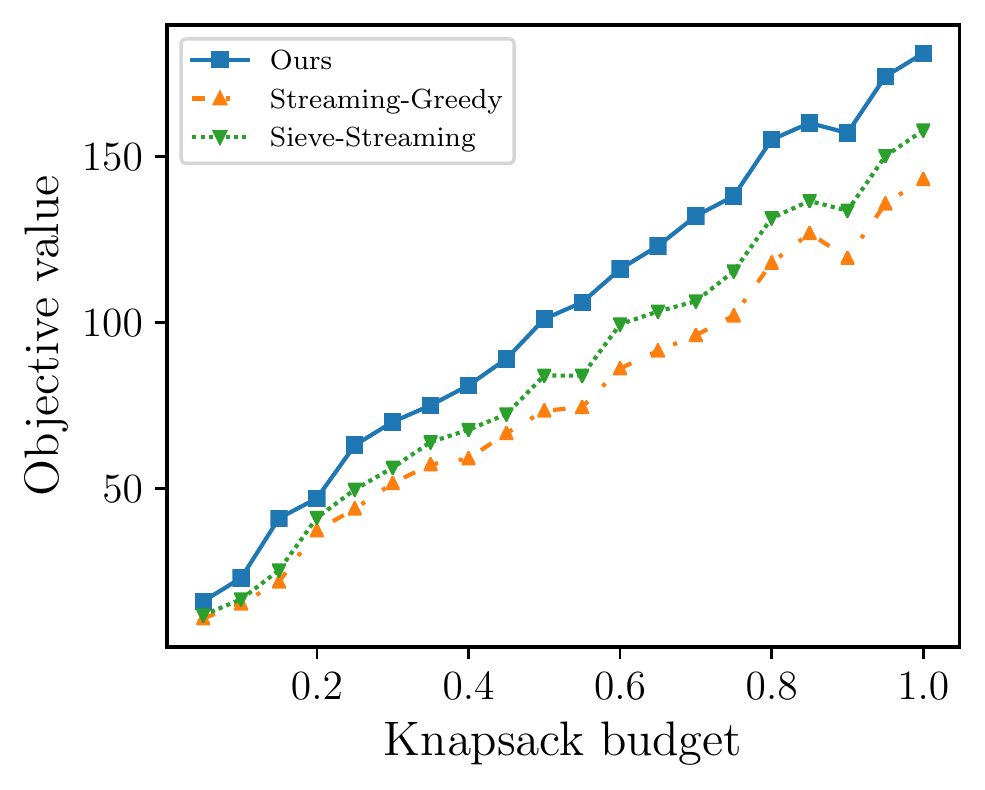}\label{fig:planar-1912}}
	\subfloat[Wiki vote network]	{\includegraphics[height=32mm]{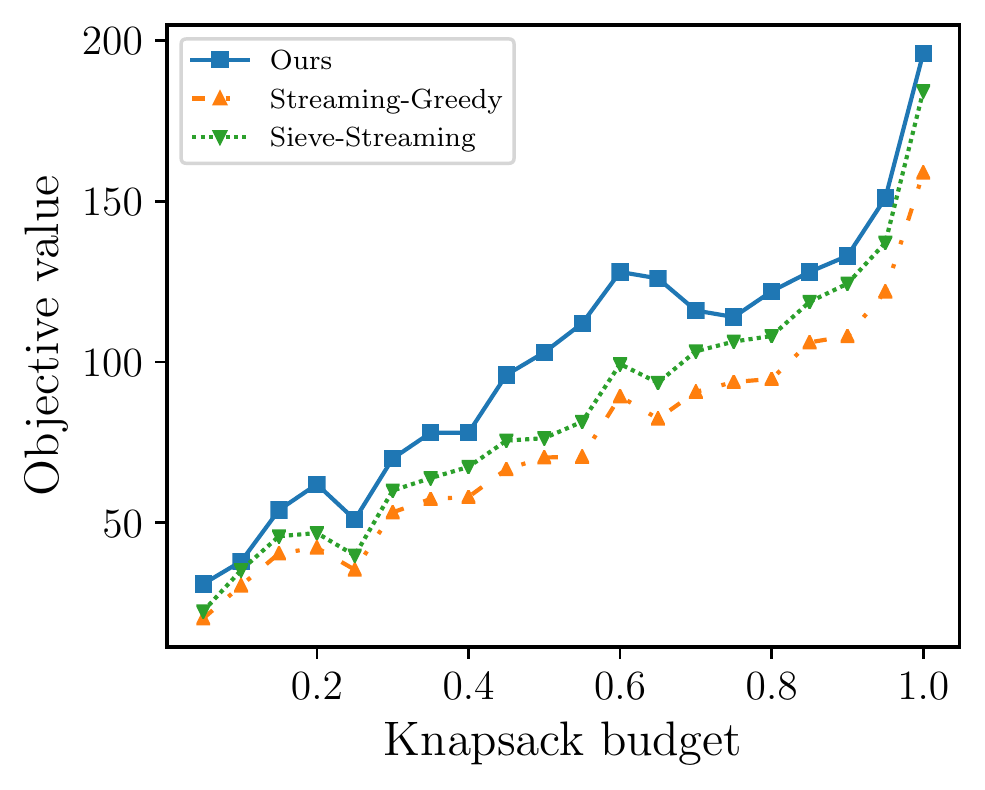}\label{fig:planar-wiki}}
	\caption{Planarity with knapsack (linear objective function).
		The weight of each edge is set to one.
		Knapsack cost of each edge $e = (u, v)$ is proportional to $\max(1, d_u  - 6)$, where $d_u$ is the degree of node $u$ in graph $G$. The costs are normalized so that $\sum_{e \in E} c_e = |V|$, where $c_e$ represents the knapsack cost of edge $e$.
	\label{fig:planarity_linear}
}
\end{figure*}


In \cref{fig:planarity_linear_new} (appearing in \cref{sec:sup_experiments}), we consider another knapsack constraint. In this constraint the cost of each edge $e = (u, v)$ is proportional to an integer picked uniformly at random from the set $\{1,2,3,4,5 \}$, and the costs are normalized as in the previous experiment. 
Again, we observe that our streaming algorithms returns solutions with higher objective values for various knapsack budgets.

\subsection{Movie Recommendation} \label{sec:movie}

In the movie recommendation application, our goal is to select a diverse set of movies subject to constraints that can be adjusted by the user. The dataset for this experiment contains 1793 movies from the genres: Adventure, Animation and Fantasy (note that a single movie may be identified with multiple genres). The user may specify an upper limit $m$ on the number of movies in the set we recommend for them, as well as an upper limit $m_i$ on the number of movies from each genre. For simplicity, we use a single value for all $m_i$ and refer to this value as the genre limit. It is easy to show that this set of constraints forms a $3$-extendible system. In addition, we enforce two knapsack constraints. For the first knapsack constraint $c_1$, the cost of each movie is proportional to the absolute difference between the release year of the movie and the year $1985$ (the implicit goal of this constraint is to pick movies with a release year which is as close as possible to the year $1985$). 
For the second knapsack constraint $c_2$, the cost of each movie is proportional to the difference between the maximum possible rating (which is $10$) and the rating of the particular movie---here the goal is to pick movies with higher ratings. 
More formally, for a movie $v \in \cN$, we have: $c_1(v) \propto \lvert 1985- \textrm{year}_v \rvert$ and  $c_2(v) \propto ( 10 - \textrm{rating}_v )$.
Here, $\textrm{year}_v$ and $\textrm{rating}_v$, respectively, denote the release year and  IMDb rating of movie $v$.
We normalize the costs in both knapsacks constraints so that the average cost of each movie is $\nicefrac{1}{10}$, i.e., $\frac{\sum_{v \in \cN}c_i(V)}{|\cN|} = \nicefrac{1}{10}$, and we set the knapsack budgets to $1$. Intuitively, this choice means that we expect a feasible set to contain no more than roughly $10\%$ of the movies.

In our experiments, we try to maximize two kinds of objective functions (each trying to capture diversity in a different way) subject to these constraints, and we vary the upper limit $m$ on the number of movies in the recommended set of movies. Both objective functions are based a set of attributes calculated for each movie using the method described in~\cite{lindgren2015sparse}, and both objective functions are non-negative and submodular. However, one of them is monotone, and the other is not (guaranteed to be) monotone.

\subsubsection{Monotone Submodular Function} \label{sec:movie-monotone}

In this section we describe the part of the experiment using a non-negative, monotone and submodular objective function. Let us begin the section by describing this function. Assume $v_i$ represents the feature vector of the $i$-th movie, then we define a matrix $M$ such that $M_{ij} = e^{-\lambda \cdot  \text{dist}(v_i,v_j)}$, where $\text{dist}(v_i,v_j)$ is the euclidean distance between vectors $v_i, v_j$---informally $M_{ij}$ encodes the similarity between the frames represented by $v_i$ and $v_j$.
The diversity of a set $S$ of movies is measured by the non-negative monotone submodular objective $f(S) = \log \det(\mathbf{I} + \alpha M_S )$, where $\mathbf{I}$ is the identity matrix, $\alpha$ is a positive scalar and $M_S$ is the principal sub-matrix of $M$ indexed by $S$ \citep{herbrich2003fast}.

In the experiment we did with the above objective function, we set the genre limit to 10, $\lambda$ to $0.1$, and $\alpha$ to $20$. The results of the experiment appear in \cref{fig:movie-f-m,fig:movie-o-m}. In \cref{fig:movie-f-m}, we can observe that our streaming algorithm outperforms streaming greedy and sieve streaming. Furthermore, while our algorithm requires only a single pass over the data and enjoys a very low computational complexity (see \cref{fig:movie-o-m}), the solutions it returns are competitive with respect to the solutions produced by the offline algorithms we compare with.

\subsubsection{Non-monotone Submodular Function}

An intuitive utility function for choosing a diverse set of movies $S$ is the following not necessarily monotone submodular function
\begin{equation} \label{eq:cut_fun}
f(S) = \sum \limits_{i \in S} \sum \limits_{j \in \cN} M_{i,j} - \sum \limits_{i \in S} \sum \limits_{j \in S} M_{i,j} \enspace,
\end{equation}
where $\cN$ is the set of all movies and $M_{i,j}$ is the non-negative similarity score between movies $i,j \in \cN$ as defined in the previous section.
It is beneficial to note that the first term is a sum-coverage function that 
captures the representativeness of the selected set, and the second term is a dispersion function penalizing 
similarity within $S$ \cite{feldman2017greed}.

In our experiment with this function as the object, we set the genre limit to 20. 
In \cref{fig:movie-f-nm,fig:movie-o-nm},  we observe that i) our streaming algorithm returns solutions with higher utilities in comparison to the baseline streaming algorithms, ii) the non-monotone version of our streaming algorithm clearly outperforms the monotone one for this non-monotone submodular function, and iii) the quality of the solutions returned by our algorithms is comparable with the quality obtained by offline algorithms.

\begin{figure*}[ht] 
	\centering  
	\subfloat[Monotone function] {\includegraphics[height=32mm]{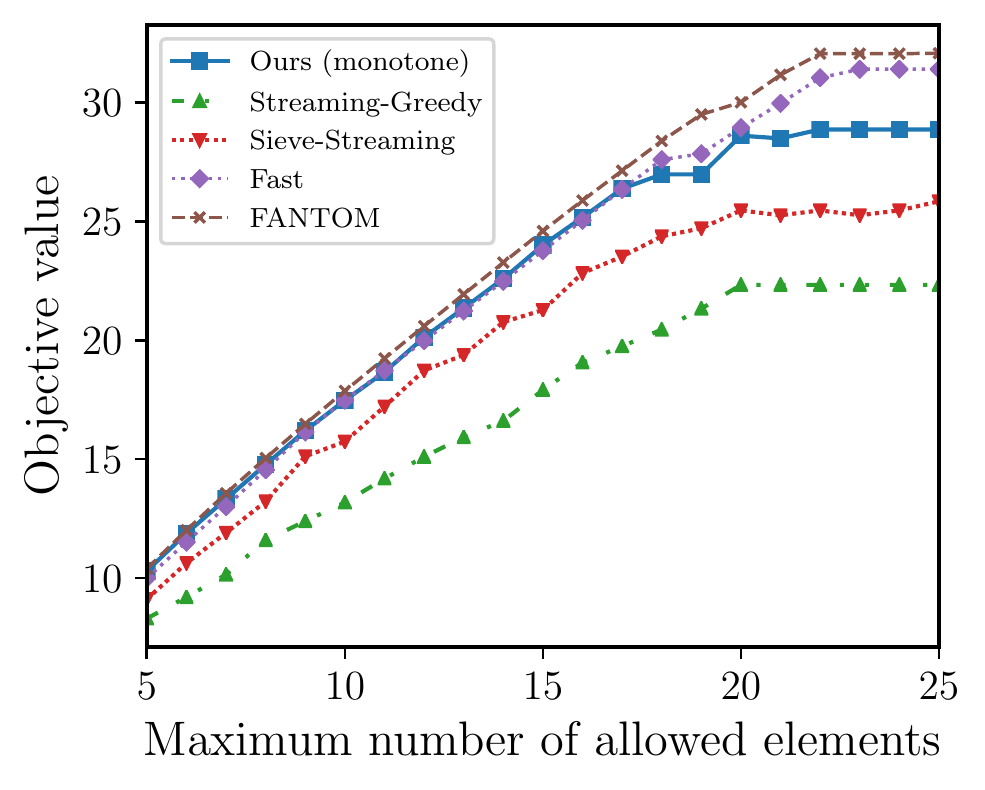}\label{fig:movie-f-m}} 
	\subfloat[Monotone function]	{\includegraphics[height=32mm]{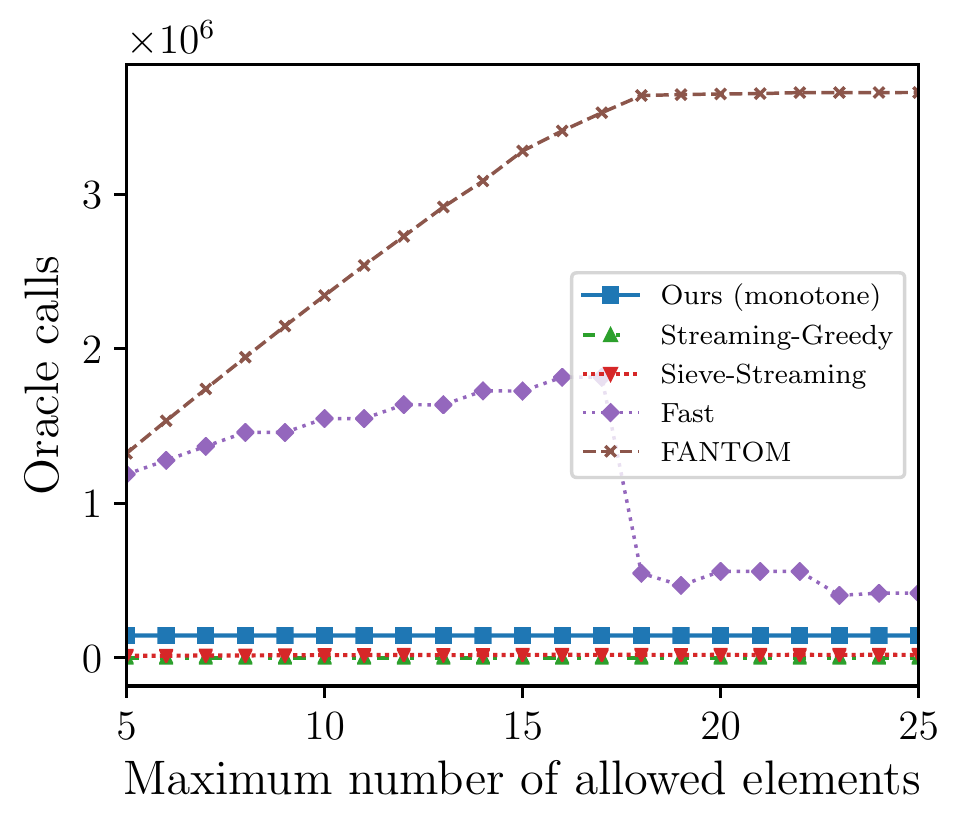}\label{fig:movie-o-m}}
	\subfloat[Non-monotone function]	{\includegraphics[height=32mm]{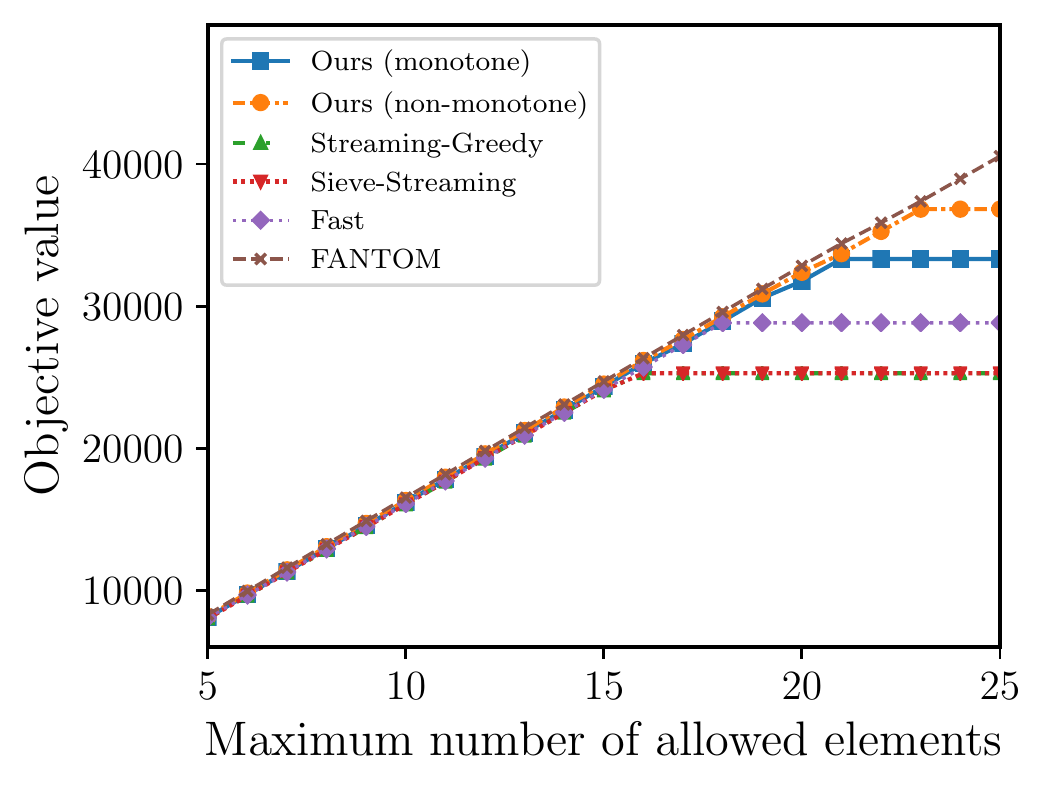}\label{fig:movie-f-nm}}
	\subfloat[Non-monotone function]	{\includegraphics[height=32mm]{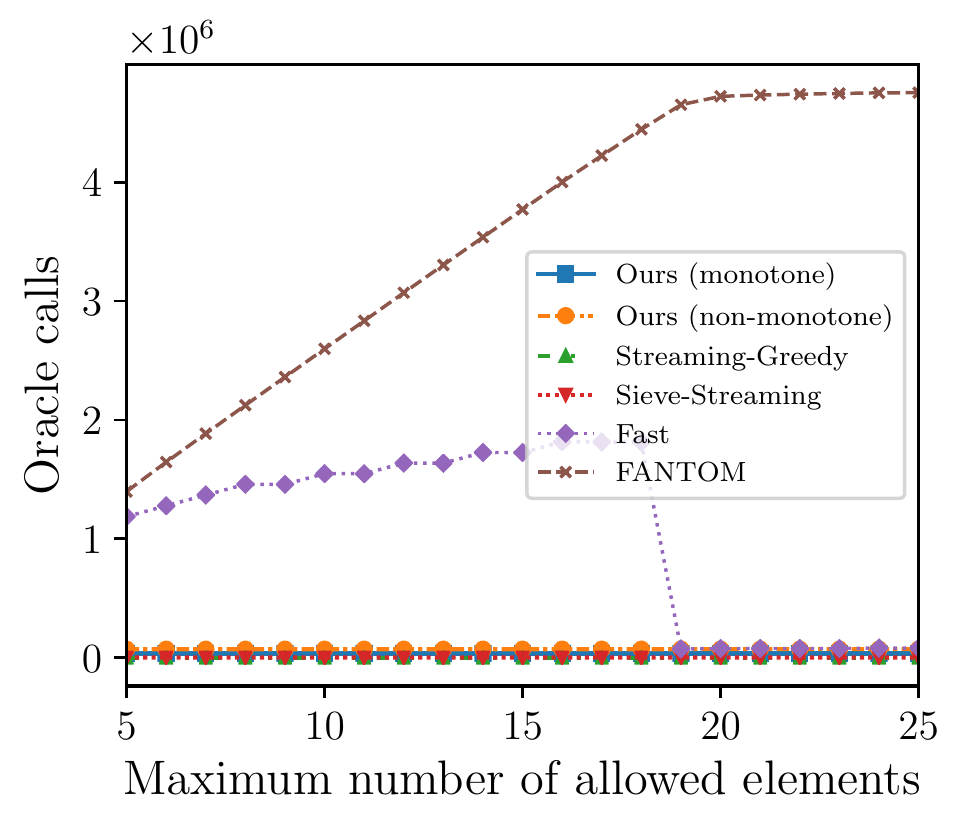}\label{fig:movie-o-nm}}
	\label{fig:movie}
	\caption{Movie recommendation with two knapsacks}
\end{figure*}

\subsection{Yelp Location Data Summarization} \label{sec:yelp}

In this application, our objective is to summarize a large dataset of locations. 
We use the Yelp Academic dataset~\cite{yelp}, which is a subset of Yelp's businesses, reviews, and user data~\cite{yelporig}. 
The dataset contains information about local businesses across 11 metropolitan areas, and we consider only locations in six out of these metropolitan areas.
We used the description of each business location and reviews for feature extraction. 
These features contain information regarding many attributes such as having vegan menus, delivery options,  possibility of outdoor seating,  being good for groups, etc.\footnote{For the feature extraction, we used the script provided at \url{https://github.com/vc1492a/Yelp-Challenge-Dataset}.}

Suppose we want to select, out of a ground set $\ground = \{1, \dots , n \} $, a subset of locations
which provides a good representation of all the existing business locations. Towards this goal, we calculate a matrix $M$ representing the similarity between every two locations $i,j \in \cN$ using the same method described in \cref{sec:movie-monotone}. Then, intuitively, given a set $S$, each location $i \in \cN$ is represented by the location from the set $S$ with the highest similarity to $i$. Thus, it is natural to define the total utility provided by a set $S$ using the following non-negative, monotone and submodular set function \cite{krause12survey,frieze1974cost}:
\begin{equation} \label{eq:facility}
f(S) =\frac{1}{n} \sum_{i=1}^{n} \max_{j \in S} M_{i,j} \enspace.
\end{equation}
Note that the utility function \eqref{eq:facility} depends on the entire dataset $\cN$.
In the streaming setting we do not have access to the full data stream, but fortunately, our objective function is additively decomposable~\cite{mirzasoleiman2013distributed} over the ground set $\cN$. 
Thus, as long as we can sample uniformly at random from a data stream, it is possible to estimate \eqref{eq:facility} arbitrarily close to its exact value \cite[Proposition 6.1]{badanidiyuru2014streaming}.
To sample randomly from the data stream and estimate the function, we use the reservoir sampling technique explained in \cite[Algortithm~4]{badanidiyuru2014streaming}.

For the constraint, we use a combination of matroid and knapsack constraints (which yields a $\ksys$-extendible constraint).
The matroid constraint is as follows: i) there is a limit $m$ on the total number of selected locations and ii) the maximum number of allowed locations from each of the six cities is 10.
For the knapsack constraints we consider two different scenarios: 
i) in the first scenario, there is a single knapsack $c_1$ in which the cost assigned to each location is proportional to the distance of that location from a pre-specified location in the down-town of its metropolitan area. 
ii) in the second scenario, we add another knapsack $c_2$ which is based on the distance between each location and the international airport serving its metropolitan area. 
In this set of experiments, we set the knapsack budgets to $1$, where one unit of budget is equivalent to 100km. 
This means that we allow the sum of the distances of every feasible set of locations to the points of interest (i.e., down-towns or airports) to be at most 100km.

In our experiments, we compare the utility and computational cost of algorithms for different values of $m$ (the upper limit on the number of locations in the produced summary). 
From the experiments (see \cref{fig:yelp}), we observe that i) our proposed algorithm, consistently, demonstrates a better performance compared to other streaming algorithms in terms of the utility of the final solution, and ii) the utility of the solutions produced by our algorithm is comparable to the utility of solutions produced by state-of-the-art offline algorithms, despite the ability of our algorithm to make only a single pass over the data and its several orders of magnitude better computational complexity. We also observe that, as expected, adding more constraints (compare \cref{fig:yelp-f-one,fig:yelp-f-two}) reduces the utility of the selected summary.

\begin{figure*}[ht] 
	\centering  
	\subfloat[One knapsack] {\includegraphics[height=32mm]{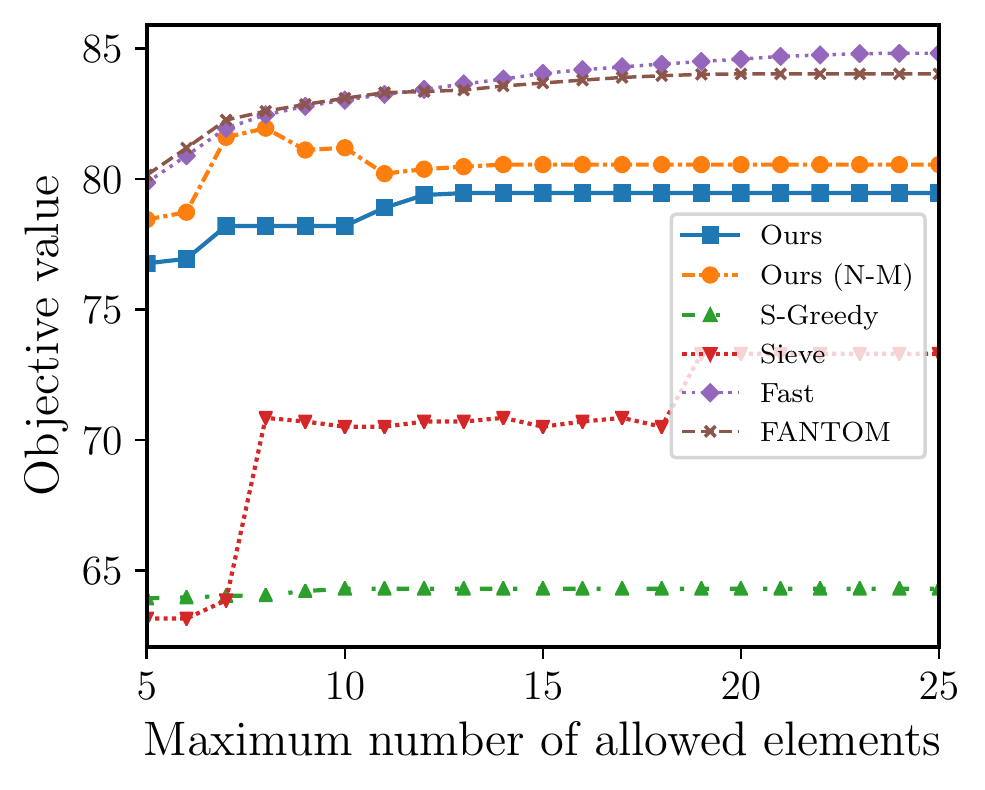}\label{fig:yelp-f-one}} 
	\subfloat[One Knapsack]	{\includegraphics[height=32mm]{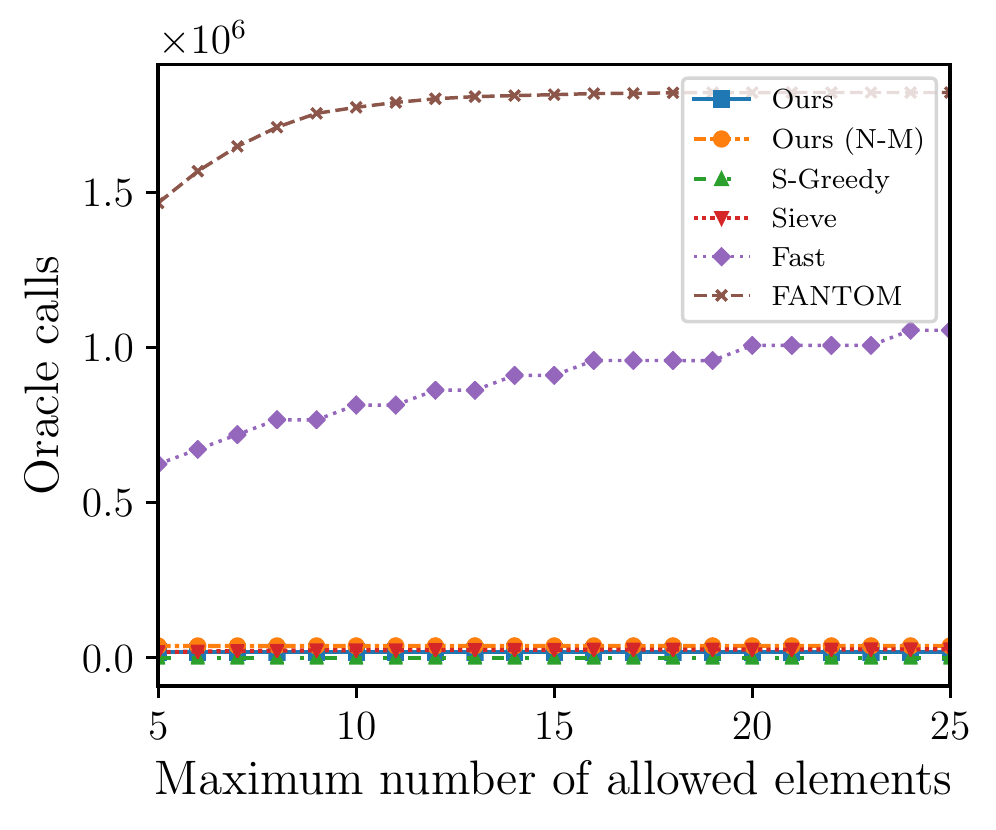}\label{fig:yep-o-one}}
	\subfloat[Two Knapsacks]	{\includegraphics[height=32mm]{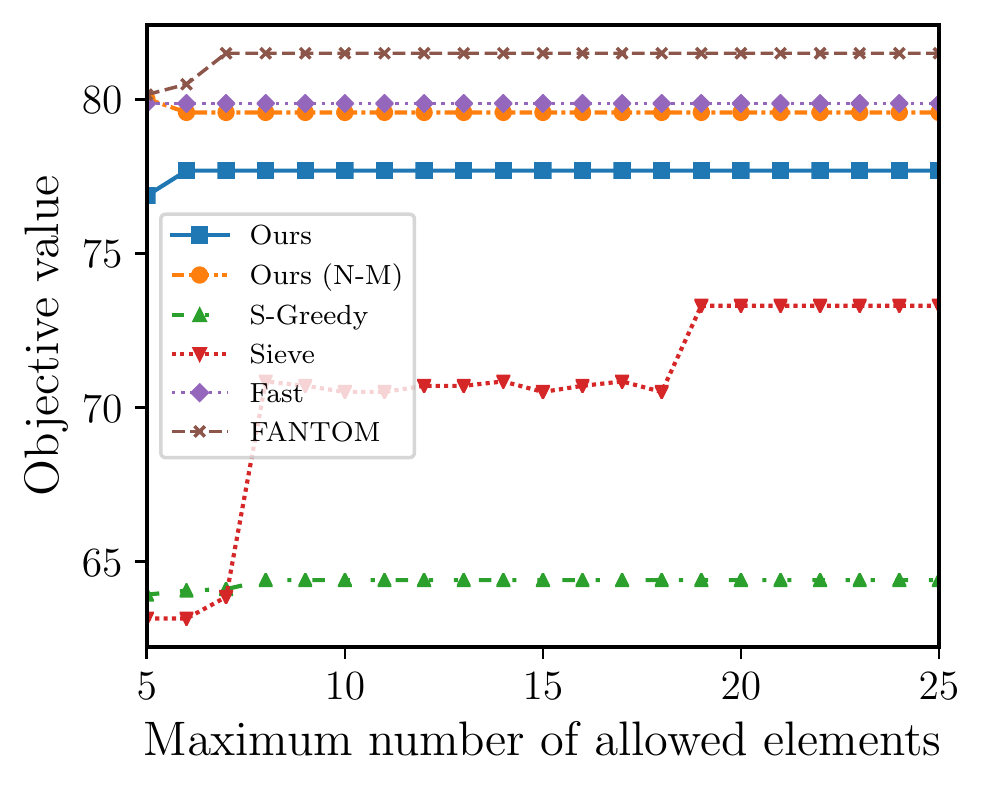}\label{fig:yelp-f-two}}
	\subfloat[Two Knapsacks]	{\includegraphics[height=32mm]{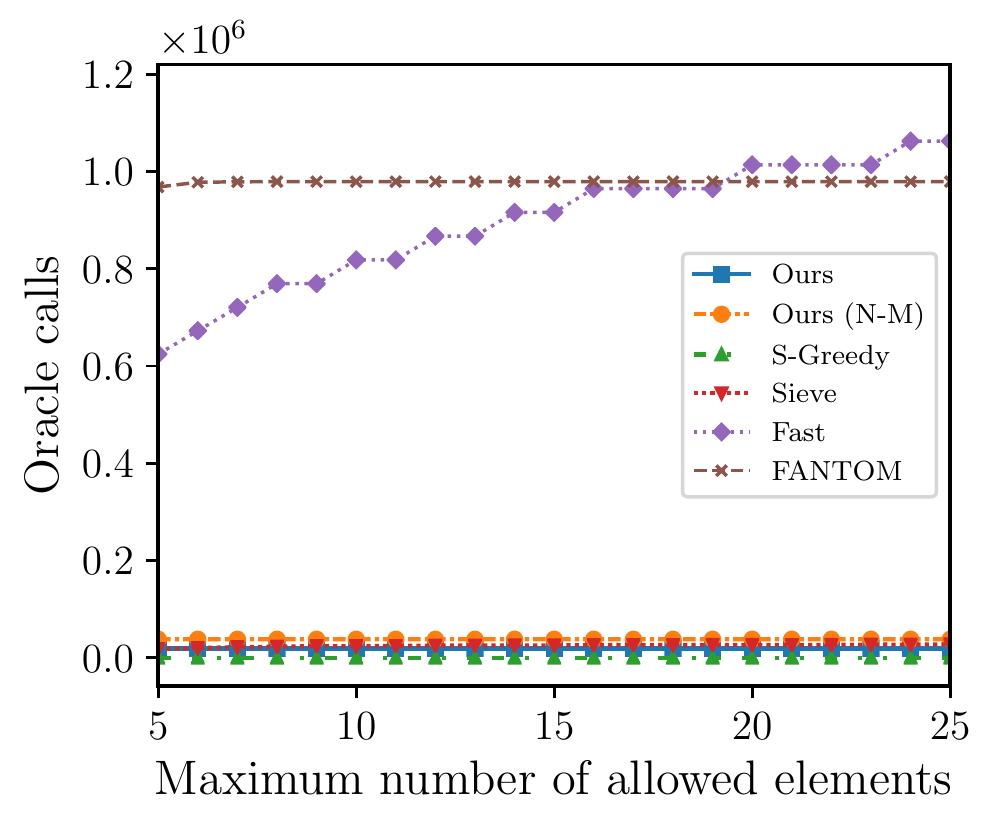}\label{fig:yelp-o-two}}
	\caption{Yelp Location Data Summarization}\label{fig:yelp}
\end{figure*}

\subsection{Twitter Summarization} \label{sec:twitter}
There are several news reporting Twitter accounts with millions of followers.
One interesting data summarization task is to provide a periodic synopsis of major events from the news feeds of these accounts. 
While finding an objective function to quantify the utility of a summary is a delicate task, the need to provide the summary in real-time for streams of data which are arriving in a fast pace makes the data summarization task even harder.

For this application, we use the twitter dataset provided in \cite{kazemi2019submodular}.
In order to cover the important events of the day without redundancy, we use a monotone and submodular function $f$ that encourages diversity in the selected set of tweets \cite{kazemi2019submodular}. Let us explain this function.
The function $f$ is defined over a ground set $\cN$ of tweets.
Assume that each tweet $u \in \cN$ consists of a non-negative value $\text{val}_u$ representing the number of retweets it has received and a set of $\ell_u$ keywords $W_u = \{ w_{u,1}, \cdots, w_{u, \ell_u}\}$ from the set of all possible keywords $\cW$.
The score of a word $w \in \cW $ for a given tweet $u$ is defined by
\[
	\text{score}(w,u) =
	\begin{cases}
		\text{val}_u & \text{if $w \in W_u$} \enspace,\\
		0 & \text{otherwise} \enspace,
	\end{cases}
\]
and the function $f$ is defined by 
\begin{equation*} 
f(S) = \sum_{w \in \cW} \sqrt{\sum_{u \in S} \text{score}(w,u)} \enspace .
\end{equation*}

Like in \cref{sec:yelp}, each one of our experiments involves a matroid constraint plus one or two knapsack constraints, which yields a $k$-extendible system constraint. The matroid constraint allows at most five tweets from each one of the six twitter accounts and at most $m$ tweets from all the accounts together. In the first knapsack constraint $c_1$, which is a constraint that is used in all the experiments of this section, the cost of each tweet is proportional to the absolute time difference (in months) between the tweet and the first of January 2019. In other words, we are more interested in tweets that are closer to the first day of the year 2019. 
We also have a second knapsack constraint $c_2$, which is used only in our second experiment. In this constraint, the cost of each element is proportional to the length (number of keywords) of the corresponding tweet, which enables us to provide shorter summaries.
We normalize the knapsack costs such that each unit of knapsack budget is equivalent to roughly 10 months for $c_1$ and 26 keywords for $c_2$, respectively. Then, we set the budgets of both knapsacks to $1$.

In \cref{fig:twitter-f-one,fig:twitter-o-one}, we observe the outcomes of different algorithms for the scenario with a single knapsack. It is evident that the utility of solutions returned by our proposed streaming algorithm exceeds the other baseline streaming algorithm. It is also interesting to point out that, for the case with two knapsack constraint, our streaming algorithms outperform even the Fast algorithm, which is one of the offline algorithms (see \cref{fig:twitter-f-two}).

\begin{figure*}[ht] 
	\centering  
	\subfloat[One knapsack] {\includegraphics[height=32mm]{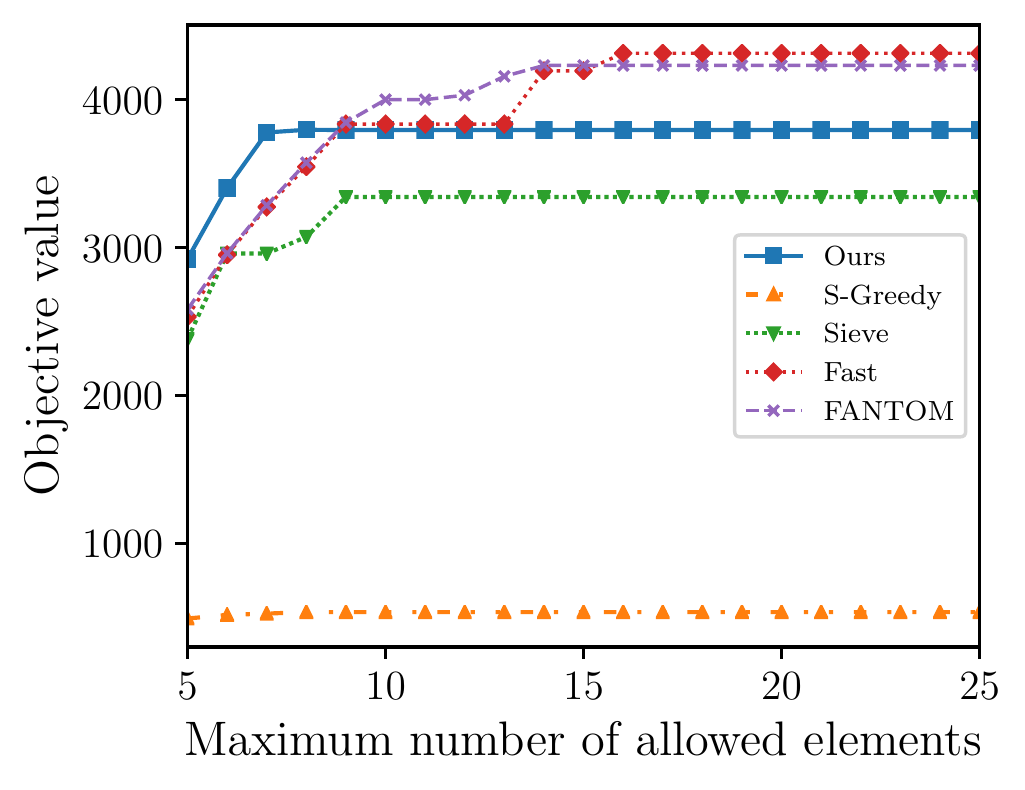}\label{fig:twitter-f-one}} 
	\subfloat[One Knapsack]	{\includegraphics[height=32mm]{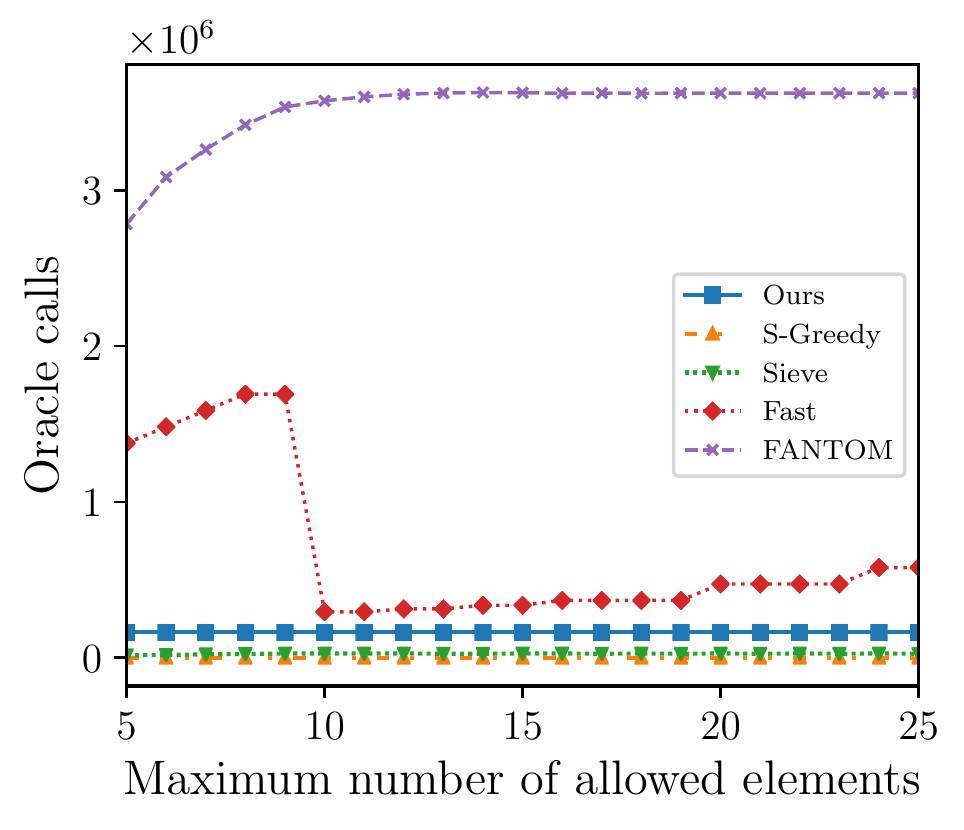}\label{fig:twitter-o-one}}
	\subfloat[Two Knapsacks]	{\includegraphics[height=32mm]{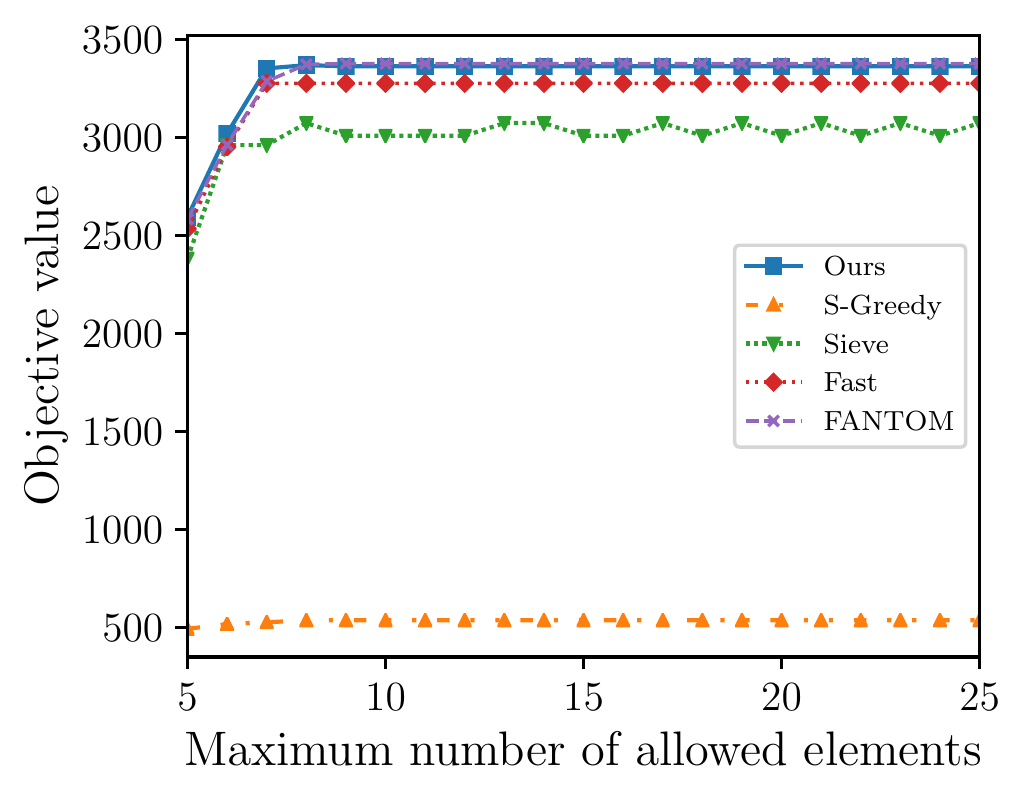}\label{fig:twitter-f-two}}
	\subfloat[Two Knapsacks]	{\includegraphics[height=32mm]{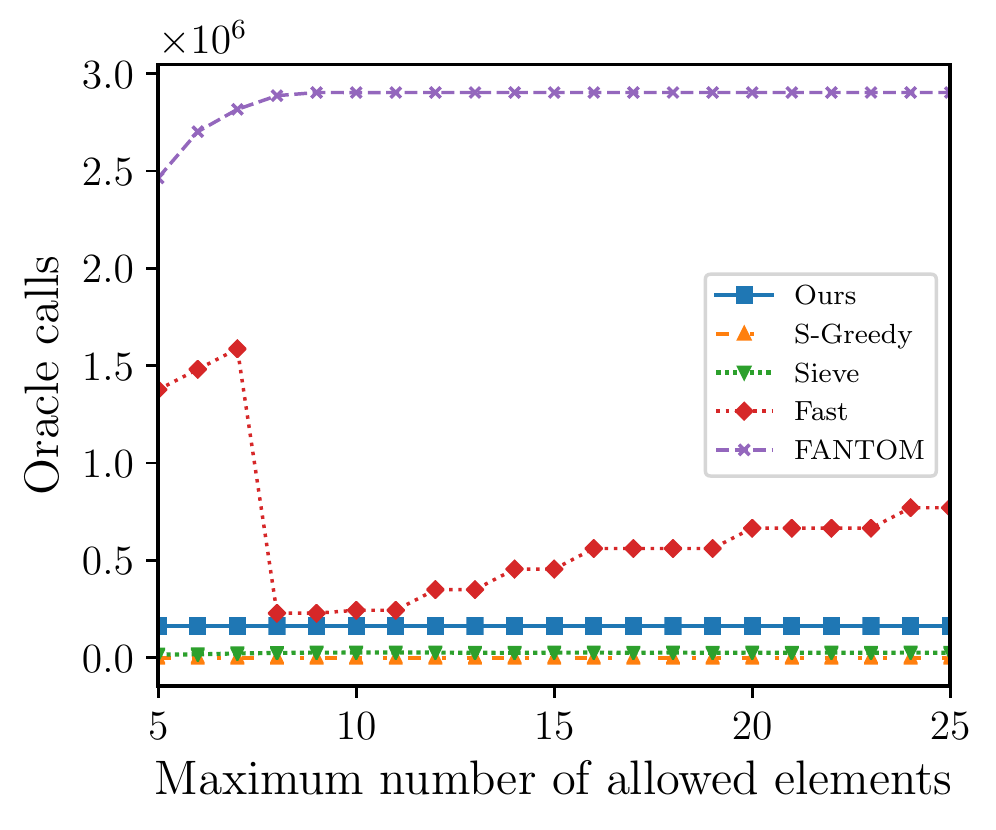}\label{fig:twitter-o-two}}
	\caption{Twitter Data Summarization: The maximum number of allowed tweets from each news agency in the summary is five. \label{fig:twitter}}
\end{figure*}

\section{Conclusion}

In this paper, we have proposed a novel framework for converting  streaming algorithms for  monotone submodular maximization into  streaming algorithms for non-monotone submodular maximization, which immediately led us to the currently tightest deterministic approximation ratio for submodular maximization subject to a $k$-matchoid constraint. We also proposed the first streaming algorithm for monotone submodular maximization subject to $k$-extendible and $k$-set system constraints, which (together with our proposed framework), yields approximation ratios of $O(k\log k)$ and $O(k^2\log k)$ for maximization of general non-negative submodular functions subject to the above constraints, respectively. Finally, we extensively evaluated the empirical performance of our algorithm against the existing work in a series of experiments including finding the maximum independent set in randomly generated graphs, maximizing linear functions over social networks, movie recommendation,  Yelp location summarization, and Twitter data summarization.

\bibliographystyle{plainnat}
\bibliography{streamingkextendible}
\appendix
\section{Counter Examples for Inequality~(\ref*{eq:prev-req-cond})} \label{appendix:counter-example}

In \cref{sec:framework}, we discussed the framework proposed by  \citet{mirzasoleiman2018streaming}  for maximizing a non-monotone submodular function using an algorithm for monotone functions.
This framework requires the input streaming algorithm to satisfy the inequality
\begin{align*}
f(S) \geq \alpha \cdot f(S \cup T) \enspace,
\end{align*}
where $S$ as the output of the algorithm, $T$ is an arbitrary feasible solution and $\alpha$ is a positive value. 
In the rest of this section, we provide two instances of the streaming maximization problem under a simple cardinality constraint $k$. These instances show that  the algorithms of~\cite{chekuri2015streaming}, \cite{chakrabarti2015submodular} and \cite{buchbinder2019online} fail to satisfy \cref{eq:prev-req-cond} for any constant $\alpha$.

Both our instances are based on a graph-cut function $f\colon 2^{V} \to \bR_{\geq 0}$ over vertices of a directed and weighted graph $G(V,E)$. This function is defined as follows:
\begin{align} \label{eq:cut-counter}
f(S) = \sum_{u \in S} \sum_{v \in V \setminus S} w_{u,v} \enspace,
\end{align}
where $w_{u,v} $ is the weight of the edge $e = (u,v)$.
It is easy to see that $f$ is a (usually non-monotone) submodular function. Furthermore, in our examples we assume the graph contains $3\maxcardinality + 1$ vertices named $u_0, u_1, u_2, \dotsc, u_{3\maxcardinality}$. The vertex $u_0$ does not appear in the input stream at all (it is there only for the purpose of allowing the description of the objective function as a cut function), and the other vertices appear in the stream in the order of their subscripts.

\subsection{Example for the Algorithms of Chekuri et al. and Chakrabarti and Kale} \label{sec:counter_chekuri}

The streaming algorithm of \citet{chekuri2015streaming}, in the context of a cardinality constraint, is given as \cref{alg:chekuri}. The algorithm of \citet{chakrabarti2015submodular} is very similar, and exhibits exactly the same behavior given the example we describe in this section, and therefore, we do not restate it here.

\begin{algorithm2e}[htb!]
	\DontPrintSemicolon
	\caption{Streaming Algorithm of \citet{chekuri2015streaming} } \label{alg:chekuri}
	$S \gets \emptyset$.\\
	\While{there are more elements in the stream}{
		$u \gets$ next element in the stream.\\
		\If{$|S| < \maxcardinality$}{
		\If{$f(u \mid S) \geq 0$}{
		$S \gets S \cup \{u\}$.\\
	}}
		\Else{
		$u' \gets \argmin_{x \in S} f(x: S)$, where $f(x : S)$ is the marginal contribution of $x$ to the part of $S$
		that arrived before $x$ itself.\\
			\If{$f(u \mid S) \geq 2 \cdot f(u' : S)$}{
			$S \gets (S \setminus \{u'\} )\cup \{u\}$.\\}
	}}
	\Return{$S$}.
\end{algorithm2e}

The counter example we suggest for \cref{alg:chekuri} is given by the weighted graph $G_1(V, E)$ shown in \cref{fig:counter_example}.
The weight of the black edges is 1, and the weight of the blue edges is $2 + \epsilon$ for some small and positive value $\epsilon$.

\begin{figure}[htb!] 
	\centering
	\includegraphics[width=4.in]{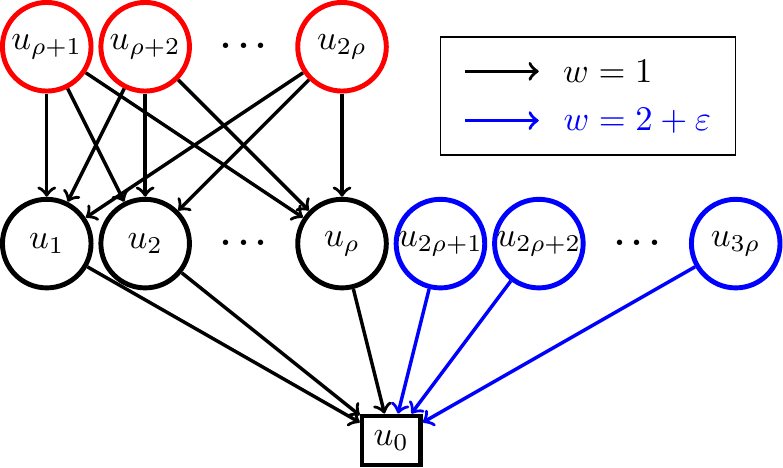}
	\caption{Weighted graph $G_1(V,E)$ used to define the counter example for the algorithm of \citet{chekuri2015streaming}.}\label{fig:counter_example_chekuri}
\end{figure}

\begin{lemma} \label{lemma:counter-example-chekuri}
	Assume $S$ is the output of Algorithm~\ref{alg:chekuri} for maximizing the graph-cut function $f$ (of the graph $G_1(V, E)$ and as defined in \cref{eq:cut-counter}) under a cardinality constraint $\maxcardinality$. Then,
	\[ f(S)  \leq \dfrac{2 + \epsilon}{\maxcardinality } \cdot  f(S \cup S^*) \enspace, \]
	where $S^*$ is the optimal solution.
\end{lemma}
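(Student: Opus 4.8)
The instance $G_1(V,E)$ is fully specified, so the plan is to prove the stated inequality by an exact simulation of \cref{alg:chekuri} followed by a direct evaluation of the two directed-cut values $f(S)$ and $f(S \cup S^*)$. No approximation reasoning is needed: once the output set $S$ is identified, both sides of the inequality become concrete numbers determined by the edge weights. The reason for isolating the ratio $(2+\epsilon)/\maxcardinality$ is that it vanishes as $\maxcardinality \to \infty$, so the lemma will immediately preclude Inequality~\eqref{eq:prev-req-cond} for any constant $\alpha$.

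First I would pin down $S$. Throughout it is convenient to use the closed form of the directed-cut marginal: adding $u$ to a current set $S$ changes $f$ by $\sum_{b \notin S+u} w_{u,b} - \sum_{a \in S} w_{a,u}$, i.e.\ the out-weight of $u$ into the current complement minus the in-weight into $u$ from the current solution. Using this, I would analyze the filling branch of the algorithm (the case $|S| < \maxcardinality$) and verify that each of the first $\maxcardinality$ arriving vertices has non-negative marginal when it appears and is therefore inserted, so that after the stream has delivered $\maxcardinality$ vertices the solution is exactly the ``bait'' set $\{u_1, \dots, u_\maxcardinality\}$.

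The crux is the swap branch, applied to every later vertex $u_{\maxcardinality+1}, \dots, u_{3\maxcardinality}$. For each such arrival I would compute the candidate marginal $f(u \mid S)$ and the smallest recorded value $f(u' : S)$ among the current members, and show that the test $f(u \mid S) \geq 2 f(u' : S)$ behaves as the construction intends: the weight $2+\epsilon$ on the blue edges is calibrated precisely against the multiplier $2$ in this test, so that the frozen values of the bait elements are large enough that no later vertex can trigger a swap (or, at worst, only swaps that leave the cut value of $S$ unchanged up to a constant). Verifying this uniformly over all later arrivals, while keeping track of how the recorded marginals were frozen at insertion time against a partial solution, is the main obstacle in the proof and the place where the factor $2+\epsilon$ does its work.

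Having identified the final set $S$, I would finish with two direct computations. A straightforward evaluation of the cut of $S$ in $G_1$ shows that $f(S)$ is a small constant of order $2+\epsilon$, because the edges internal to the bait set do not cross the cut. Taking $S^*$ to be the optimal feasible set and evaluating $f(S \cup S^*)$ then gives a value proportional to $\maxcardinality$. Dividing the two estimates yields $f(S) \leq \frac{2+\epsilon}{\maxcardinality} \cdot f(S \cup S^*)$. Beyond the swap analysis, the only delicate bookkeeping is to confirm that the vertices contributed by $S^*$ carry $\Theta(\maxcardinality)$ out-edges into the complement of $S \cup S^*$, since these edges cross the cut of $S \cup S^*$ but are absent from the cut of $S$, and it is exactly this discrepancy that separates the two values by the factor $\maxcardinality$.
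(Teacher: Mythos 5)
Your high-level plan (simulate the algorithm on the fixed instance, then evaluate both cut values directly) is the same as the paper's, but your account of what the simulation actually produces is wrong in a way that breaks the proof.

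The construction does \emph{not} work by blocking all swaps. On $G_1$ the run is: (i) the bait set $V_1=\{u_1,\dots,u_\maxcardinality\}$ fills the solution, each element arriving with marginal $1$; (ii) the optimal elements $S^*=\{u_{\maxcardinality+1},\dots,u_{2\maxcardinality}\}$ arrive with marginal $f(u\mid S)=0$ (their out-weight $\maxcardinality$ is entirely absorbed by the current solution), so they fail even the non-negativity/swap test and are all rejected --- this, not the swap threshold, is why the algorithm misses the optimum; (iii) every element of $V_2=\{u_{2\maxcardinality+1},\dots,u_{3\maxcardinality}\}$ \emph{does} trigger a swap, since $f(u\mid S)=2+\epsilon\geq 2\cdot f(u':S)=2$, so the final output is $S=V_2$. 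The weight $2+\epsilon$ is calibrated to barely \emph{pass} the factor-$2$ test, not to fail it; the algorithm is lured into trading the bait for decoys. Your claim that ``no later vertex can trigger a swap'' and that the output is essentially the bait set is therefore the opposite of what happens, and your hedge about swaps ``leaving the cut value unchanged up to a constant'' does not hold either, since the swaps move $f(S)$ from $\maxcardinality$ to $(2+\epsilon)\maxcardinality$.

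Consequently your final numerics are also off: $f(S)=f(V_2)=(2+\epsilon)\maxcardinality$ is $\Theta(\maxcardinality)$, not a constant, and $f(S\cup S^*)\geq f(S^*)=\maxcardinality^2$ is $\Theta(\maxcardinality^2)$, not $\Theta(\maxcardinality)$. The ratio $(2+\epsilon)/\maxcardinality$ happens to come out the same in your (incorrect) accounting, but only by coincidence of two compensating errors of a factor $\maxcardinality$ each. To fix the proof you need to carry out step (iii) above --- showing each $V_2$ element displaces a $V_1$ element --- and then compare $(2+\epsilon)\maxcardinality$ against $\maxcardinality^2$.
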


\begin{proof}
First, it is clear that the optimal solution is the set $S^* = \{u_{\maxcardinality+1}, u_{\maxcardinality + 2}, \dots, u_{2\maxcardinality}\}$, for which $f(S^*) =  \maxcardinality^2$.
	When the first $\maxcardinality$ elements $V_1 = \{u_1, \dots, u_\maxcardinality\}$ arrive, all of them are added to the solution $S$ as the marginal gain of each one of them is $1$.
	Furthermore, when the elements $u\in S^*$ arrive, it is obvious that  $f(u \mid S) = 0$, and therefore,
	\[ f(u \mid S) < f(e' : S) = 1  \quad \forall u' \in S \enspace. \]
	Hence, none of the elements of $S^*$ would be added to the solution.
	Finally, it is straightforward to see that all elements in  $V_2 = \{u_{2\maxcardinality+1}, \dots, u_{3\maxcardinality} \}$ would replace an element in $V_1$ and  be in the final solution $S$. This is true because for $u \in V_2$ we have $f(u \mid S) = 2 + \epsilon$, which is larger than $2 \cdot f(u' : S )$ for $u' \in V_1$.
	The lemma now follows by observing that $f(S) = f(V_2) = (2 + \epsilon) \maxcardinality$ and $\maxcardinality^2 = f(S^*) \leq f(S \cup S^*)$.
\end{proof}

\subsection{Example for the Algorithm of Buchbinder et al.}

The streaming algorithm of \citet{buchbinder2019online} is given as Algorithm~\ref{alg:buchbinder}.

\begin{algorithm2e}[htb!]
	\DontPrintSemicolon
	\caption{Streaming Algorithm of \citet{buchbinder2019online} } \label{alg:buchbinder}
	$S \gets \emptyset$.\\
	\While{there are more elements in the stream}{
		$u \gets$ next element in the stream.\\
		\If{$|S| < \maxcardinality$}{
			\If{$f(u \mid S) \geq 0$}{
				$S \gets S \cup \{u\}$.\\
		}}
		\Else{
			$u' \gets \argmax_{x \in S} f(S \setminus \{x\} \cup \{u\} )$.\\
			\If{$f((S \setminus \{u'\}) \cup \{u\} ) - f(S) \geq \nicefrac{f(S)}{\maxcardinality}$}{
				$S \gets (S \setminus \{u'\}) \cup \{u\}$.\\}
	}}
	\Return{$S$}.
\end{algorithm2e}

In this section, the objective function of the counter example is given by the graph-cut function $f$ of the weighted graph $G_2(V, E)$ shown in \cref{fig:counter_example}. This graph has the same structure as the graph $G_1$ from \cref{sec:counter_chekuri}, but its weight selection is more involved. Specifically, in the  graph $G_2$, the weight of the black edges is 1 and there exist $\maxcardinality$ blue edges with weights $w_1, w_2, \dotsc, w_{\maxcardinality}$ given by $w_1 = 2$ and \[
	w_i =  \frac{2\maxcardinality +1 - i + \sum_{j=1}^{i-1} w_j}{\maxcardinality} \quad \forall\; i \geq 2
	\enspace.
	\]

\begin{figure}[htb!] 
	\centering
	\includegraphics[width=4.in]{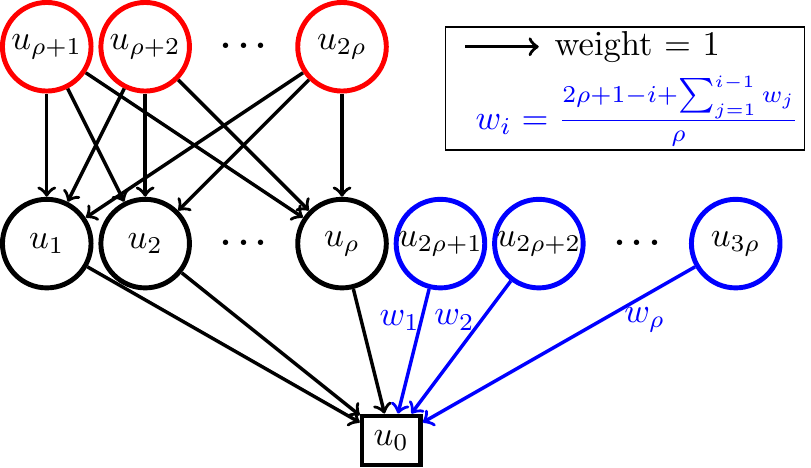}
	\caption{Weighted graph $G_2(V,E)$ used to define the counter example for the algorithm of \citet{buchbinder2019online}.}\label{fig:counter_example}
\end{figure}

\begin{lemma} \label{lemma:counter-example}
	Assume $S$ is the output of Algorithm~\ref{alg:buchbinder} for maximizing the graph-cut unction $f$ (of the graph $G_2(V, E)$ and as defined in \cref{eq:cut-counter}) under a cardinality constraint $\maxcardinality$.
	Then, for $\maxcardinality \geq 1 + e$,
	\[ f(S)  \leq \dfrac{e}{\maxcardinality} \cdot  f(S \cup S^*) \enspace, \]
	where $S^*$ is optimal solution.
\end{lemma}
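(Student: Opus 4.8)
The plan is to follow the template set by the proof of \cref{lemma:counter-example-chekuri}: trace the execution of \cref{alg:buchbinder} on $G_2$ in three phases, determine the final set $S$, and bound its value against $f(S\cup S^*)$. As in that example I would write $V_1 = \{u_1,\dots,u_\maxcardinality\}$, $S^* = \{u_{\maxcardinality+1},\dots,u_{2\maxcardinality}\}$ and $V_2 = \{u_{2\maxcardinality+1},\dots,u_{3\maxcardinality}\}$, where $V_2$ carries the blue edges, the $i$-th element of $V_2$ to arrive being the one incident to the blue edge of weight $w_i$ (so the weights arrive in increasing order). I would first record the two facts needed at the end. A direct computation of the cut function gives $f(S^*)=\maxcardinality^2$, and since the $\maxcardinality^2$ black edges running from $S^*$ into $V_1$ all cross the cut defined by $S\cup S^* = V_2\cup S^*$, we also get $f(S\cup S^*)\ge \maxcardinality^2$.

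The heart of the argument is the execution trace. In the first phase each element of $V_1$ is inserted, since its marginal gain is $1\ge 0$ and $|S|$ stays below $\maxcardinality$ until $V_1$ is exhausted; thus $S=V_1$ and $f(S)=\maxcardinality$ at the end of the phase. In the second phase I would show every element of $S^*$ is rejected: the best available exchange has gain $0$, which is strictly below the threshold $f(S)/\maxcardinality = 1$. The third phase is where the recursive weights do their work. I would prove by induction on $i$ that just before the $i$-th element of $V_2$ arrives, $S$ consists of $\maxcardinality-(i-1)$ elements of $V_1$ together with the first $i-1$ elements of $V_2$, so that $f(S)=\maxcardinality-(i-1)+P_{i-1}$, where $P_{i-1}=\sum_{j=1}^{i-1}w_j$. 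Because a surviving element of $V_1$ contributes only $1$ to the cut while each already-inserted $V_2$ element contributes strictly more, the $\argmax$ step of \cref{alg:buchbinder} always evicts an element of $V_1$, making the exchange gain exactly $w_i-1$; the defining recursion for $w_i$ is precisely what forces $w_i-1$ to equal the current threshold $f(S)/\maxcardinality$, so the swap is performed. Hence all of $V_2$ enters $S$, and the output is $S=V_2$ with $f(S)=P_\maxcardinality$.

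It then remains to evaluate $P_\maxcardinality$. Writing the recursion for the partial sums as $P_i=(1+1/\maxcardinality)P_{i-1}+(2\maxcardinality+1-i)/\maxcardinality$ with $P_1=2$, I would solve this first-order linear recurrence: a particular solution is $i-\maxcardinality$, so after fitting the initial condition the general solution is $P_i=\maxcardinality(1+1/\maxcardinality)^i+i-\maxcardinality$. At $i=\maxcardinality$ the affine terms cancel and
\[
	f(S)=P_\maxcardinality=\maxcardinality\left(1+\frac{1}{\maxcardinality}\right)^{\maxcardinality}.
\]
Using the elementary bound $(1+1/\maxcardinality)^{\maxcardinality}<e$ gives $f(S)<e\maxcardinality$, and combining with $f(S\cup S^*)\ge \maxcardinality^2$ yields $f(S)\le (P_\maxcardinality/\maxcardinality^2)\,f(S\cup S^*)<(e/\maxcardinality)\,f(S\cup S^*)$, as claimed; the hypothesis $\maxcardinality\ge 1+e$ ensures the resulting ratio $e/\maxcardinality$ lies below $1$, so that the bound is a genuine obstruction rather than the trivial $f(S)\le f(S\cup S^*)$.

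The step I expect to be the main obstacle is the bookkeeping of the third phase, where one must maintain simultaneously that the evicted element is always drawn from $V_1$ (so the exchange gain is the clean value $w_i-1$) and that this gain matches the moving threshold $f(S)/\maxcardinality$ at every step. This is exactly the invariant the unusual recursion for $w_i$ was engineered to preserve, and carrying an induction hypothesis that tracks both the composition of $S$ and the value of $f(S)$ is the delicate part. Once that invariant is established, solving the recurrence and invoking $(1+1/\maxcardinality)^{\maxcardinality}<e$ is routine.
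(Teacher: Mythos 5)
Your proof is correct and follows the same overall structure as the paper's: trace the three phases of Algorithm~\ref{alg:buchbinder}, show that every element of $V_2$ evicts a $V_1$ element because the recursion for $w_i$ makes the exchange gain $w_i-1$ exactly meet the moving threshold $f(S)/\maxcardinality$, conclude $S=V_2$, and compare $f(V_2)$ with $f(V_2\cup S^*)\geq \maxcardinality^2$. The one place you diverge is the evaluation of $\sum_i w_i$: the paper first proves by induction the closed form $w_i = 2+\sum_{j=1}^{i-1}\binom{i-1}{j}\maxcardinality^{-j}$ and then sums it using the hockey-stick identity, whereas you solve the first-order linear recurrence $P_i=(1+1/\maxcardinality)P_{i-1}+(2\maxcardinality+1-i)/\maxcardinality$ for the partial sums directly, obtaining $P_i=\maxcardinality(1+1/\maxcardinality)^i+i-\maxcardinality$ and hence $f(S)=P_\maxcardinality=\maxcardinality(1+\maxcardinality^{-1})^\maxcardinality\leq e\maxcardinality$; both routes land on the identical quantity, and yours is arguably the cleaner computation. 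One small omission: the lemma asserts that $S^*$ \emph{is} the optimal solution, and the hypothesis $\maxcardinality\geq 1+e$ is used in the paper precisely to certify this (each $S^*$ vertex contributes $\maxcardinality$ to the cut, while each $V_2$ vertex contributes $w_i\leq 1+e\leq\maxcardinality$), not merely to make $e/\maxcardinality<1$ as you suggest. This is immediately repaired from your own closed form, since $w_i=P_i-P_{i-1}=1+(1+1/\maxcardinality)^{i-1}\leq 1+e$, so you should state that bound explicitly.
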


\begin{proof}
	We begin the proof by showing, through an induction argument, that $w_i = 2 + \sum_{j=1}^{i-1} \binom{i-1}{j} \maxcardinality^{-j}$.
	The base of induction is trivial as $w_1 = 2$.
	Assuming the induction argument is correct for $h \leq i-1$, we prove that it is also correct for $i$.
	\begin{align*}
	w_i   = \frac{2k +1 - i + \sum_{j=1}^{i-1} w_j}{\maxcardinality} & = 2 + \frac{1 - i + \sum_{j=1}^{i-1} \left(2 + \sum_{\ell=1}^{j-1} \binom{j-1}{\ell} \maxcardinality^{-\ell} \right)}{\maxcardinality} \\
	& =  2 + \frac{i - 1 + \sum_{\ell =1}^{i-2} \sum_{j=\ell+1}^{i-1} \binom{j-1}{\ell} \maxcardinality^{-\ell}}{\maxcardinality} \overset{(a)}{=}  2 + \frac{i - 1 + \sum_{\ell =1}^{i-2} \binom{i-1}{\ell+1} \maxcardinality^{-\ell}}{\maxcardinality}  \\
	& = 2 + \binom{i-1}{1} \maxcardinality^{-1}+ \sum_{\ell =1}^{i-2} \binom{i-1}{\ell+1} \maxcardinality^{-(\ell+1)} = 2 + \sum_{j=1}^{i-1} \binom{i-1}{j} \maxcardinality^{-j} \enspace,
	\end{align*}
	where in $(a)$ we use the following well-known equality $\sum_{\ell=j}^{i} \binom{\ell}{j} = \binom{i+1}{j+1}$, which implies $\sum_{\ell=j + 1}^{i + 1} \binom{\ell - 1}{j} = \binom{i+1}{j+1}$. As a corollary of this proof, we get $w_i \leq 2 + [(1 + \maxcardinality^{-1})^{i-1} - 1] \leq 1 + (1 + \maxcardinality^{-1})^{\maxcardinality} \leq 1 + e$, which implies that the optimal solution is $S^* = \{e_{\maxcardinality+1}, e_{\maxcardinality+2}, \dots, e_{2\maxcardinality}\}$ whose value is $f(S^*) =  \maxcardinality^2$.
	
	When the first $\maxcardinality$ elements $V_1 = \{u_1, \dots, u_\maxcardinality\}$ arrive, all of them are added to the solution $S$ as the marginal gain of each one of them is $1$.
	Thus, when an element $u \in S^*$ arrive, we have  $ f(S \setminus \{u'\} \cup \{u\} ) - f(S) = 0$ for every $u' \in S$. Therefore,
	none of the elements of $S^*$ would be added to the solution.
	Next, we prove that all elements in  $V_2 = \{u_{2\maxcardinality+1}, \dots, u_{3\maxcardinality} \}$ would replace an element in $V_1$ and  be in the final solution $S$ of \cref{alg:buchbinder}.
	Again, we prove this claim by induction. 
	When $u_{2\maxcardinality+1}$ arrives, for all $u' \in S$ we have:
	\[  f(S \setminus \{u'\} \cup \{u_{2\maxcardinality+1}\} ) - f(S) = 1 \geq 1 = \frac{f(S)}{\maxcardinality} \enspace, \]
	and $u_{2\maxcardinality+1}$ replaces one of the elements from $V_1$.
	Assume now that elements $\{u_{2\maxcardinality+1}, \dots, u_{2\maxcardinality+i-1} \}$ for some integer $i < \maxcardinality$ have each replaced one of the element of $V_1$, and let us show that this implies that $u_{2\maxcardinality+i}$ would also replace one element $u'$ from $V_1$. This is true because for every such element $u' \in S$ we have
	\begin{align*}
		f(S \setminus \{u'\} \cup \{u_{2\maxcardinality+i}\} ) - f(S) = w_{i} - 1 = \frac{\maxcardinality + 1 - i + \sum_{j=1}^{i-1} w_j}{\maxcardinality} = \frac{\maxcardinality - ( i - 1) + \sum_{j=1}^{i-1} w_j}{\maxcardinality} = \frac{f(S)}{\maxcardinality} \enspace.
	\end{align*}
	As a corollary, we get that for the final solution $S = V_2$, we have  
	\begin{align*}
		f(S)
		={} &
		\sum_{i=1}^{\maxcardinality} w_i
		=
		2\maxcardinality + \sum_{i = 1}^\maxcardinality \sum_{j=1}^{i-1} \binom{i-1}{j} \maxcardinality^{-j}
		=
		2\maxcardinality + \sum_{i = 1}^{\maxcardinality - 1} \sum_{j=i}^{\maxcardinality-1} \binom{j}{i} \maxcardinality^{-i}\\
		={} &
		2\maxcardinality + \sum_{i=1}^{\maxcardinality-1} \binom{\maxcardinality}{i+1}\maxcardinality^{-i}
		=
		2 \maxcardinality + \maxcardinality \left((1 + \maxcardinality^{-1})^\maxcardinality - 2\right)
		\leq
		e\maxcardinality
		\enspace.
	\end{align*}
	This proves the lemma since $f(S^* \cup V_2) \geq f(S^*) \geq \maxcardinality^2$.
\end{proof}

\section{A Deterministic Streaming Algorithm for Submodular Maximization Subject to a \texorpdfstring{$k$}{k}-Matchoid Constraint} \label{sec:application}

As discussed in \cref{sec:framework}, \citet{chekuri2015streaming} already described a method to convert their algorithm for the problem of maximizing a non-negative monotone submodular function subject to a $k$-matchoid constraint into a deterministic algorithm that works also for non-monotone functions. The algorithm they obtained in this way has an approximation guarantee of $8k + \gamma$, where $\gamma$ is the approximation ratio of the offline algorithm used in the conversion. In this section we show that via our framework it is possible to get a somewhat better guarantee for the same problem.\footnote{Technically, the algorithm of~\cite{chekuri2015streaming} is identical to the algorithm obtained via our framework for $r = 2$, and the approximation guarantee they obtained can be reproduced using our framework by setting $r$ to this value. However, as our framework can handle other values of $r$ as well, we manage to get a slightly better guarantee by setting $r = 3$.}

The algorithm that we use as \AlgStream is the deterministic algorithm for monotone functions designed by~\cite{chekuri2015streaming}. Following we state some properties of this algorithm. We begin with a bound on its approximation guarantee. For this bound, let us denote by $S$ the final solution of the algorithm and by $A$ the set of elements that ever appeared in the solution maintained by the algorithm.
\begin{lemma}[Lemma~11 of~\cite{chekuri2015streaming}]
	Let $T \in \cI$ be an independent set. Then,
	\[ f(T \cup A) \leq \rho \alpha' + \dfrac{(1 + \beta')^2}{\beta'} \cdot k \cdot f(S) \enspace, \]
	where $\rho$ is an upper bound on the cardinality of the optimal set and the two non-negative parameters $\alpha'$ and $\beta'$ are inputs to the algorithm.
\end{lemma}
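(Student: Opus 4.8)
The plan is to analyze the preemptive algorithm through the \emph{insertion weight} of each element. For every element $e$ that is ever placed into the maintained solution (i.e.\ $e \in A$), let $v(e) = f(e \mid S_e)$ be its marginal value at the moment of insertion, where $S_e$ is the solution held just before, and for a set $C$ write $w(C) = \sum_{e \in C} v(e)$ and $W = w(A)$. First I would record the two quantitative facts that drive the argument. On the \emph{profit} side, ordering the survivors $S = \{s_1,\dots,s_m\}$ by insertion time, submodularity gives $f(s_i \mid \{s_1,\dots,s_{i-1}\}) \ge f(s_i \mid S_{s_i}) = v(s_i)$ since $\{s_1,\dots,s_{i-1}\} \subseteq S_{s_i}$, so telescoping yields $f(S) \ge w(S)$. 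On the \emph{churn} side, the swap rule replaces a set $C_e$ by $e$ only when $v(e) \ge (1+\beta')\,w(C_e)$, so the total weight ever discarded is at most $\tfrac{1}{1+\beta'}W$; hence $w(S) \ge \tfrac{\beta'}{1+\beta'}W$. Combining the two gives the bound
\[
W \le \frac{1+\beta'}{\beta'}\, f(S) \enspace.
\]

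\textbf{Upper-bounding $f(T \cup A)$ by marginals.} Next I would expand $f(T \cup A)$ as a telescoping sum, inserting first all of $A$ (in insertion order) and then $T \setminus A$. By submodularity, each $a \in A$ contributes at most $v(a)$, so the $A$-part is at most $W$; elements of $T \cap A$ contribute nothing. For $t \in T \setminus A$ we have $f(t \mid A) \le f(t \mid S_t)$ because $S_t \subseteq A$. The rejection of $t$ is the informative event: since $t$ was not inserted, its gain either fell below the admission threshold $\alpha'$ or failed the profitability test against its minimum-weight exchange set $C_t \subseteq S_t$, so in all cases $f(t \mid A) \le \max\{\alpha', (1+\beta')\,w(C_t)\} \le \alpha' + (1+\beta')\,w(C_t)$. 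Using $|T| \le \rho$, summation gives
\[
f(T \cup A) \le W + \rho\alpha' + (1+\beta') \sum_{t \in T \setminus A} w(C_t) \enspace.
\]

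\textbf{The matchoid charging (main obstacle) and combination.} The crux is to bound $\sum_{t} w(C_t)$ by (essentially) $k\,W$. Here I would exploit the $k$-matchoid structure: writing the constraint as an intersection of matroids in which each element lies in at most $k$ of them, the exchange set $C_t$ removes at most one element from each matroid containing $t$. Working matroid by matroid, the independence of $T$ within a matroid together with the matroid exchange property yields an injective assignment of the rejected $T$-elements to the solution elements they displace, and the \emph{minimum}-weight choice of each $C_t$ controls the charged weight; since each element of $A$ participates in at most $k$ matroids it is charged at most $k$ times in total. Substituting this charge and the weight bound $W \le \tfrac{1+\beta'}{\beta'}f(S)$ into the previous display collapses the $f(S)$-coefficient to $\tfrac{(1+\beta')^2}{\beta'}k$, leaving the additive $\rho\alpha'$, which is the claim. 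I expect this weighted matchoid exchange to be the genuinely delicate step: one must build the per-matroid injection so that it is simultaneously compatible with the minimum-weight choice of every $C_t$ and charges each solution element at most once per matroid. I would also flag that the crude estimate $\sum_t w(C_t) \le k\,W$ together with the separate $W$ coming from the $f(A)$ term is slightly lossy, so obtaining \emph{exactly} the constant $\tfrac{(1+\beta')^2}{\beta'}k$ (rather than a larger multiple of $f(S)$) requires the minimum-weight exchange to be used sharply enough to absorb the survivors' own contribution into the charge rather than paying for it twice; pinning down this accounting is the part I would verify most carefully.
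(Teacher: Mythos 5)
First, a point of reference: the paper itself does not prove this statement. It is imported verbatim as Lemma~11 of Chekuri, Gupta and Quanrud and used as a black box in the appendix on $k$-matchoid constraints, so there is no internal proof to compare yours against; the comparison below is with the argument in the cited work.

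Your reconstruction has the right architecture --- insertion weights $v(e)=f(e\mid S_e)$, the profit bound $f(S)\ge w(S)$ by telescoping, the churn bound $w(S)\ge \frac{\beta'}{1+\beta'}W$ from the swap rule, the expansion of $f(T\cup A)$ into marginals, and a charging of the rejected elements' exchange sets --- and this is indeed how the original proof is organized. But there are two genuine gaps. The first is the one you flag yourself: essentially all of the combinatorial content of the lemma lives in the claim $\sum_{t\in T\setminus A} w(C_t)\le k\,W$ (or its sharper variant), and you do not prove it; you only describe the intended per-matroid injection. Building that injection so that each element of the maintained solution is charged at most once per matroid containing it, uniformly over the whole stream and compatibly with the minimum-weight choice of each $C_t$, is precisely the hard part (it occupies its own exchange lemma in the cited paper), so as written the proposal assumes the lemma's core. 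The second gap is quantitative, and you half-notice it: even granting $\sum_t w(C_t)\le kW$, your display yields $f(T\cup A)\le \rho\alpha' + W + (1+\beta')kW \le \rho\alpha' + \frac{1+\beta'}{\beta'}f(S) + \frac{(1+\beta')^2}{\beta'}k\,f(S)$, which exceeds the stated bound by the additive term $\frac{1+\beta'}{\beta'}f(S)$ coming from the $f(A)$ contribution. Closing this requires either absorbing that contribution into the charging itself (so that surviving elements pay for themselves inside the factor $k$) or a different decomposition of $f(T\cup A)$; ``the accounting I would verify most carefully'' is exactly where the argument, as written, does not reach the claimed constant. So the proposal is a faithful outline of the known approach, but not yet a proof.
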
 

In our notation, the last lemma implies that the deterministic algorithm of~\cite{chekuri2015streaming} is an $(k(1 + \beta')^2/\beta', \rho \alpha')$-approximation algorithm. \citet{chekuri2015streaming} also proved that this algorithm has the space complexity of a semi-streaming algorithm as long as $\alpha'$ is at least a constant fraction of $\Opt / \rho$. In particular, they showed the following lemma, which shows that in this regime the size of $A$ is linear in $\rho$.
\begin{lemma}[Lemma~5 of~\cite{chekuri2015streaming}]
	$|A| \leq \Opt / \alpha'$.
\end{lemma}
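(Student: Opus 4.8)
The final statement is Lemma~5 of~\cite{chekuri2015streaming}, asserting that $|A| \leq \Opt / \alpha'$, where $A$ is the set of all elements that ever entered the solution maintained by the deterministic streaming algorithm of~\cite{chekuri2015streaming} and $\alpha'$ is the input threshold parameter.

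\medskip

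\textbf{Proof plan.} The plan is to charge the value of every element that ever enters the working solution against the value of the final solution, using the fact that the algorithm only admits an element (or performs a swap) when its \emph{marginal} gain clears a fixed threshold related to $\alpha'$. First I would recall that in the threshold-based preemptive scheme of~\cite{chekuri2015streaming}, an arriving element $u$ is accepted into the current solution only if its marginal contribution, measured with respect to the part of the solution present at the time of $u$'s arrival, is at least some quantity proportional to $\alpha'$ (this is exactly the gate that makes $\alpha'$ a meaningful per-element value floor). Consequently, every element of $A$ carries with it, at the moment it is inserted, a marginal gain of at least $\alpha'$ against the then-current solution.

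\medskip

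The key step is then a telescoping / potential-function argument: I would track the quantity $f$ of the maintained solution over the course of the stream and argue that each insertion (and each beneficial swap) increases this potential by at least the admission threshold net of whatever is displaced. Summing these increments over all $|A|$ insertions, and using that (i) $f$ is non-negative and starts at $f(\varnothing) \geq 0$, and (ii) the final potential $f(S)$ is at most $\Opt$ since $S \in \cI$, yields that the total admitted ``mass'' is bounded by $\Opt$. Because each of the $|A|$ admissions contributes at least $\alpha'$ to this mass, we obtain $|A| \cdot \alpha' \leq \Opt$, which rearranges to the claimed bound $|A| \leq \Opt/\alpha'$. The submodularity of $f$ is what guarantees that the marginal-gain floor at insertion time does not ``double count'': the contribution credited to each element is against the solution at its own arrival, and telescoping these respects submodularity exactly.

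\medskip

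The main obstacle I anticipate is handling the swaps cleanly. When an element is admitted by displacing a previously accepted element, the net increase in $f(S)$ is smaller than the raw marginal gain of the incoming element, so a naive sum of per-element thresholds would overcount. The fix is to credit each newly admitted element not with its full threshold but with the \emph{net} potential increase it causes, and to argue (again via the threshold rule governing swaps, which in~\cite{chekuri2015streaming} requires the incoming marginal to exceed the outgoing contribution by at least the threshold) that this net increase is still at least $\alpha'$ per \emph{admitted} element. Making this bookkeeping precise—so that the per-element credit of $\alpha'$ is collected once for every element that ever enters $A$, and the telescoped total is genuinely bounded by the final $f(S) \leq \Opt$ rather than by a larger over-counted sum—is the delicate part; the remaining inequalities are routine.
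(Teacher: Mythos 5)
First, a point of order: the paper does not prove this statement at all --- it is imported verbatim as Lemma~5 of Chekuri et al.\ and used as a black box, so there is no in-paper proof to compare yours against. What you have written is a reconstruction of the argument from the cited source, and in outline it is the standard (and correct) one: every admission into the maintained solution clears a threshold with an additive $\alpha'$ slack, the total accumulated ``mass'' is bounded by the value of the final feasible solution, hence by $\Opt$, and dividing by $\alpha'$ bounds the number of admissions, which is exactly $|A|$.

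Two caveats. First, your text is explicitly a plan rather than a proof: the step you yourself flag as ``the delicate part'' --- showing that each admission, including one that evicts previously accepted elements, contributes a \emph{net} increase of at least $\alpha'$ to the potential --- is precisely the content of the lemma, and it is deferred rather than carried out. Second, the potential you propose to track, namely $f$ of the current solution, is not quite the right one: under a swap of $u$ for a set $U$, the change $f\bigl((S \setminus U) + u\bigr) - f(S)$ is not directly controlled by the admission condition, which compares $f(u \mid S)$ against the stored insertion-time weights of $U$. The clean bookkeeping uses the weight potential $w(S) = \sum_{v \in S} f(v \mid S_v)$, where $S_v$ is the solution at the moment $v$ was inserted; the swap rule guarantees $w$ increases by at least $\alpha'$ per admission, and submodularity (telescoping over the insertion order of the surviving elements) gives $w(S) \leq f(S) \leq \Opt$. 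With that substitution your argument closes; as written, the $f$-based potential would not.
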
 

For \AlgConstrained we use the {{\textsc{\textsc{RepeatedGreedy}}}\xspace} algorithm of~\cite{feldman2017greed}, which works for general $k$-systems constraints ($k$-matchoid constraints are a special case of $k$-systems constraints). The approximation ratio of this algorithm is $k + O(\sqrt{k})$, and it can be implemented to run in linear space. Plugging these two algorithms into our framework, we get the following corollary.

\begin{corollary} \label{cor:p-matchoid}
For every $\varepsilon \in (0, 1/8]$, by setting $\beta' = 1$, $\alpha' = \varepsilon \cdot \Opt / (3\rho)$ and $r = 3$, our framework produces a deterministic streaming algorithm for the problem of maximizing a non-negative (not necessary monotone) submodular function subject to a $k$-matchoid constraint. The approximation ratio of this algorithm is at most $(15/2 + 4\varepsilon)k + O(\sqrt{k})$.
\end{corollary}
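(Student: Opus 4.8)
The plan is to combine the two lemmas quoted from~\cite{chekuri2015streaming} to show that the deterministic monotone algorithm of~\cite{chekuri2015streaming}, for the stated parameter choices, is an $(\alpha, \gamma)$-approximation algorithm in the sense of \cref{def:req-cond}, and then feed it together with \textsc{RepeatedGreedy} into \cref{theorem:non-monotone-streaming} with $r = 3$. First I would instantiate the first quoted lemma with $\beta' = 1$, which turns the coefficient $(1 + \beta')^2/\beta'$ into $4$, so the guarantee reads $f(T \cup A) \leq \rho\alpha' + 4k \cdot f(S)$ for every $T \in \cI$. Setting $\alpha' = \varepsilon \cdot \Opt / (3\rho)$ makes the additive term equal $\rho\alpha' = \varepsilon \Opt / 3$, so in the language of \cref{def:req-cond} the algorithm is an $(\alpha, \gamma)$-approximation algorithm with $\alpha = 4k$ and $\gamma = \varepsilon \Opt / 3$. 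Note this choice of $\alpha'$ is exactly a constant fraction of $\Opt/\rho$, so the second quoted lemma applies and certifies that $|A| \leq \Opt/\alpha' = 3\rho/\varepsilon$, i.e.\ the space used for $A$ is linear in $\rho$; this is what makes the whole construction a genuine semi-streaming algorithm and justifies invoking \AlgConstrained on $A$.

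Next I would plug $\alpha = 4k$, $\gamma = \varepsilon\Opt/3$, the \textsc{RepeatedGreedy} ratio $\beta = k + O(\sqrt{k})$, and $r = 3$ into the bound of \cref{theorem:non-monotone-streaming}, namely $\bE[f(S)] \geq \frac{(r-1)\Opt - r\gamma}{r\alpha + r(r-1)\beta/2}$. The numerator becomes $2\Opt - 3 \cdot \varepsilon\Opt/3 = (2 - \varepsilon)\Opt$, and the denominator becomes $3 \cdot 4k + 3 \cdot 2 \cdot \beta/2 = 12k + 3\beta = 12k + 3k + O(\sqrt{k}) = 15k + O(\sqrt{k})$. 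Dividing, the output obeys $f(S) \geq \frac{(2 - \varepsilon)\Opt}{15k + O(\sqrt{k})}$, so the approximation ratio is $\frac{15k + O(\sqrt{k})}{2 - \varepsilon}$. The remaining task is purely the elementary estimate showing this equals at most $(15/2 + 4\varepsilon)k + O(\sqrt{k})$ for $\varepsilon \in (0, 1/8]$; writing $\frac{1}{2 - \varepsilon} = \frac{1}{2} + \frac{\varepsilon}{2(2 - \varepsilon)}$ and bounding $\frac{\varepsilon}{2(2-\varepsilon)} \leq \frac{\varepsilon}{2 \cdot (15/8)}$ for $\varepsilon \leq 1/8$ gives a multiplier of the form $\tfrac12 + \Theta(\varepsilon)$ on the $15k$ term, and tracking constants turns this into the claimed $(15/2 + 4\varepsilon)k$ leading term with the lower-order $O(\sqrt{k})$ absorbed.

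I would also state explicitly the two facts that make the resulting algorithm deterministic and semi-streaming, since the corollary asserts both: determinism follows from the first bullet of \cref{theorem:non-monotone-streaming} because both the \cite{chekuri2015streaming} algorithm and \textsc{RepeatedGreedy} are deterministic, and the space bound follows from the second bullet together with $|A| \leq 3\rho/\varepsilon$ and the semi-streaming space complexity of \textsc{RepeatedGreedy}, with $r = 3$ contributing only a constant factor. The main obstacle I anticipate is not any deep step but rather the bookkeeping in the final arithmetic: one must verify that the $O(\sqrt{k})$ terms coming from the \textsc{RepeatedGreedy} ratio and from the division by $2 - \varepsilon$ genuinely stay in the lower-order $O(\sqrt{k})$ bucket rather than contaminating the $(15/2 + 4\varepsilon)k$ leading coefficient, and that the constant $4$ multiplying $\varepsilon$ is indeed correct for the whole admissible range $\varepsilon \in (0, 1/8]$ rather than only in the limit $\varepsilon \to 0$. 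This requires choosing the slack constants carefully, but it is routine once the structure above is in place.
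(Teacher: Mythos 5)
Your proposal is correct and follows essentially the same route as the paper's proof: instantiate the Chekuri et al.\ lemma with $\beta'=1$ to get an $(4k,\varepsilon\Opt/3)$-approximation, plug $r=3$ and $\beta = k+O(\sqrt{k})$ into \cref{theorem:non-monotone-streaming} to obtain the ratio $\frac{15k+O(\sqrt{k})}{2-\varepsilon}$, and then bound this by $(15/2+4\varepsilon)k+O(\sqrt{k})$ using $2-\varepsilon\geq 15/8$. The only difference is that you spell out the final elementary estimate and the space/determinism bookkeeping, which the paper leaves implicit or delegates to the surrounding discussion.
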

\begin{proof}
By Theorem~\ref{theorem:non-monotone-streaming}, the algorithm obtained in this way produces a set whose value is at least
\[
	\frac{(r-1) \cdot \Opt - r\gamma}{r\alpha +  r(r-1)\beta/2}	
	=
	\frac{2 \cdot \Opt - \varepsilon \cdot \Opt}{12k +  3(k + O(\sqrt{k}))}	
	=
	\frac{2 \cdot \Opt - \varepsilon \cdot \Opt}{15k +  O(\sqrt{k})}	
	\enspace,
\]
and this implies that the approximation ratio of the algorithm is at most
\[
	\frac{15k + O(\sqrt{k})}{2 - \varepsilon}
	\leq
	\left(\frac{15}{2} + 4\varepsilon\right) \cdot k + O(\sqrt{k})
	\enspace.
	\qedhere
\]
\end{proof}

Before concluding this section, we note that the algorithm suggested by Corollary~\ref{cor:p-matchoid} assumes pre-knowledge of $\Opt$ and $\rho$ since these values are necessary for calculating $\alpha'$. It is possible to guess the value of $\Opt$ up to a small error using a technique originally due to~\cite{badanidiyuru2014streaming}, and this has no effect on the approximation guarantee of the algorithm (but slightly increases its space complexity). As the details of this are discussed by~\cite{chekuri2015streaming}, we avoid repeating them here. Regarding $\rho$, \citet{chekuri2015streaming} assumed pre-knowledge of $\rho$, and we take the same approach in this section. However, it is possible to modify the algorithm to avoid the need to have this pre-knowledge, and we demonstrate the technique leading to this possibility when discussing our algorithm for general $k$-systems.
\section{Extended Version of Our Algorithm} \label{app:full_algorithm}

In this section we present and analyze an extended version of our algorithm from Section~\ref{sec:algorithm} which need not assume pre-knowledge of $\maxcardinality$ and $\tau$. We do that in two steps. In Section~\ref{sec:extended_rho} we present a version of our algorithm that still assumes pre-access to $\tau$, but not to $\maxcardinality$; and in Section~\ref{sec:extended_tau} we show how to remove the need to known $\tau$ as well.

\subsection{Algorithm without Access to \texorpdfstring{$\rho$}{p}} \label{sec:extended_rho}

As an alternative to $\maxcardinality$, the algorithm we present in this section (which is given as \cref{alg:extended}) uses the size of a set $G$ produced by running the unweighted greedy algorithm on the entire input. Since the value of this alternative can increase over time, the algorithm has to create additional sets $E_i$ on the fly. We also note that the formula for $\ell$ used by \cref{alg:extended} is slightly different than the corresponding formula in \cref{alg:streaming-k-system}. 

\begin{algorithm2e}
	\DontPrintSemicolon
	\caption{Streaming Algorithm for $\ksys$-Systems (with no pre-access to $\maxcardinality$)} \label{alg:extended}
	\textbf{Input: } a value $\tau \in [M, 2M]$ and the parameter $k$ of the constraint.\\
	\textbf{Output: } a solution $T \in \cI$\\
	Let $G \gets \varnothing$, $\ell \gets -1$ and $h \gets \lceil \log_2(2k+1) \rceil$. \\
	\For{every element $u$ arriving}
	{
		\lIf{$G + u \in \cI$}{Add $u$ to $G$.}
		Let $\ell' \gets \lfloor 2\log_2 (k|G|) + 3 \rfloor$.\\
		\lFor{$i = \ell + 1$ \KwTo $\ell'$}{Initialize $E_i \gets \varnothing$.}
		Update $\ell \gets \ell'$.
		
		\BlankLine

		Let $m(u) \leftarrow f\left(u \mid \cup_{i=0}^{\ell} E_{i}\right) $.\\		
		\leIf{$m(u) > 0$}{Let $i(u) \leftarrow\left\lfloor\log _{2}(\tau / m(u))\right\rfloor$}{Let $i(u) \gets \infty$.}
		\lIf{$0 \leq i(u) \leq \ell $ and $E_{i(u)} + u \in \cI$}{Update $E_{i(u)} \gets E_{i(u)} + u$.}
	}
\BlankLine
\For{$j=0$ \textbf{to} $h - 1$}
{
	Let $i \gets j$ and $T_j \gets \emptyset$.\\
	\While{$i \leq \ell$}
	{
		\lWhile{there is an element $u \in E_i$ such that $T_j + u \in \cI$}{Update $T_j \gets T_j + u$.}
		$i \gets i + h$.
}
}
	\Return{the set $T$ maximizing $f$ among $T_0, T_1, \cdots, T_{h-1}$}.
\end{algorithm2e}

We begin the analysis of \cref{alg:extended} by showing that it has the space complexity of a semi-streaming algorithm.
\begin{lemma}
\cref{alg:extended} stores $O(\maxcardinality (\log \maxcardinality + \log k)) = \tilde{O}(\maxcardinality)$ elements at every given time point.
\end{lemma}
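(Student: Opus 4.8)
The statement to prove is that \cref{alg:extended} stores $O(\maxcardinality(\log \maxcardinality + \log k)) = \tilde O(\maxcardinality)$ elements at any time. The plan is to mirror the space argument already given for \cref{alg:streaming-k-system}: all elements are kept only in the sets $G$, $E_0,\dots,E_\ell$, and $T_0,\dots,T_{h-1}$, each of which is maintained independent; since a base of a $k$-set system over the whole ground set has size at most $\maxcardinality$ (the size of the largest independent set), each such set contains at most $\maxcardinality$ elements. The total is therefore at most $(1 + (\ell+1) + h)\maxcardinality$, and the whole task reduces to bounding $\ell$ and $h$ by $O(\log \maxcardinality + \log k)$. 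Since $h = \lceil \log_2(2k+1)\rceil = O(\log k)$ exactly as before, the only genuinely new ingredient is bounding $\ell$.

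First I would bound $\ell$. Unlike in \cref{alg:streaming-k-system}, here $\ell$ is set dynamically to $\ell' = \lfloor 2\log_2(k|G|) + 3\rfloor$, so it suffices to bound $|G|$. The key observation is that $G$ is produced by the unweighted greedy algorithm run on the \emph{entire} ground set $\cN$, and hence $G$ is a base of $(\cN,\cI)$; by \cref{def:k-set-system} applied with $E = \cN$, any base of $\cN$ has size at most $\maxcardinality$ (indeed $\maxcardinality$ is defined as the size of the \emph{largest} independent set, so $|G| \le \maxcardinality$ trivially, and in fact every independent set satisfies this). Therefore at every point in time
\[
	\ell \le 2\log_2(k|G|) + 3 \le 2\log_2(k\maxcardinality) + 3 = O(\log \maxcardinality + \log k)
	\enspace.
\]
The one subtlety to flag is that $\ell$ is monotonically non-decreasing over the execution and equals the \emph{current} value of $\lfloor 2\log_2(k|G|)+3\rfloor$; since $|G|$ never exceeds $\maxcardinality$, this bound holds uniformly throughout, so no sets $E_i$ with index exceeding $O(\log\maxcardinality+\log k)$ are ever created.

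Finally I would assemble the pieces exactly as in the earlier lemma. The number of stored elements is at most
\[
	(1 + (\ell + 1) + h)\maxcardinality
	=
	[O(\log \maxcardinality + \log k)]\maxcardinality
	=
	O(\maxcardinality(\log \maxcardinality + \log k))
	\enspace,
\]
where the additional $+1$ accounts for the greedy set $G$ (itself of size at most $\maxcardinality$) that this version maintains but \cref{alg:streaming-k-system} does not. I do not expect any real obstacle here: the argument is essentially identical to the space analysis of \cref{alg:streaming-k-system}, and the only modification is replacing the fixed formula for $\ell$ by the dynamic one and observing that $|G| \le \maxcardinality$ because $G$ stays independent. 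The main thing to get right is justifying that the running maximum index $\ell$ stays $O(\log\maxcardinality+\log k)$ — which follows immediately from $|G|\le\maxcardinality$ — rather than anything computational.
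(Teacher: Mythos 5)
Your proof is correct and follows essentially the same route as the paper's: bound $|G|\le\maxcardinality$ because $G$ is kept independent, conclude $\ell = O(\log k + \log\maxcardinality)$, and then count the independent sets $E_i$ and $T_j$ (each of size at most $\maxcardinality$) exactly as in the space analysis of \cref{alg:streaming-k-system}. The only cosmetic difference is that you explicitly add the $+1$ term for storing $G$ itself, which the paper's proof leaves implicit; this does not affect the asymptotic bound.
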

\begin{proof}
Observe that the set $G$ is kept as an independent set by the algorithm, and thus, its size is at most $\rho$, and we get that at all times $\ell = O(\log (k\rho)) = O(\log k + \log \maxcardinality)$. We now also note that \cref{alg:extended} stores elements only in the sets $E_0, E_1, \dotsc, E_\ell$ and the sets $T_0, T_1, \dotsc, T_{h - 1}$. Since these sets are kept independent by the algorithm, each one them contains at most $\maxcardinality$ elements. Thus, the number of elements stored by \cref{alg:extended} is upper bounded by
\[
	(\ell + h)\maxcardinality
	=
	[O(\log \maxcardinality + \log k)]\maxcardinality
	=
	O(\maxcardinality (\log \maxcardinality + \log k))
	\enspace.
	\qedhere
\]
\end{proof}

We now get to analyzing the approximation ratio of \cref{alg:extended}. One can verify that all the proofs in the analysis of the approximation ratio of~\cref{alg:streaming-k-system} from \cref{sec:algorithm} apply (as is) also to \cref{alg:extended}, except for the proof of Lemma~\ref{lem:E_bound}. Thus, in the rest of this section our objective is to show that Lemma~\ref{lem:E_bound} applies to \cref{alg:extended}, despite the fact that its original proof from \cref{sec:algorithm} does not apply to it.

Let us define $R$ to be a set including every element $u \in \cN$ for which either $i(u) < 0$ or $i(u) > \ell$ at the moment of $u$'s arrival. The following lemma allows us to bound the value of the elements in $R$.
\begin{lemma} \label{lem:R_bound}
For every independent set $S$, $\sum_{u \in S \cap R} m(u) \leq \tau/4$.
\end{lemma}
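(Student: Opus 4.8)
The plan is to show that membership in $R$ forces the marginal value $m(u)$ to be extremely small, and then to control the total of these small values by a \emph{cumulative} counting argument against the greedy set $G$.

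First I would dispose of the case $i(u) < 0$. For $u \in S$ the set $\{u\}$ is independent (independence systems are down-closed), so by submodularity and non-negativity $m(u) = f\!\left(u \mid \bigcup_{i} E_i\right) \le f(\{u\}) \le M \le \tau$, which gives $i(u) = \lfloor \log_2(\tau/m(u))\rfloor \ge 0$ whenever $m(u) > 0$ (and elements with $m(u)=0$ contribute nothing). Hence $u \in S \cap R$ forces $i(u) > \ell$ at $u$'s arrival. Writing $g_u$ for the size of $G$ right after $u$ is processed and $\ell_u = \lfloor 2\log_2(kg_u)+3\rfloor$ for the value of $\ell$ used at that moment, the condition $i(u) > \ell_u$ yields $\tau/m(u) \ge 2^{\ell_u+1} > 8(kg_u)^2$, i.e. $m(u) < \tau/(8(kg_u)^2)$.

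The heart of the argument is a bound on how many elements of $S$ can arrive early. For $g \ge 1$ let $G^{(g)}$ denote the greedy set at the first moment $|G| = g$ (so $|G^{(g)}| = g$). I claim every $u \in S$ with $g_u \le g$ is blocked by $G^{(g)}$: if $u \notin G$ it was rejected, so $G^{(g_u)} + u \notin \cI$, and since $G^{(g_u)} \subseteq G^{(g)}$ down-closedness gives $G^{(g)} + u \notin \cI$; and if $u \in G$ then $u \in G^{(g)}$ already. Thus $G^{(g)}$ is a base of $E = G^{(g)} \cup \{u \in S : g_u \le g\}$, while $\{u \in S : g_u \le g\}$ is itself independent and contained in $E$, so by \cref{def:k-set-system} its size is at most $k\,|G^{(g)}| = kg$. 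In particular the partial sums $A_g := |\{u \in S \cap R : g_u \le g\}|$ satisfy $A_g \le kg$ for every $g$. Finally, grouping $S \cap R$ by the value $g_u$ (with $a_g$ the count of elements having $g_u = g$),
\[
\sum_{u \in S \cap R} m(u) < \frac{\tau}{8k^2}\sum_{g \ge 1}\frac{a_g}{g^2}
= \frac{\tau}{8k^2}\sum_{g \ge 1} A_g\Big(\tfrac{1}{g^2} - \tfrac{1}{(g+1)^2}\Big)
\le \frac{\tau}{8k^2}\sum_{g \ge 1} kg\cdot\frac{2g+1}{g^2(g+1)^2},
\]
where the middle equality rewrites $1/g^2 = \sum_{j \ge g}(1/j^2 - 1/(j+1)^2)$ and exchanges the order of summation. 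Bounding $\frac{2g+1}{g(g+1)^2} \le \frac{2}{g(g+1)} = 2\big(\tfrac1g - \tfrac1{g+1}\big)$ makes the last sum telescope to at most $2$, so $\sum_{u \in S \cap R} m(u) < \tau/(4k) \le \tau/4$.

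The step I expect to be the main obstacle is establishing the \emph{cumulative} count $A_g \le kg$ rather than a mere per-window count. The naive bound, that at most $kg$ elements of $S$ arrive while $|G| = g$, is true but too weak: it produces a divergent (harmonic-type) sum and does not suffice. The crucial observation is that down-closedness lets a single base $G^{(g)}$ simultaneously block \emph{all} elements of $S$ arriving up to window $g$, which is exactly what converts the total into a convergent, telescoping sum and delivers the $\tau/4$ bound.
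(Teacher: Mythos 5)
Your proof is correct and follows essentially the same route as the paper's: rule out $i(u)<0$ via $m(u)\le f(\{u\})\le M\le\tau$, deduce $m(u)\le \tau/(8k^2|G|^2)$ from $i(u)>\ell$, and invoke the greedy $k$-approximation property to turn these tiny values into a convergent $\sum 1/j^2$-type series. The only difference is bookkeeping in the final summation: the paper observes that at least $j$ elements of $S$ have arrived by the time the $j$-th element of $S\cap R$ does, so $|G|\ge j/k$ and $m(u_j)\le\tau/(8j^2)$, which sums directly to $\tau/4$ and sidesteps the cumulative-count/Abel-summation step you flag as the main obstacle; your version is equally valid and in fact yields the slightly stronger bound $\tau/(4k)$.
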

\begin{proof}
Let us denote the elements of $S \cap R$ by $u_1, u_2, \dotsc, u_r$ in the order of their arrival. For every $1 \leq j \leq |S \cap R|$, since $u_j \in R$, at the moment in which either $u_j$ arrived $i(u)$ was either negative or larger than $\ell$. However, since $S$ is independent, $\tau \geq M \geq f(\{u_j\}) \geq m(u_j)$, and thus, the first option cannot happen, which leaves us only with the case
\[
	\left\lfloor \log_2\left(\frac{\tau}{m(u_j)}\right) \right\rfloor \geq \ell + 1
	\Rightarrow
	\frac{\tau}{m(u_j)} \geq 2^{\ell + 1}
	\Rightarrow
	m(u_j) \leq \frac{\tau}{2^{\ell + 1}}
	\leq
	\frac{\tau}{2^{2\log_2(k|G|) + 3}}
	=
	\frac{\tau}{8k^2|G|^2}
	\enspace.
\]

We now observe that at the moment referred to by the previous paragraph the algorithm already received at least $j$ elements of $S$, and thus, the size of $G$ was at least $j/k$ (recall that the unweighted greedy algorithm is a $k$-approximation algorithm). Hence,
\[
	m(u_j)
	\leq
	\frac{\tau}{8j^2}
	\enspace.
\]
Summing up this inequality over all $1 \leq j \leq |S \cap R|$, we get
\[
	\sum_{u \in S \cap R} m(u)
	\leq
	\sum_{j = 1}^{|S \cap R|} \frac{\tau}{8j^2}
	\leq
	\frac{\tau}{8} \cdot \left[1 + \int_1^\infty \frac{dx}{x^2}\right]
	=
	\frac{\tau}{8} \cdot \left[1 - \left[\frac{1}{x}\right]_1^\infty\right]
	=
	\frac{\tau}{4}
	\enspace.
	\qedhere
\]
\end{proof}

Using the last lemma, we can now prove that Lemma~\ref{lem:E_bound} applies also to \cref{alg:extended}. Recall that $E = \cup_{i = 0}^\ell E_i$.
{
\renewcommand{\thetheorem}{\ref{lem:E_bound}}
\renewcommand{\theHtheorem}{repeat}
\begin{lemma}
For every set $S \in \cI$, $f(E \mid \varnothing) = \sum_{i=0}^{\ell} \sum_{u \in E_i} m(u) \geq \frac{f(S \cup E \mid \varnothing) - \tau/4}{2k+1}$.
\end{lemma}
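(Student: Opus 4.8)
The plan is to reuse the skeleton of the original proof of Lemma~\ref{lem:E_bound} from \cref{sec:algorithm} essentially verbatim, isolating the single step that invoked the fixed values of $\ell$ and $\maxcardinality$ and replacing it with the freshly established Lemma~\ref{lem:R_bound}. The only genuine difference to handle is that in \cref{alg:extended} the threshold $\ell$ grows during the stream, so the partition of $S$ into index-buckets has to be set up with the arrival-time value of $\ell$ in mind. Concretely, I would define, for every integer $0 \leq i \leq \ell$, the bucket $S_i = \{u \in S \setminus R : i(u) = i\}$. The key observation is that these buckets partition $S \setminus R$: if $u \notin R$ then $0 \leq i(u) \leq \ell$ at the moment $u$ arrived, and since $\ell$ is non-decreasing this inequality persists for the final value of $\ell$; hence $\bigcup_{i=0}^{\ell} S_i = S \setminus R$, and every such $u$ was actually offered to the bucket $E_{i(u)}$ when it arrived (the range test $0 \leq i(u) \leq \ell$ passed at that time).

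With the buckets correctly defined, the greedy chain of the original argument carries over without change. Each $E_i$ can be viewed as the output of the unweighted greedy algorithm run on the ground set of all offered elements of index $i$, which contains the independent set $S_i$; the $k$-system property therefore gives $|E_i| \geq |S_i|/k$, and combining this with the value bounds $\tau/2^{i+1} \leq m(u) \leq \tau/2^{i}$ for the relevant elements yields
\[
	f(E \mid \varnothing)
	=
	\sum_{i=0}^{\ell} \sum_{u \in E_i} m(u)
	\geq
	\frac{1}{2k} \sum_{i=0}^{\ell} \sum_{u \in S_i} m(u)
	=
	\frac{1}{2k} \left[ \sum_{u \in S} m(u) - \sum_{u \in S \cap R} m(u) \right]
	\enspace.
\]
The submodularity step is likewise unaffected: for each $u \in S$, $m(u)$ is a marginal with respect to a subset of the final set $E$, so $m(u) \geq f(u \mid E)$, and summing gives $\sum_{u \in S} m(u) \geq f(S \mid E) = f(S \cup E \mid \varnothing) - f(E \mid \varnothing)$.

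The decisive new ingredient is that the trailing term is bounded by Lemma~\ref{lem:R_bound}, which gives $\sum_{u \in S \cap R} m(u) \leq \tau/4$ for any independent $S$; this replaces the explicit counting argument of the original proof, which relied on there being at most $\maxcardinality$ out-of-range elements, each of value at most $\tau/2^{\ell+1}$ for the fixed $\ell = \lfloor \log_2(4\maxcardinality)\rfloor$. Substituting both bounds into the display above produces
\[
	f(E \mid \varnothing)
	\geq
	\frac{1}{2k} \left[ f(S \cup E \mid \varnothing) - f(E \mid \varnothing) - \frac{\tau}{4} \right]
	\enspace,
\]
and rearranging gives the stated inequality. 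The step I expect to require the most care is the bookkeeping around the monotonicity of $\ell$: one must confirm that no element of $S \setminus R$ is silently dropped from its bucket because its index exceeded the threshold only transiently, which is exactly why the definition of $R$ is stated in terms of the value of $\ell$ at each element's arrival rather than its final value. Once this point is verified, the remainder is identical to the original argument.
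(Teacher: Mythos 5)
Your proposal is correct and follows essentially the same route as the paper's own proof: the same buckets $S_i = \{u \in S \setminus R \mid i(u) = i\}$, the same greedy/$k$-system chain giving the $\frac{1}{2k}$ factor, the same submodularity step, and the same substitution of Lemma~\ref{lem:R_bound} for the original counting bound on out-of-range elements. Your remark that the monotonicity of $\ell$ is what makes the buckets partition $S \setminus R$ is exactly the observation the paper uses to justify the final equality in its first display.
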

\addtocounter{theorem}{-1}
}
\begin{proof}
	First, note that we have $f(E \mid \varnothing) = \sum_{i=0}^{\ell} \sum_{u \in E_i} m(u)$ because $m(u)$ is the marginal contribution of $u$ with respect to the elements that were added to $\cup_{i = 0}^\ell E_i$ before $u$. Let us also define, for every integer $0 \leq i \leq \ell $, $S_i = \{ u \in S \setminus R \mid i(u) = i \}$. Then,
	\begin{align*}
	f(E \mid \varnothing) = \sum_{i=0}^{\ell} \sum_{u \in E_i} m(u)  & \geq \sum_{i=0}^{\ell} |E_i| \cdot  \dfrac{\tau}{2^{i+1}} 
	\geq \dfrac{1}{k} \cdot \sum_{i=0}^{\ell} |S_i| \cdot \dfrac{\tau}{2^{i+1}} \\
	&\geq \dfrac{1}{2k} \cdot \sum_{i=0}^{\ell} \sum_{u \in S_i} m(u)
	= \dfrac{1}{2k} \cdot \left[\sum_{u \in S} m(u) - \sum_{u \in S \cap R} \mspace{-9mu}  m(u) \right] 
	\enspace,
	\end{align*}
	where the first and third inequalities hold since an element $u$ is added to a set $E_i$ only when $i = i(u)$, the second inequality holds since one can view $E_i$ as the output of running the unweighted greedy algorithm on a ground set which includes the independent set $S_i$ as a subset, and the last equality holds since $0 \leq i(u) \leq \ell$ for every element $u \not \in R$ because $\ell$ can only increase during the execution of \cref{alg:extended}.
	
	By the submodularity of $f$, we can immediately get
	\[
		\sum_{u \in S} m(u)
		\geq
		\sum_{u \in S} f(u \mid E)
		\geq
		f(S \mid E)
		=
		f(S \cup E \mid \varnothing) - f(E \mid \varnothing)
		\enspace.
	\]
	Combining the two above inequalities and the guarantee of Lemma~\ref{lem:R_bound} gives us
	\[
		f(E \mid \varnothing)
		\geq
		\frac{1}{2k} \cdot \left[f(S \cup E \mid \varnothing) - f(E \mid \varnothing) - \frac{\tau}{4} \right] 
		\enspace,
	\]
	and the lemma follows by rearranging this inequality.
\end{proof}

\subsection{Algorithm without Access to \texorpdfstring{$\tau$}{t}} \label{sec:extended_tau}

In this section we explain how to modify our algorithm so that it does not need to access to $\tau$. This modification is based on a technique due to~\cite{badanidiyuru2014streaming}, but it is made slightly more involved since we assume here no pre-knowledge of $\rho$, which is not the case in~\cite{badanidiyuru2014streaming}. Following is the crucial observation that we use in this section.
\begin{observation} \label{obs:not_taken}
Except for the sake of maintaining $G$, Algorithm~\ref{alg:extended} ignores an element $u$ if $\{u\} \not \in \cI$ or $f(\{u\}) \leq \tau / 2^{2\log_2 (k|G_u|) + 4}$, where $|G_u|$ is the size of $G$ immediately after the processing of $u$.
\end{observation}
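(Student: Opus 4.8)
The plan is to unpack precisely what it means for \cref{alg:extended} to \emph{use} an element. Apart from the maintenance of $G$, the only line in which $u$ can influence the output is the conditional update $E_{i(u)} \gets E_{i(u)} + u$, which is executed exactly when both $0 \leq i(u) \leq \ell$ and $E_{i(u)} + u \in \cI$ hold. Hence it suffices to show that each of the two hypotheses of the observation forces at least one of these two tests to fail, and I would treat the two hypotheses separately.

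For the first hypothesis, $\{u\} \not\in \cI$, I would invoke downward-closedness of the independence system: since $\{u\} \subseteq E_{i(u)} + u$, if $E_{i(u)} + u$ were independent then $\{u\}$ would be independent too, a contradiction. Thus the feasibility test $E_{i(u)} + u \in \cI$ fails and $u$ is skipped. (The same argument shows $u$ is rejected by $G$ as well, which is why the ``except for maintaining $G$'' caveat costs nothing here.)

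For the second hypothesis I would instead argue that $i(u) > \ell$, so that the range test fails. The first ingredient is the bound $m(u) \leq f(\{u\})$, which follows from submodularity and non-negativity: $m(u) = f\!\left(u \mid \cup_{i=0}^\ell E_i\right) \leq f(u \mid \varnothing) = f(\{u\}) - f(\varnothing) \leq f(\{u\})$. If $m(u) = 0$ then $i(u) = \infty > \ell$ and we are done; otherwise, combining $m(u) \leq f(\{u\})$ with the hypothesized bound $f(\{u\}) \leq \tau / 2^{2\log_2(k|G_u|)+4}$ gives $\tau / m(u) \geq 2^{2\log_2(k|G_u|)+4}$, so that $i(u) = \lfloor \log_2(\tau/m(u)) \rfloor \geq \lfloor 2\log_2(k|G_u|) + 4 \rfloor$. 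Finally, recalling that immediately after processing $u$ the algorithm has set $\ell = \lfloor 2\log_2(k|G_u|) + 3 \rfloor$, and using $\lfloor x + n \rfloor = \lfloor x \rfloor + n$ for integer $n$, I obtain $i(u) \geq \lfloor 2\log_2(k|G_u|) \rfloor + 4 = \ell + 1 > \ell$, as needed.

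The only delicate point — and the step I would check most carefully — is this final floor manipulation together with the specific constants $+3$ and $+4$. The observation works precisely because these constants differ by exactly $1$, so the common integer part $\lfloor 2\log_2(k|G_u|) \rfloor$ appears in both $\ell$ and the lower bound on $i(u)$, leaving a clean gap of one index. I would make sure that $|G_u|$ is the size of $G$ read at the very same moment (``immediately after processing $u$'') that is used to set $\ell$, so that the arguments of the two expressions are literally identical; once that synchronization is pinned down, the inequality $i(u) \geq \ell + 1$ is immediate and the observation follows.
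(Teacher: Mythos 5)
Your proof is correct and follows essentially the same route as the paper's: submodularity gives $m(u) \leq f(\{u\})$, and the floor-of-logarithm computation shows $i(u) \geq \ell + 1$ so the range test fails (the paper dismisses the $\{u\} \not\in \cI$ case as ``simple'' and you fill it in with downward-closedness; your handling of the floors is in fact slightly more careful than the paper's, which writes $\lfloor 2\log_2(k|G_u|)+4\rfloor$ as if it equaled $2\log_2(k|G_u|)+4$). The only cosmetic remark is that the algorithm sets $i(u) = \infty$ whenever $m(u) \leq 0$, not only when $m(u) = 0$, but this only makes your case split easier.
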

\begin{proof}
The case of $\{u\} \not \in \cI$ is simple, so let us consider only the case $f(\{u\}) \leq \tau / 2^{2\log_2 (k|G_u|) + 4}$. Let $\ell_u$ be the value of $\ell$ immediately after the processing of $u$ by Algorithm~\ref{alg:extended}. The value $i(u)$ calculated by Algorithm~\ref{alg:extended} when processing $u$ obeys
\begin{align*}
	i(u)
	={} &
	\lfloor \log_2(\tau / m(u)) \rfloor
	\geq
	\lfloor \log_2(\tau / f(\{u\})) \rfloor\\
	\geq{} &
	\lfloor \log_2(2^{2\log_2 (k|G_u|) + 4} \rfloor
	=
	2\log_2 (k|G_u|) + 4
	\geq
	\ell_u + 1
	\enspace,
\end{align*}
where the first inequality follows from the submodularity of $f$, and the second inequality follows from the condition of the lemma.
\end{proof}

Using Observation~\ref{obs:not_taken} in mind, we now give the algorithm of this section as \cref{alg:no_tau}. This algorithm runs multiple copies of Algorithm~\ref{alg:extended}, each having a different $\tau$ values. The intuitive objective of the algorithm is to have a copy with $\tau = x$ for every $x$ which is a power of $2$, might belong to the range $[M, 2M]$ given the input so far, and might have accepted some element so far even given Observation~\ref{obs:not_taken}. Since the set $G$ is maintained by \cref{alg:extended} in a way which is independent of $\tau$, \cref{alg:no_tau} maintains $G$ itself, and we assume that the copies of \cref{alg:extended} that it creates use this set $G$ rather than maintaining their own set $G$.

\begin{algorithm2e}
	\DontPrintSemicolon
	\caption{Streaming Algorithm for $\ksys$-Systems (with no pre-access to $\maxcardinality$ and $\tau$)} \label{alg:no_tau}
	\textbf{Input: } the parameter $k$ of the constraint.\\
	\textbf{Output: } a solution $T \in \cI$\\
	Let $G \gets \varnothing$, $M' \gets -\infty$. \\
	\For{every element $u$ arriving}
	{
		\lIf{$G + u \in \cI$}{Add $u$ to $G$.}
		\lIf{$\{u\} \in \cI$}{Update $M' \gets \max\{M', f(\{u\})\}$.}
		Let $L = \{2^i \mid \text{$i$ is integer and } M' \leq 2^i \leq M' \cdot 2^{2\log_2 (k|G|) + 5}\}$.\\
		Delete any existing copy of \cref{alg:extended} whose $\tau$ value does not belong to $L$.\\
		\For{every $x \in L$}{Create a copy of \cref{alg:extended} with $\tau = x$, unless such a copy already exists.}
		Pass $u$ to all the copies of \cref{alg:extended} that currently exist.
	}
	\Return{the set maximizing $f$ among the output sets of all the currently existing copies of \cref{alg:extended}}.
\end{algorithm2e}

We begin the analysis of \cref{alg:no_tau} by analyzing its space complexity.
\begin{observation}
The number of elements stored by \cref{alg:no_tau} is larger than the number of elements stored by \cref{alg:extended} by a factor of $O(\log k + \log \rho)$.
\end{observation}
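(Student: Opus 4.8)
The plan is to bound the number of concurrent copies of \cref{alg:extended} maintained by \cref{alg:no_tau}, and then to check that the shared bookkeeping (the single set $G$) cannot inflate the space by more than a constant factor beyond this. Since every copy of \cref{alg:extended} created by \cref{alg:no_tau} reuses the single set $G$ held by \cref{alg:no_tau} rather than maintaining its own, the elements stored at any moment are exactly the elements of $G$ together with the elements kept in the $E_i$ and $T_j$ sets of each currently existing copy. Hence it suffices to count the number of such copies, which equals $|L|$, and to compare the per-copy storage with the storage of \cref{alg:extended}.

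First I would bound $|L|$. By its definition, $L$ consists of the integer powers of $2$ lying in the interval $[M',\, M' \cdot 2^{2\log_2 (k|G|) + 5}]$. The ratio between the endpoints of this interval is $2^{2\log_2 (k|G|) + 5}$, so the number of integer powers of $2$ it contains is at most $2\log_2(k|G|) + 6$. Because $G$ is maintained as an independent set, we have $|G| \leq \rho$, and therefore
\[
	|L| \leq 2\log_2(k\rho) + 6 = O(\log k + \log \rho) \enspace.
\]

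Next, let $S$ denote the number of elements stored by a single copy of \cref{alg:extended}; by the space bound already established for \cref{alg:extended}, $S = O(\rho(\log \rho + \log k))$, and in particular $S \geq |G|$ since \cref{alg:extended} itself stores $G$. The total number of elements stored by \cref{alg:no_tau} is then at most $|G|$ (for the one shared copy of $G$) plus $|L| \cdot S$ (for the at most $|L|$ copies of \cref{alg:extended}). Combining this with $|G| \leq S$ yields a bound of $(1 + |L|) \cdot S = O(\log k + \log \rho) \cdot S$, which is precisely the claimed multiplicative factor over the space used by \cref{alg:extended}.

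The only mild subtlety I would treat carefully is the sharing of $G$: since the copies of \cref{alg:extended} do not each keep their own $G$ but reuse the one held by \cref{alg:no_tau}, one must avoid double-counting it. This sharing can only reduce the space, so it has no effect on the upper bound, and indeed it is what lets the whole count collapse to $|L|$ copies plus a single $G$. No step in this argument is genuinely hard; the entire proof reduces to the elementary count of integer powers of two in a geometric range together with the inequality $|G| \leq \rho$.
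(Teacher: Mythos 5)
Your proposal is correct and follows essentially the same route as the paper: the whole argument reduces to bounding the number of concurrent copies of the algorithm by $|L|$, and bounding $|L|$ by counting integer powers of two in an interval whose endpoint ratio is $2^{2\log_2(k|G|)+5}$, together with $|G|\leq\rho$. Your additional care about the shared set $G$ and the per-copy storage is a harmless (and slightly more explicit) elaboration of what the paper leaves implicit.
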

\begin{proof}
It suffices to show that \cref{alg:no_tau} maintains at most $O(\log k + \log \rho)$ copies of \cref{alg:extended} at any given time, and to do that it suffices to show that the size of the set $L$ created by \cref{alg:no_tau} in every iteration is upper bounded by $O(\log k + \log \rho)$. Note that the size of this set is at most
\begin{align*}
	\left\lceil \log_2\left(\frac{M' \cdot 2^{2\log_2 (k|G|) + 5}}{M'}\right)\right\rceil
	={} &
	\left\lceil \log_2\left(2^{2\log_2 (k|G|) + 5}\right)\right\rceil\\
	={} &
	\lceil 2\log_2 (k|G|) + 5\rceil
	\leq
	2\log_2 (k\rho) + 6
	=
	O(\log_2 k + \log_2 \rho)
	\enspace.
	\qedhere
\end{align*}
\end{proof}

Next, let us show that the approximation guarantee of \cref{alg:no_tau} is at least as good as the guarantee of \cref{alg:extended}.
\begin{lemma}
Let $S$ be the output set of \cref{alg:no_tau}. Then, $f(S)$ is at least as large as the value of the output set of \cref{alg:extended} when executed with some value $\tau \in [M, 2M]$.
\end{lemma}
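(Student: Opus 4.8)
The plan is to exhibit a specific $\tau^{*} \in [M, 2M]$ and argue that the copy of \cref{alg:extended} that \cref{alg:no_tau} runs with parameter $\tau^{*}$ produces exactly the same output as a standalone (``reference'') execution of \cref{alg:extended} with $\tau = \tau^{*}$ on the same stream. I would take $\tau^{*}$ to be a power of $2$ lying in $[M, 2M]$; such a power of $2$ exists because any interval $[x, 2x]$ with $x > 0$ contains one. Since $G$ is $\tau$-independent and is maintained by \cref{alg:no_tau} itself and shared with all copies, the reference execution and every copy see the same evolution of $G$, the same value of $\ell$, and the same family of indices $i$ for which $E_i$ exists. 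Thus it suffices to show that (i) the copy with $\tau = \tau^{*}$ exists when the stream terminates, and (ii) throughout the stream this copy's sets $E_i$ coincide with those of the reference execution; the lemma then follows because $S$ maximizes $f$ over the outputs of all surviving copies, so $f(S)$ is at least the value of this copy's output, which equals the reference output.

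The key quantitative step is to show that whenever the reference execution adds an element $u$ to some $E_{i(u)}$, the copy with $\tau^{*}$ already exists at the moment \cref{alg:no_tau} processes $u$. Any such added element has $\{u\} \in \cI$, so by \cref{obs:not_taken} (contrapositive) we have $f(\{u\}) > \tau^{*} / 2^{2\log_2(k|G_u|)+4}$, where $|G_u|$ is the size of $G$ right after $u$ is processed. When \cref{alg:no_tau} reaches $u$ it first updates $G$ and then $M'$, so at the point where $L$ is formed we have $|G| = |G_u|$ and $M' \geq f(\{u\})$; combining these gives $\tau^{*} < M' \cdot 2^{2\log_2(k|G|)+4} \le M' \cdot 2^{2\log_2(k|G|)+5}$. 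Together with $\tau^{*} \ge M \ge M'$, this places $\tau^{*}$ in the interval defining $L$, so $\tau^{*} \in L$ and the copy is created (if not already present) before $u$ is passed on. I would also record two monotonicity facts: since $M'$ and $|G|$ only grow, the upper endpoint $M' \cdot 2^{2\log_2(k|G|)+5}$ of $L$ is non-decreasing, while $M' \le M \le \tau^{*}$ always keeps the lower endpoint at most $\tau^{*}$; hence once $\tau^{*}$ enters $L$ it stays there, so the copy, once created, is never deleted. At termination $M' = M$ and $|G| \ge 1$, so $\tau^{*} \in L$ in any case, which gives~(i).

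With the creation timing settled, I would finish (ii) by a coupling induction over the stream. Let $t_0$ be the step at which the copy is created. No element added to an $E_i$ by the reference execution can occur strictly before $t_0$: by the previous paragraph such an element would force $\tau^{*}$ into $L$ and hence create the copy no later than itself, so its step is $\ge t_0$. Consequently the reference execution's sets $E_i$ are all empty just before step $t_0$, which matches the freshly created (empty) copy and establishes the base case. For the inductive step I would use that the copy receives every element the reference execution adds to some $E_i$; because the two runs share $G$ (hence $\ell$ and the current $E_i$), they compute the same $m(u)$ and $i(u)$ and make the same acceptance decision, so each such element updates both runs' $E_i$ identically. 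Any element not added by the reference execution leaves its $E_i$ unchanged, and if the copy also happens to see it, the matching $E_i$ sets force the copy to reject it as well; the elements skipped by the copy because they predate $t_0$ are, as just shown, irrelevant to the $E_i$ sets. Thus the invariant ``the copy's $E_i$ equal the reference execution's $E_i$'' is preserved to the end, the two runs build identical $T_j$ and return the same $T$, and $f(S) \ge f(T)$ as required.

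The step I expect to be most delicate is the bookkeeping in this last coupling argument: one must verify that a copy created in the middle of the stream is initialized to match the reference execution's state at that moment. This works precisely because the only persistent effect of the skipped (pre-$t_0$) elements is on the shared set $G$ and on the $\tau$-independent index set of the $E_i$'s, while all those $E_i$'s are still empty at step $t_0$.
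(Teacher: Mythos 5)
Your proof is correct and follows essentially the same route as the paper's: fix a power of two $\tau^{*}\in[M,2M]$, use \cref{obs:not_taken} to show that the corresponding copy is created no later than the first element the reference run would accept (and, by monotonicity of the endpoints of $L$, is never deleted afterwards), so the skipped prefix is irrelevant and the copy's output coincides with the reference output. The only cosmetic difference is that the paper anchors the creation time at the first element with $f(\{u\})\geq M/2^{2\log_2(k|G_u|)+4}$ rather than at the first accepted element, and your coupling induction just spells out in more detail what the paper summarizes in one sentence.
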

\begin{proof}
Let $u$ be the first element to arrive which obeys both $\{u\} \in \cI$ and $f(\{u\}) \geq M / 2^{2\log_2 (k|G_u|) + 4}$. The set $L$ generated while processing this element necessarily includes a value $\bar{\tau}$ within the range $[M, 2M]$ because while processing $u$ we have
\[
	M' \cdot 2^{2\log_2 (k|G_u|) + 5}
	=
	f(\{u\}) \cdot 2^{2\log_2 (k|G_u|) + 5}
	\geq
	2M
	\enspace,
\]
and
\[
	M'
	=
	f(\{u\})
	\leq
	M
	\enspace.
\]
Moreover, we can observe that $\bar{\tau}$ belongs to any list $L$ generated after this point by \cref{alg:no_tau} because the value of $M'$ can only increase and the inequality $M' \leq M$ remains valid until the end of the algorithm. Thus, a copy of \cref{alg:extended} with $\tau = \bar{\tau}$ exists from the arrival of $u$ until \cref{alg:no_tau} terminates. Let us denote this copy by $C$.

Observation~\ref{obs:not_taken} and the definition of $u$ guarantee that the copy $C$ ignores the elements that arrived before $u$ if they are passed to it (except for the purpose of maintaining $G$, but we assume in this section that this work is done by \cref{alg:no_tau} itself rather than by the copies of \cref{alg:extended}). Thus, the output of $C$ is identical to the output it would have produced if all the elements had been passed to it, including elements that arrived before the creation of $C$. In other words, the output set of $C$ is the output set of \cref{alg:extended} when executed with some value $\tau = \bar{\tau} \in [M, 2M]$ on the entire input. Since $C$ survives until the end of the execution of \cref{alg:no_tau}, this implies that the output set of \cref{alg:no_tau} is at least as good as that.
\end{proof}
\section{Supplementary Experiments} \label{sec:sup_experiments}

In \cref{fig:planarity_linear_new}, we compare the performance of our streaming algorithm with the performance of Streaming Greedy and Sieve Streaming.
For the knapsack constraint, the cost of each edge $e = (u, v)$ is proportional to an integer picked uniformly at random from set the $\{1,2,3,4,5 \}$. 
The costs are normalized so that $\sum_{e \in E} c_e = |V|$, where $c_e$ represents the knapsack cost of edge $e$.
Like in~\cref{sec:planarity}, we observe that our streaming algorithm returns solutions with higher objective values for various knapsack budgets.

\begin{figure*}[ht] 
	\centering  
	\subfloat[Social graph]	{\includegraphics[height=32mm]{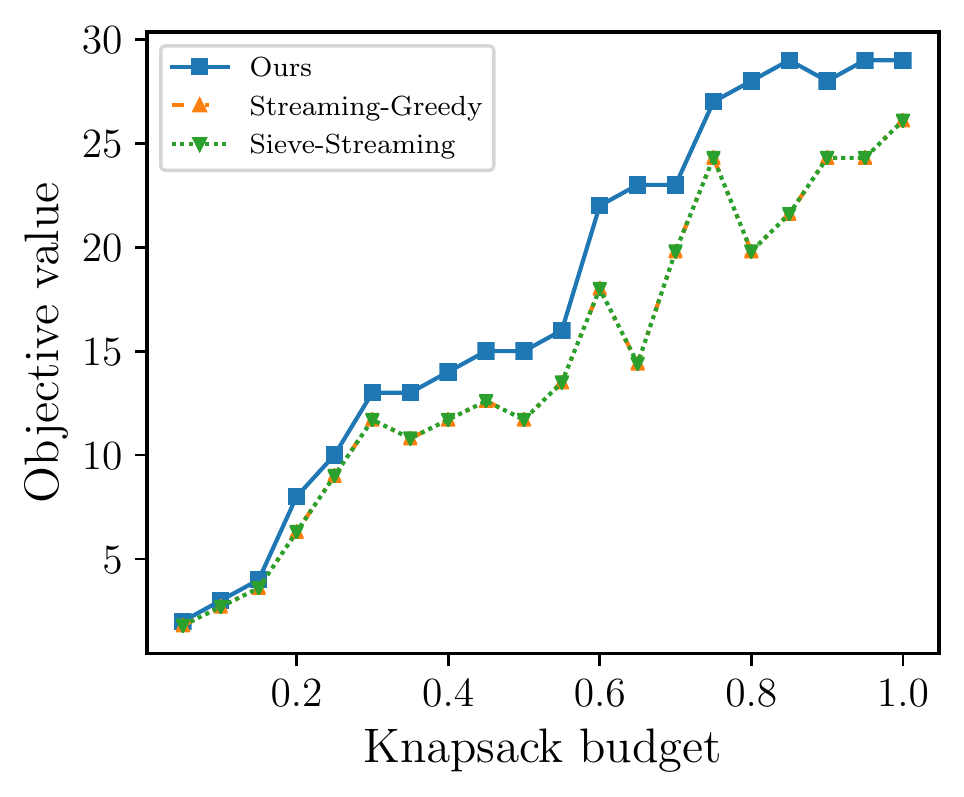}\label{fig:planar-0-new}}
	\subfloat[EU Email] {\includegraphics[height=32mm]{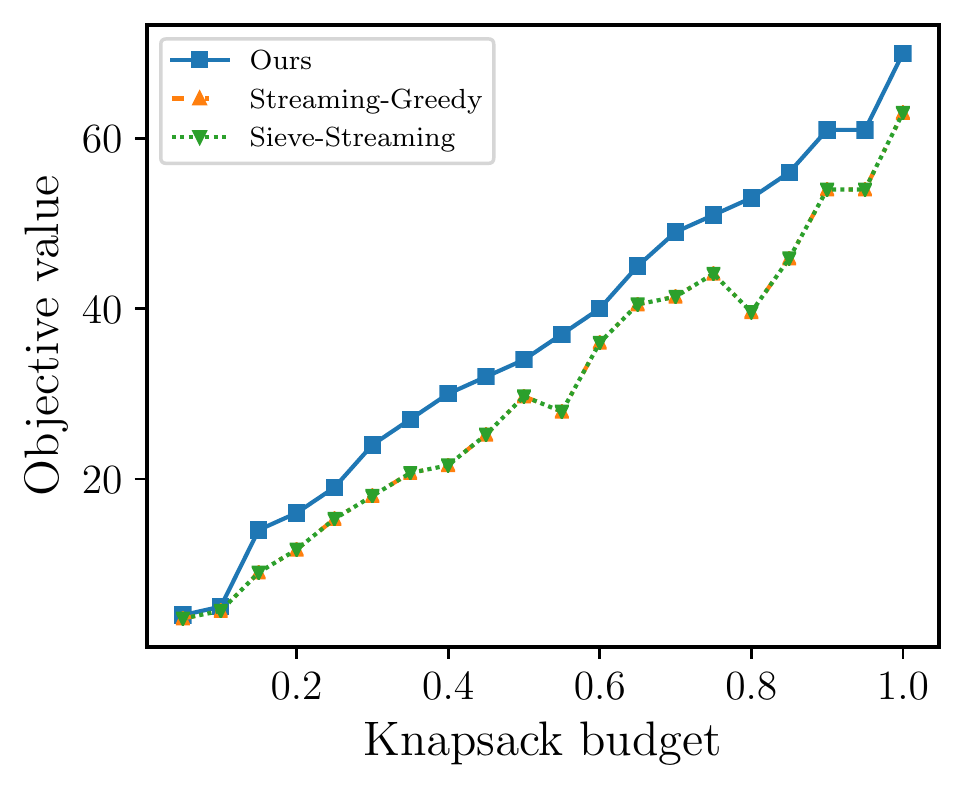}\label{fig:planar-EU-new}} 
	\subfloat[Facebook ego network]	{\includegraphics[height=32mm]{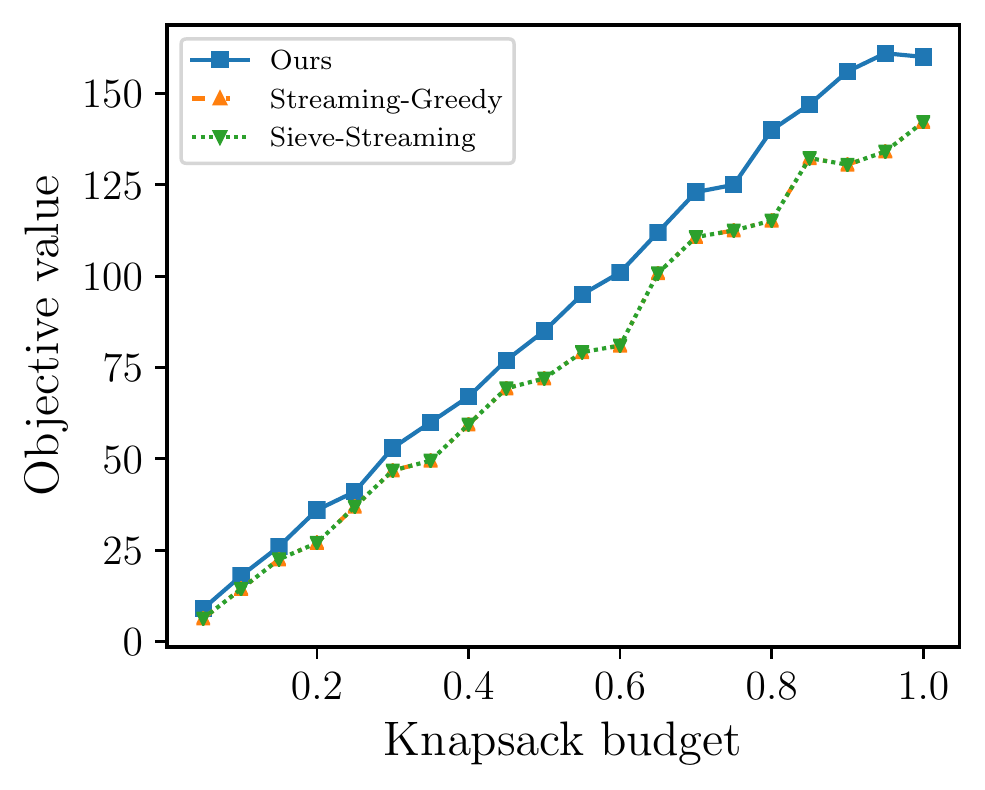}\label{fig:planar-1912-new}}
	\subfloat[Wiki vote network]	{\includegraphics[height=32mm]{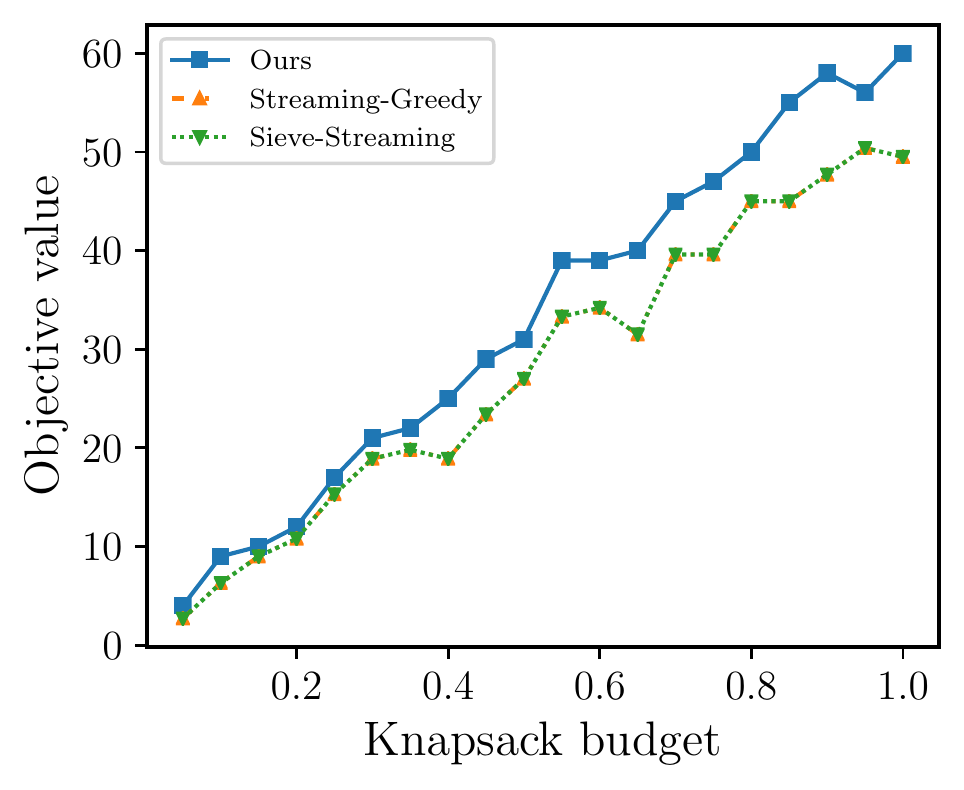}\label{fig:planar-wiki-new}}
	\caption{Planarity with knapsack (linear objective function).
		The weight of each edge is set to one.
		Knapsack cost of each edge $e = (u, v)$ is proportional to an integer picked uniformly at random from set the $\{1,2,3,4,5 \}$. The costs are normalized such that $\sum_{e \in E} c_e = |V|$, where $c_e$ represents the knapsack cost of edge $e$. We note that it is difficult to view the orange line in the above figures since it is mostly hidden behind the green line.}
	\label{fig:planarity_linear_new}
\end{figure*}
\end{document}